\documentclass[11pt]{article}

\usepackage[margin=1in]{geometry}
\usepackage{amsmath,amssymb,amsthm,mathtools,bm}
\usepackage{booktabs}
\usepackage[shortlabels]{enumitem}
\usepackage{natbib}
\usepackage[colorlinks=true,citecolor=blue,linkcolor=blue,urlcolor=blue]{hyperref}
\usepackage{booktabs,longtable,pdflscape,array}

\setlist[itemize]{leftmargin=1.7em}
\setlist[enumerate]{leftmargin=1.7em}

\newtheorem{assumption}{Assumption}[section]
\newtheorem{definition}{Definition}[section]
\newtheorem{lemma}{Lemma}[section]
\newtheorem{theorem}{Theorem}[section]
\newtheorem{proposition}{Proposition}[section]
\newtheorem{corollary}{Corollary}[section]
\newtheorem{remark}{Remark}[section]

\newcommand{\E}{\mathbb E}
\newcommand{\Pp}{\mathbb P}
\newcommand{\R}{\mathbb R}

\newcommand{\ind}[1]{\mathbf 1\{#1\}}

\newcommand{\argmin}{\operatorname*{arg\,min}}
\newcommand{\Range}{\operatorname{Range}}
\newcommand{\Null}{\operatorname{Null}}
\newcommand{\cl}{\operatorname{cl}}
\newcommand{\opnorm}{\operatorname{op}}
\newcommand{\dto}{\rightsquigarrow}
\newcommand{\iid}{\overset{\mathrm{iid}}{\sim}}
\newcommand{\norm}[1]{\left\lVert #1\right\rVert}

\newcommand{\inner}[2]{\left\langle #1,#2\right\rangle}

\newcommand{\cH}{\mathcal H}
\newcommand{\cM}{\mathcal M}

\newcommand{\cQ}{\mathcal Q}
\newcommand{\cT}{\mathcal T}

\newcommand{\cY}{\mathcal Y}

\newcommand{\ellfun}{\ell}
\newcommand{\eps}{\varepsilon}

\newcommand{\LCVaR}{\operatorname{LCVaR}}

\title{Regularity, Phase Transitions, and Uniform Inference for Proximal Counterfactual Quantile Processes}
\author{Pengyun Wang \\ Data Science Institute, The University of Chicago}
\date{Working manuscript. This version: \today}

\begin{document}
\maketitle

\begin{abstract}
This paper develops semiparametric theory for counterfactual distribution, quantile, and lower-tail risk processes under unmeasured confounding using proximal negative-control proxies.  The object is not a pointwise proximal mean with the outcome replaced by $\ind{Y\le y}$, but a continuum of inverse problems indexed by the threshold $y$.  For each treatment arm $a$, the counterfactual distribution function $F_a(y)=\Pp\{Y(a)\le y\}$ is represented through the primal bridge equation $T_a h_{a,y}=g_{a,y}$ and the linear functional $\ellfun(h)=\E\{h(W,X)\}$.  The dual bridge $q_a$ solves the adjoint equation $T_a^*q_a=1$, equivalently $\E\{\ind{A=a}q_a(Z,X)-1\mid W,X\}=0$.  We prove that this dual equation, together with the minimal residual-moment condition needed for the influence function to belong to $L_2(P_0)$, is the exact regularity boundary in a threshold-saturated observed-data proximal bridge model whose tangent closure is $L_2^0(P_0)$: $F_a(y)$ is pathwise differentiable if and only if a regular square-integrable dual bridge exists.  The corresponding canonical gradient is
\[
\varphi_{a,y}(O)=h_{a,y}(W,X)-F_a(y)+\ind{A=a}q_a(Z,X)\{\ind{Y\le y}-h_{a,y}(W,X)\}.
\]
A compact singular-system characterization yields a Picard-type phase transition: root-$n$ regular estimation is possible exactly when $\sum_j\ell_{a,j}^2/s_{a,j}^2<\infty$ and the residual moment required by the canonical gradient is finite.  Outside this region, finite-dimensional efficiency bounds diverge under residual-noise nondegeneracy, and Gaussian inverse benchmarks exhibit slower minimax rates.  We also establish an efficient CDF process, cross-fitted uniform doubly robust expansions, finite-rank weak-proxy rate conditions, conservative density-free simultaneous quantile bands obtained by inverting CDF bands, and lower-tail CVaR inference through a shortfall representation that avoids counterfactual density estimation.  The proposed estimators use closed-form linear algebra, convex Tikhonov regularization, and isotonic projection for shape enforcement.
\end{abstract}

\noindent\textbf{Keywords:} counterfactual distribution; efficient influence function; ill-posed inverse problem; negative controls; proximal causal inference; quantile treatment effect; Riesz representer; semiparametric efficiency.

\section{Introduction}

Distributional causal effects are often more informative than average effects.  Quantile treatment effects describe shifts in different parts of the outcome distribution, and lower-tail risk summaries describe changes in downside outcomes that an average can hide.  In observational studies, however, standard distributional causal inference usually relies on no unmeasured confounding.  Proximal causal inference replaces that assumption with negative-control or proxy-variable restrictions: treatment-inducing proxies $Z$ and outcome-inducing proxies $W$ need not equal the latent confounder, but must be sufficiently informative to solve bridge equations.

The goal of this paper is to determine when the marginal counterfactual distribution process
\[
        \{F_a(y):y\in\cY\},\qquad F_a(y)=\Pp\{Y(a)\le y\},\qquad a\in\{0,1\},
\]
and its inverse quantile process can be estimated regularly, efficiently, and uniformly under proximal identification.  A direct pointwise approach would apply proximal ATE theory separately at each threshold after replacing $Y$ by $\ind{Y\le y}$.  That approach is valid for identification but incomplete as a theory of distributional inference.  It does not identify the regularity boundary of the inverse problem, it does not produce the efficient covariance kernel of the CDF process, it does not give density-free simultaneous quantile bands, and it does not quantify how weak proxies destroy root-$n$ inference.

We recast proximal counterfactual distribution estimation as a primal-dual inverse problem.  For treatment arm $a$, define
\[
        T_a h=\E\{h(W,X)\mid A=a,Z,X\},\qquad
        g_{a,y}(Z,X)=\E\{\ind{Y\le y}\mid A=a,Z,X\}.
\]
The outcome bridge solves the primal equation $T_a h_{a,y}=g_{a,y}$.  The counterfactual CDF is the linear functional
\[
        F_a(y)=\ellfun(h_{a,y}),\qquad \ellfun(h)=\E\{h(W,X)\}.
\]
The treatment bridge is the dual Riesz representer of this linear functional under the observed inverse-problem geometry:
\[
        T_a^*q_a=1,
        \qquad\text{equivalently}\qquad
        \E\{\ind{A=a}q_a(Z,X)-1\mid W,X\}=0.
\]
This dual equation is not merely an alternative identifying formula.  It is exactly the condition under which the target linear functional is continuous along the range of the observed conditional-expectation operator.  The existence or nonexistence of a square-integrable $q_a$, together with the residual moment that makes the displayed influence function square-integrable, is therefore the boundary between regular and nonregular estimation.

The paper makes seven contributions.  First, it gives proximal identification of the counterfactual CDF as a pair of primal and dual inverse problems.  Second, it proves a saturated-model regularity dichotomy: in a threshold-saturated observed-data proximal bridge model, $F_a(y)$ is pathwise differentiable if and only if a square-integrable dual bridge exists and the induced residual score is square-integrable; the latter is automatic under bounded bridge envelopes used for the process theory.  In smaller semiparametric submodels, the same displayed influence function remains valid but the canonical gradient is its tangent-space projection, so nonregularity claims are model-specific.  Third, it gives a compact-operator spectral characterization of that dichotomy and shows, in singular-coordinate benchmarks, that weak proxy relevance creates an estimability phase transition.  Fourth, it derives the efficient covariance kernel and uniform weak convergence of the CDF process.  Fifth, it constructs conservative simultaneous quantile bands by inverting simultaneous CDF bands, avoiding counterfactual density estimation.  Sixth, it treats lower-tail CVaR through a shortfall bridge, where the derivative of the shortfall representation cancels the quantile derivative.  Seventh, it provides closed-form and convex finite-rank estimators with explicit weak-proxy rate conditions.

\section{Literature review and gap}\label{sec:literature}

The paper is closest to three bodies of work: proximal causal inference with negative-control proxies, efficient inference for distributional and quantile treatment effects, and ill-posed inverse problems generated by conditional moment restrictions.  The distinction from each body of work matters because the pointwise identity obtained by replacing an outcome with $\ind{Y\le y}$ is not, by itself, a large theoretical gap.

\citet{miao2018identifying} established nonparametric identification of causal effects using proxy variables for an unmeasured confounder.  The modern proximal formulation and its bridge equations are developed in \citet{tchetgen2024introduction} and \citet{cui2024semiparametric}.  The latter derives semiparametric efficiency theory and doubly robust locally efficient estimators for ATE and ATT.  These results imply a pointwise influence function for $F_a(y)$ when the bounded outcome $\ind{Y\le y}$ is substituted for $Y$.  The present paper starts precisely where that reduction stops: it characterizes when the CDF functional is regular through the adjoint range of the proximal operator, derives the efficient covariance kernel of the entire CDF process, and quantifies the weak-proxy phase transition through singular values.

There is an active literature on nonparametric and minimax estimation of proximal bridge functionals, including kernel and adversarial moment approaches \citep{singh2020kernel,kallus2021causal,ghassami2022minimax}.  These works emphasize flexible bridge learning and double robustness, but they typically impose rate conditions that make root-$n$ inference possible rather than identifying the boundary at which such inference fails.  Our contribution is complementary: we treat the proximal CDF as a linear functional of an inverse-problem solution and show that adjoint-range representability, plus the minimal residual moment needed for an $L_2$ influence function, is equivalent to regular estimability.

The paper is also related to proximal inference for survival curves.  \citet{ying2022survival} study marginal counterfactual survival curves under right censoring, propose proximal inverse-probability-weighted and proximal doubly robust estimators, and establish uniform consistency and asymptotic normality in a censoring-aware survival model.  A survival curve is a distribution-type object, so this is a close neighbor.  The present manuscript differs in three ways: it treats general real-valued outcomes rather than censored event times, it focuses on quantile and lower-tail risk maps of the counterfactual CDF, and its main theorem package is the regularity/phase-transition theory induced by the primal-dual bridge operator.

For quantile treatment effects under observed confounding, \citet{firpo2007efficient} derives efficient semiparametric QTE estimators, \citet{frolich2013unconditional} studies unconditional QTEs under endogeneity with instruments, \citet{belloni2017program} gives high-dimensional orthogonal methods for treatment-effect functionals including local quantile effects, and \citet{diaz2017efficient} develops a TMLE for quantiles in missing-data models.  \citet{kallus2024localized} address the difficulty that efficient QTE equations contain nuisance functions evaluated at the target quantile; their localized debiased machine learning method avoids estimating an entire conditional CDF in observed-confounding and IV settings.  In contrast, the nuisance functions here are proximal bridge solutions of ill-posed conditional moment equations, and the main inferential object is a simultaneous CDF/quantile process under unmeasured confounding.

Counterfactual distribution inference without proximal proxies has a long history in econometrics.  \citet{chernozhukov2013inference} develop functional central limit and bootstrap theory for counterfactual distribution and quantile processes based on regression methods.  That literature assumes that the relevant conditional outcome law is identified from observed covariates or a specified counterfactual policy.  The present paper replaces ordinary regression identification with a proximal bridge inverse problem and therefore obtains a different efficiency bound and a different failure mode: weak proxies can make the counterfactual CDF nonregular.

Recent work has moved closer to distributional causal inference with unmeasured confounding.  \citet{chen2025quantilepolicy} study quantile-optimal policy learning with unobserved confounding and negative controls, while \citet{sun2025changes} develop efficient changes-in-changes estimators for distributional causal estimands in panel settings.  These papers reduce any defensible priority claim for ``first proximal quantile identification.''  The present paper therefore does not make that claim.  Its gap is narrower and sharper: in the standard point-treatment proximal model with treatment and outcome proxies, it gives the adjoint-range regularity boundary, the spectral root-$n$ phase transition, the efficient CDF covariance kernel, and density-free simultaneous quantile inference.

Finally, the spectral and lower-bound parts build on nonparametric instrumental-variable and inverse-problem theory.  Conditional moment restrictions often lead to Fredholm equations of the first kind, whose solution may be nonunique or discontinuous; see \citet{newey2003instrumental}, \citet{hall2005nonparametric}, \citet{carrasco2007linear}, \citet{horowitz2011applied}, and \citet{kress1989linear}.  Our contribution is to import the adjoint-range/Riesz continuity condition into proximal distributional causal inference and then carry it through semiparametric efficiency, simultaneous quantile bands, and shortfall-risk inference.

\section{Notation and model}\label{sec:notation}

Let $O=(Y,A,Z,W,X)$ be one observation and let $O_1,\ldots,O_n\iid P_0$.  The treatment is binary, $A\in\{0,1\}$.  The outcome $Y$ is real-valued.  The variables $Z$ and $W$ are treatment-inducing and outcome-inducing proxies, respectively, and $X$ denotes observed covariates not assigned to either proxy role.  Expectations and probabilities without subscripts are under $P_0$.  For a measurable function $f$, write $Pf=\E\{f(O)\}$ and $P_nf=n^{-1}\sum_{i=1}^n f(O_i)$.  Let
\[
        B_y=\ind{Y\le y},\qquad y\in\cY,
\]
where $\cY=[y_L,y_U]$ is a compact interval contained in the interior of the counterfactual outcome support considered below.  If $F_a$ is differentiable, write $f_a=F_a'$.

For a fixed treatment arm $a$, define Hilbert spaces
\[
        \cH_W=L_2(P_{W,X}),\qquad
        \cH_{a,Z}=L_2(\ind{A=a}dP_{Z,X,A}),
\]
with inner products
\[
        \inner{u}{v}_W=\E\{u(W,X)v(W,X)\},
        \qquad
        \inner{r}{s}_{a,Z}=\E\{\ind{A=a}r(Z,X)s(Z,X)\}.
\]
The norm on $\cH_{a,Z}$ is a weighted norm: $q_a(Z,X)$ is required to be square-integrable under the law weighted by $\ind{A=a}$, not necessarily under the unweighted marginal law of $(Z,X)$.  The corresponding norms are denoted by $\norm{\cdot}_W$ and $\norm{\cdot}_{a,Z}$.  All Hilbert spaces in local pathwise calculations are fixed at $P_0$; when laws vary along submodels, bridges and operators are regarded as elements or maps between these fixed ambient spaces.  For a bounded linear operator $T$, the graph norm used below is $\norm{v}_{T}=\norm{v}_{W}+\norm{Tv}_{a,Z}$.

The treatment-arm conditional expectation operator is
\[
        T_a:\cH_W\to\cH_{a,Z},
        \qquad
        T_a h=\E\{h(W,X)\mid A=a,Z,X\}.
\]
Its Hilbert adjoint $T_a^*:\cH_{a,Z}\to\cH_W$ is
\[
        T_a^*q=\E\{\ind{A=a}q(Z,X)\mid W,X\},
\]
because
\[
        \inner{T_a h}{q}_{a,Z}=\E\{\ind{A=a}h(W,X)q(Z,X)\}
        =\inner{h}{T_a^*q}_W.
\]
The target linear functional is
\[
        \ellfun(h)=\E\{h(W,X)\}=\inner{h}{1}_W.
\]
For $a\in\{0,1\}$ and $y\in\cY$, write
\[
        g_{a,y}(Z,X)=\E\{B_y\mid A=a,Z,X\}.
\]
The associated primal bridge equation is $T_ah=g_{a,y}$.  The causal assumptions, bridge-existence conditions, and dual-bridge definition are stated in Section \ref{sec:theory} so that identification and regularity are not conflated.

For a counterfactual distribution $F_a$, define its left-continuous quantile function
\[
        Q_a(\tau)=\inf\{y\in\R:F_a(y)\ge \tau\},\qquad \tau\in(0,1).
\]
The quantile treatment effect is $\Delta_Q(\tau)=Q_1(\tau)-Q_0(\tau)$.  The lower-tail conditional value at risk is
\[
        C_a(\tau)=\LCVaR_a(\tau)=\frac{1}{\tau}\int_0^\tau Q_a(u)\,du,
\]
whenever the integral is finite.  If $F_a$ is continuous and strictly increasing at $Q_a(\tau)$, then
\[
        C_a(\tau)=Q_a(\tau)-\frac{1}{\tau}\mathcal S_a\{Q_a(\tau)\},
        \qquad
        \mathcal S_a(t)=\E\{(t-Y(a))_+\}.
\]
The lower-tail CVaR treatment contrast is $\Delta_C(\tau)=C_1(\tau)-C_0(\tau)$.

\section{Theory}\label{sec:theory}

This section contains the theoretical core.  Subsection \ref{sec:theory-id-reg} separates causal identification, observed-data regularity, and the spectral phase transition.  Subsection \ref{sec:theory-process-est} develops efficient CDF process theory and finite-rank estimation.  Subsection \ref{sec:theory-bands-risk} converts CDF inference into density-free quantile and shortfall-risk inference.

\subsection{Primal-dual identification, regularity, and phase transition}\label{sec:theory-id-reg}

\subsubsection{Identification as a primal-dual inverse problem}\label{sec:identification}

The first assumptions make precise which statements are causal and which are purely observed-data inverse-problem statements.  Assumption \ref{ass:causal} is the usual proximal negative-control structure.  Assumption \ref{ass:observed-nondeg} supplies observed-law nondegeneracy that is not implied by latent positivity alone.  Assumption \ref{ass:bridge-existence} is the primal bridge condition.

\begin{assumption}[Proximal causal conditions]\label{ass:causal}
There exists a latent variable $U$ such that, for each $a\in\{0,1\}$,
\begin{enumerate}[(i)]
\item consistency holds: $Y=Y(A)$ almost surely;
\item latent exchangeability holds: $Y(a)\perp A\mid U,X$;
\item the treatment-inducing proxy condition holds: $Y\perp Z\mid U,A,X$;
\item the outcome-inducing proxy condition holds: $W\perp (A,Z)\mid U,X$;
\item latent positivity holds: $0<P(A=a\mid U,X)<1$ almost surely.
\end{enumerate}
\end{assumption}

\begin{assumption}[Observed nondegeneracy]\label{ass:observed-nondeg}
For each $a\in\{0,1\}$, the operator $T_a:\cH_W\to\cH_{a,Z}$ is bounded.  Moreover, the observed treatment probabilities are nondegenerate on the supports used by the bridge equations: $P(A=a\mid Z,X)>0$ and $P(A=a\mid W,X)>0$ almost surely whenever the displayed conditional laws are evaluated.  All bridge variables used below are square-integrable in their stated Hilbert spaces.
\end{assumption}

\begin{assumption}[Outcome bridge existence and target uniqueness]\label{ass:bridge-existence}
For each $a\in\{0,1\}$ and $y\in\cY$, there exists $h_{a,y}\in\cH_W$ such that
\[
        \E\{B_y-h_{a,y}(W,X)\mid A=a,Z,X\}=0.
\]
The completeness condition needed to transport this observed bridge to the latent confounding scale holds: if a square-integrable $r(U,X)$ satisfies $\E\{r(U,X)\mid A=a,Z,X\}=0$ almost surely, then $r(U,X)=0$ almost surely.  The target is unique in the sense that $\ellfun(h)$ is constant over all square-integrable solutions of $T_ah=g_{a,y}$.
\end{assumption}

The preceding assumption is deliberately one-sided: it is enough for the primal proximal $g$-formula.  The dual bridge is introduced separately because its existence is also the regularity condition to be proved below, not merely an identification device.

\begin{definition}[Dual bridge and canonical bridge pair]
\label{def:dual-bridge}
For treatment arm $a$, a dual or treatment bridge is a function $q_a\in\cH_{a,Z}$ satisfying
\[
        T_a^*q_a=1,
        \qquad\text{equivalently}\qquad
        \E\{\ind{A=a}q_a(Z,X)-1\mid W,X\}=0.
\]
For a fixed threshold $y$, such a bridge is called $y$-regular if, for the canonical outcome bridge $h_{a,y}$,
\[
        \E\left[\ind{A=a}q_a^2(Z,X)\{B_y-h_{a,y}(W,X)\}^2\right]<\infty.
\]
This condition is automatic if $q_a\in\cH_{a,Z}$ and $\{B_y-h_{a,y}(W,X)\}$ is essentially bounded; it is not automatic from two separate $L_2$ assumptions.  When multiple solutions exist, the canonical outcome bridge is the minimum-norm solution of $T_ah=g_{a,y}$ in $\Null(T_a)^\perp$, and the canonical dual bridge is the minimum-norm solution of $T_a^*q=1$ in $\Null(T_a^*)^\perp$.  Unless stated otherwise, all influence functions below use these canonical representatives.
\end{definition}

\begin{theorem}[Proximal CDF identification]\label{thm:identification}
Under Assumptions \ref{ass:causal}--\ref{ass:bridge-existence}, for each $a\in\{0,1\}$ and $y\in\cY$,
\[
        F_a(y)=\Pp\{Y(a)\le y\}=\E\{h_{a,y}(W,X)\}.
\]
If, in addition, a dual bridge $q_a\in\cH_{a,Z}$ satisfying Definition \ref{def:dual-bridge} exists, then
\[
        F_a(y)=\E\{\ind{A=a}q_a(Z,X)B_y\}.
\]
The first representation is primal and the second is dual.  The theorem concerns CDF identification only; atoms of $Y(a)$ are allowed here, while quantile uniqueness and differentiability are imposed separately when quantiles are studied.
\end{theorem}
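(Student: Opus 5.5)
The plan is to prove the primal representation by moving the observed bridge equation to the latent confounder scale through completeness. The dual representation then follows by one change of order of conditioning that uses only the two bridge equations.

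\emph{Step 1 (latent bridge).} Fix $a$ and $y$, and let $h_{a,y}$ be any square-integrable solution from Assumption~\ref{ass:bridge-existence}. Set
\[
r(U,X)=\E\{B_y\mid A=a,U,X\}-\E\{h_{a,y}(W,X)\mid U,X\}.
\]
First I would show that $\E\{B_y\mid A=a,Z,U,X\}=\E\{B_y\mid A=a,U,X\}$, which is Assumption~\ref{ass:causal}(iii). Next, $\E\{h_{a,y}(W,X)\mid A=a,Z,U,X\}=\E\{h_{a,y}(W,X)\mid U,X\}$, which is Assumption~\ref{ass:causal}(iv). Iterating expectations over $U$ given $(A=a,Z,X)$ in the observed bridge equation then gives $\E\{r(U,X)\mid A=a,Z,X\}=0$. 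The function $r$ is square-integrable, since $B_y$ is bounded and $h_{a,y}\in\cH_W$ by Jensen. Completeness therefore yields $r=0$ almost surely, that is,
\[
\E\{h_{a,y}(W,X)\mid U,X\}=\E\{B_y\mid A=a,U,X\}.
\]

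\emph{Step 2 (causal step).} By consistency and latent exchangeability,
\[
\E\{B_y\mid A=a,U,X\}=\Pp\{Y(a)\le y\mid A=a,U,X\}=\Pp\{Y(a)\le y\mid U,X\}.
\]
Latent positivity guarantees that this conditioning is well defined on the support of $(U,X)$. Taking expectations over $(U,X)$ gives $\E\{h_{a,y}(W,X)\}=F_a(y)$. The target-uniqueness clause makes this value the same for every solution, including the canonical one.

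\emph{Step 3 (dual).} Let $q_a$ solve $T_a^*q_a=1$. Conditioning on $(A,Z,X)$ and using the primal equation on $\{A=a\}$ gives
\[
\E\{\ind{A=a}q_a B_y\}=\E\{\ind{A=a}q_a\,g_{a,y}\}=\E\{\ind{A=a}q_a\,h_{a,y}(W,X)\},
\]
which is $\inner{T_ah_{a,y}}{q_a}_{a,Z}$. The adjoint identity turns this into $\inner{h_{a,y}}{T_a^*q_a}_W=\E\{h_{a,y}\}=F_a(y)$.

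For integrability, the first equality needs $\ind{A=a}q_aB_y\in L_1$, which holds because $q_a\in\cH_{a,Z}$ and $B_y$ is bounded. The inner products are finite by Cauchy--Schwarz, since $T_ah_{a,y}\in\cH_{a,Z}$ by the boundedness in Assumption~\ref{ass:observed-nondeg}. Nothing in the argument uses continuity of $F_a$, so atoms are allowed.

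\emph{Main obstacle.} I expect Step~1 to be the delicate point: it must carefully justify the two conditional-independence reductions and the square-integrability of $r$ needed to apply completeness. The conditioning on $\{A=a\}$ under the latent law also needs care, and Assumption~\ref{ass:observed-nondeg} handles that.
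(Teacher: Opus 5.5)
Your proposal is correct and is essentially the paper's proof. It uses the same latent residual $r(U,X)$, the same two conditional-independence reductions followed by completeness, and consistency plus latent exchangeability to integrate over $(U,X)$; for the dual formula it uses the same adjoint identity $\langle T_a h_{a,y},q_a\rangle_{a,Z}=\langle h_{a,y},T_a^*q_a\rangle_W$, which you simply run in the opposite direction. Your extra integrability remarks fill in details the paper leaves implicit, with one misattribution: passing between the law of $(U,X)$ given $A=a$ and its marginal law relies on latent positivity, Assumption \ref{ass:causal}(v), not on Assumption \ref{ass:observed-nondeg}.
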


\begin{remark}[Why pointwise identification is not the main contribution]
For a fixed $y$, Theorem \ref{thm:identification} is the proximal identifying formula applied to the bounded outcome $B_y$.  The main contribution of this paper starts after identification: the regularity of the functional $h\mapsto\ellfun(h)$ depends on the adjoint range of $T_a$, and the quantile process requires uniform control of the entire family $\{h_{a,y}:y\in\cY\}$.
\end{remark}

\subsubsection{Regularity dichotomy and efficient influence functions}\label{sec:regularity}

In semiparametric theory, a regular root-$n$ estimator exists only for pathwise differentiable functionals; see \citet{newey1990semiparametric}, \citet{bickel1993efficient}, and \citet{vandervaart1998asymptotic}.  Here pathwise differentiability is controlled by the inverse-problem geometry of $T_a$.  The next definition gives the observed-data version of the proximal bridge model.  It suppresses the latent variable $U$ because tangent-space calculations are performed directly on observed laws.

\begin{definition}[Observed-data proximal bridge model]
\label{def:observed-model}
Fix $(a,y)$.  Let $\mu$ be a dominating measure and let $P_0$ be the true observed law.  A local observed-data bridge model $\cM_{a,y}$ is a collection of laws $P$ dominated by $\mu$ and contained in a Hellinger neighborhood of $P_0$ such that the following objects are well-defined as elements of the fixed $P_0$ Hilbert spaces $\cH_W=L_2(P_{0,W,X})$ and $\cH_{a,Z}=L_2(\ind{A=a}dP_{0,Z,X,A})$:
\[
        T_{a,P}h=\E_P\{h(W,X)\mid A=a,Z,X\},
        \qquad
        g_{a,y,P}=\E_P(B_y\mid A=a,Z,X).
\]
The bridge equation $T_{a,P}h=g_{a,y,P}$ must have a solution in the fixed ambient space $\cH_W$.  The target is
\[
        \Psi_{a,y}(P)=\E_P\{h_{a,y,P}(W,X)\},
\]
where $h_{a,y,P}$ is the minimum-$P_0$-norm representative among the solutions, i.e. the representative orthogonal to $\Null(T_{a,P})$ under the $P_0$ inner product.  All pathwise derivatives below are taken at $P_0$.  We write $T_a=T_{a,P_0}$, $g_{a,y}=g_{a,y,P_0}$, and $h_{a,y}=h_{a,y,P_0}$.
\end{definition}

\begin{assumption}[Threshold-saturated local observed-data bridge model]
\label{ass:local-model}
Fix $(a,y)$.  The following conditions hold at $P_0$.
\begin{enumerate}[(i)]
\item The target is identified on the observed bridge solution set: $v\in\Null(T_a)$ implies $\ellfun(v)=0$.
\item For every regular dominated submodel $\{P_t:|t|<\delta\}\subset\cM_{a,y}$ with score $s\in L_2^0(P_0)$, the derivative $\dot h_s=\partial_t h_{a,y,P_t}|_{t=0}$ exists in $\cH_W$ and the differentiated bridge identity holds:
\[
        \left.\frac{d}{dt}P_t\big[\ind{A=a}r(Z,X)\{B_y-h_{a,y,P_t}(W,X)\}\big]\right|_{t=0}=0
\]
for every bounded $r\in\cH_{a,Z}$, with extension by $L_2$ approximation to any $y$-regular dual bridge.
\item Let $\mathcal T_{a,y}$ be the set of scores of regular dominated submodels in $\cM_{a,y}$ satisfying item (ii).  The model is threshold-saturated in the sense that $\cl\{\mathcal T_{a,y}\}=L_2^0(P_0)$.
\item The conditional threshold probability
\[
        m_{a,y}(Z,W,X)=P_0(B_y=1\mid A=a,Z,W,X)
\]
satisfies $\eps\le m_{a,y}(Z,W,X)\le 1-\eps$ almost surely on $\{A=a\}$ for some $\eps>0$.
\item Let
\[
        \mathcal D_a=\{v\in\cH_W: v \text{ and } T_av \text{ are essentially bounded}\}.
\]
The class $\mathcal D_a\cap\Null(T_a)^\perp$ is dense in $\Null(T_a)^\perp$ under the graph norm $\norm{v}_{T_a}=\norm{v}_W+\norm{T_av}_{a,Z}$.
\item The model contains the bridge-compatible threshold tilts needed for the converse regularity argument.  For every $v\in\mathcal D_a\cap\Null(T_a)^\perp$, there exists $\delta_v>0$ such that the conditional submodel defined, for $|t|<\delta_v$, by leaving the law of $V=(A,Z,W,X)$ fixed and setting
\[
        dP_t(\widetilde y\mid V)=\{1+t s_v(\widetilde y,V)\}\,dP_0(\widetilde y\mid V),
\]
where $\widetilde y$ denotes a realization of the outcome, not the fixed threshold $y$, and
\[
        s_v(O)=\ind{A=a}\frac{B_y-m_{a,y}(Z,W,X)}{m_{a,y}(Z,W,X)\{1-m_{a,y}(Z,W,X)\}}T_av(Z,X),
\]
belongs to $\cM_{a,y}$, is regular with score $s_v$, and its canonical bridge path satisfies
\[
        \norm{h_{a,y,P_t}-h_{a,y}-tv}_W=o(t).
\]
This is the threshold-saturation condition that links the observed-data model to the inverse-problem directions.  It is stronger than ordinary causal identification and is used only for the ``only if'' part of Theorem \ref{thm:regularity}.
\item Residual-square integrability holds for any dual bridge used for a regularity claim: if $q\in\cH_{a,Z}$ satisfies $T_a^*q=1$, then
\[
        \E\left[\ind{A=a}q^2(Z,X)\{B_y-h_{a,y}(W,X)\}^2\right]<\infty.
\]
This condition is a moment condition, not an identification condition.  It is automatic under the bounded bridge envelopes imposed later for the CDF process.
\end{enumerate}
\end{assumption}

The saturation condition is used only for the converse part of the regularity theorem.  The following lemma replaces the earlier informal assumption that a Bernoulli perturbation of $g_{a,y}$ exists.  It constructs such perturbations as genuine dominated submodels of the conditional law of $Y$ given the full observed covariates.

\begin{lemma}[Threshold-tilting submodel]\label{lem:threshold-tilt}
Under Assumption \ref{ass:local-model}, for every $v\in\mathcal D_a\cap\Null(T_a)^\perp$ the conditional tilt specified in Assumption \ref{ass:local-model}(vi) is a regular dominated submodel $\{P_t:|t|<\delta_v\}\subset\cM_{a,y}$ with score $s_v$.  It leaves the law of $(A,Z,W,X)$ fixed and satisfies
\[
        \left.\frac{d}{dt}\E_t(B_y\mid A=a,Z,X)\right|_{t=0}=T_av.
\]
The canonical bridge path has derivative $v$, and the derivative of the target along this submodel is $\ellfun(v)$.
\end{lemma}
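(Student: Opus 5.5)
The plan is to verify the submodel directly from the explicit conditional tilt. Write $m=m_{a,y}(Z,W,X)$ and $k(V)=\ind{A=a}T_av(Z,X)/\{m(1-m)\}$, so that $s_v=k(V)(B_y-m)$.

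\textbf{Step 1: a valid bounded perturbation.} Because $v\in\mathcal D_a$, $T_av$ is essentially bounded. By Assumption \ref{ass:local-model}(iv), $m(1-m)\ge \eps(1-\eps)$ on $\{A=a\}$. Hence $|s_v|\le C_v:=\norm{T_av}_\infty/\{\eps(1-\eps)\}$ almost surely. For $|t|<\min\{\delta_v,1/C_v\}$ the factor $1+ts_v$ is therefore strictly positive. It also integrates to one conditionally on $V$, since $\E_0(B_y-m\mid V)=0$ on $\{A=a\}$ and $s_v=0$ off it. So $P_t$ is a law dominated by $P_0$, hence by $\mu$, and it leaves the $V$-marginal unchanged.

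\textbf{Step 2: regularity and score.} The density ratio is $p_t/p_0=1+ts_v$ with $s_v$ bounded. Then
\[
\bigl\{(1+ts_v)^{1/2}-1-\tfrac12 ts_v\bigr\}^2=O(t^4s_v^4)
\]
uniformly, so the path is differentiable in quadratic mean with score $s_v\in L_2^0(P_0)$. Membership in $\cM_{a,y}$, with the stated score, is then exactly what Assumption \ref{ass:local-model}(vi) grants.

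\textbf{Step 3: derivative of the conditional threshold probability.} Since the law of $V$ is fixed and only $Y\mid V$ moves, we have $\E_t(B_y\mid V)=m+t\,\E_0(B_ys_v\mid V)$. On $\{A=a\}$,
\[
\E_0\{B_y(B_y-m)\mid V\}=m(1-m),
\]
so $\E_t(B_y\mid V)=m+t\,T_av(Z,X)$. Integrating over $W$ given $(A=a,Z,X)$ uses the unchanged law of $V$. This gives $\E_t(B_y\mid A=a,Z,X)=g_{a,y}+tT_av$, which is exactly linear in $t$, so its derivative is $T_av$.

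\textbf{Step 4: bridge path and target derivative.} The bridge derivative $v$ comes from the $o(t)$ expansion $\norm{h_{a,y,P_t}-h_{a,y}-tv}_W=o(t)$ in Assumption \ref{ass:local-model}(vi). As a consistency check, note that $T_{a,P_t}=T_a$ because the $V$-law is fixed. So $h_{a,y}+tv$ solves the perturbed bridge equation $T_ah=g_{a,y}+tT_av$, and it lies in $\Null(T_a)^\perp$ because $v$ does. It is therefore exactly the canonical solution, and the path is even exactly linear.

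The target is $\Psi(P_t)=\E_{P_t}\{h_{a,y,P_t}(W,X)\}=\E_0\{h_{a,y,P_t}(W,X)\}$, again because the $(W,X)$-law is unchanged. Its derivative is $\inner{v}{1}_W=\ellfun(v)$; the remainder is controlled by Cauchy–Schwarz, since $|\E_0 r|\le\norm{r}_W=o(t)$.

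The only delicate point is Step 1: keeping $1+ts_v$ positive. This is handled by the uniform bound from the boundedness of $T_av$ together with the overlap condition (iv).
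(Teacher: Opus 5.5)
Your proof is correct and follows the same route as the paper. Both arguments bound $s_v$ using the overlap condition (iv) and the boundedness of $T_av$, check conditional mean zero given $V$, compute $\E_0\{B_y(B_y-m)\mid V\}=m(1-m)$ and iterate expectations to obtain $T_av$, and then read off the bridge and target derivatives from Assumption~\ref{ass:local-model}(vi), using that the $(W,X)$-law is fixed.

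Your two additional checks are also valid. You verify quadratic-mean differentiability directly for the bounded tilt, and you note that $T_{a,P_t}=T_a$ forces $h_{a,y,P_t}=h_{a,y}+tv$ exactly. Together these show that the regularity and bridge-path clauses of (vi) follow from the structure of the tilt itself. The only part you still take from the assumption is membership in $\cM_{a,y}$.
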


The next functional-analytic lemma is the algebraic heart of the regularity result.  It makes explicit the null-space condition that is otherwise easy to skip.

\begin{lemma}[Riesz/adjoint-range equivalence]\label{lem:riesz}
Let $T:H\to G$ be a bounded linear operator between Hilbert spaces, and let $\ell(h)=\inner{h}{u}_H$ be a continuous linear functional on $H$.  The following are equivalent:
\begin{enumerate}[(i)]
\item $\ell$ is continuous with respect to the inverse-problem seminorm $h\mapsto\norm{Th}_G$, i.e. $|\ell(h)|\le C\norm{Th}_G$ for all $h\in H$;
\item $\ell$ vanishes on $\Null(T)$ and there exists $q\in G$ such that $T^*q=u$.
\end{enumerate}
The minimum-norm such $q$ belongs to $\Null(T^*)^\perp$.
\end{lemma}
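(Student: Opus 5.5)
The plan is to prove the two implications directly, using the Riesz representation theorem on the closure of $\Range(T)$ in $G$. Closed range is \emph{not} assumed, and this is why the argument works on $\cl\Range(T)$ rather than on $\Range(T)$ itself. The minimum-norm statement will then follow from the orthogonal decomposition $G=\Null(T^*)\oplus\cl\Range(T)$.

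For (ii)$\Rightarrow$(i), take $q\in G$ with $T^*q=u$. For every $h\in H$,
\[
\ell(h)=\inner{h}{u}_H=\inner{h}{T^*q}_H=\inner{Th}{q}_G ,
\]
so Cauchy--Schwarz gives $|\ell(h)|\le\norm{q}_G\norm{Th}_G$, and (i) holds with $C=\norm{q}_G$. I will also note that the null-space clause in (ii) is implied by $T^*q=u$: since $\Range(T^*)\subset\Null(T)^\perp$, the vector $u$ is orthogonal to $\Null(T)$. Keeping that clause explicit is still harmless.

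For (i)$\Rightarrow$(ii), first observe that if $Th=0$ then $|\ell(h)|\le C\cdot 0$, so $\ell$ vanishes on $\Null(T)$. Next define $\phi$ on the linear subspace $\Range(T)\subset G$ by $\phi(Th)=\ell(h)$.
\begin{enumerate}[(a)]
\item $\phi$ is well defined: if $Th_1=Th_2$, then $h_1-h_2\in\Null(T)$, so $\ell(h_1)=\ell(h_2)$ by the vanishing just shown.
\item $\phi$ is linear, and bounded by $|\phi(g)|\le C\norm{g}_G$ by (i).
\end{enumerate}
Hence $\phi$ extends uniquely by continuity to a bounded linear functional on the Hilbert space $\cl\Range(T)$. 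The Riesz theorem then produces $q\in\cl\Range(T)$ with $\phi(g)=\inner{g}{q}_G$ for all $g$ in that closure. In particular, for all $h\in H$,
\[
\inner{h}{u}_H=\ell(h)=\inner{Th}{q}_G=\inner{h}{T^*q}_H ,
\]
so $T^*q=u$.

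For the minimum-norm claim, the solution set $\{q:T^*q=u\}$ is nonempty by the above. It equals $q_0+\Null(T^*)$ for any solution $q_0$, so it is a closed affine subspace and has a unique element of minimal norm. That element is $Pq_0$, where $P$ is the orthogonal projection onto $\Null(T^*)^\perp=\cl\Range(T)$. Indeed, $q_0-Pq_0\in\Null(T^*)$ gives $T^*(Pq_0)=u$, and Pythagoras gives $\norm{q_0+k}^2=\norm{Pq_0}^2+\norm{(I-P)q_0+k}^2$ for $k\in\Null(T^*)$. This also shows that the Riesz representer constructed in the previous paragraph is exactly the minimum-norm solution.

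The only step needing care is the well-definedness and extension of $\phi$ when $\Range(T)$ is not closed. Well-definedness is exactly where the null-space condition enters, and the extension to the closure is where the boundedness constant $C$ is essential. Everything else is routine Hilbert-space algebra.
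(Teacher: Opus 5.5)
Your proof is correct and follows essentially the same route as the paper. For (i)$\Rightarrow$(ii) both define the functional $Th\mapsto\ell(h)$ on $\Range(T)$, extend it to $\cl\Range(T)$, and apply Riesz to get $T^*q=u$; the converse is Cauchy--Schwarz, and the minimum-norm solution comes from projecting out the $\Null(T^*)$ component. Your extra observations are also correct: the null-space clause in (ii) already follows from $T^*q=u$, and the Riesz representer in $\cl\Range(T)$ is itself the minimum-norm solution.
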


\begin{theorem}[Adjoint-range regularity dichotomy]
\label{thm:regularity}
Fix $a\in\{0,1\}$ and $y\in\cY$, and suppose Assumption \ref{ass:local-model} holds.  Then $\Psi_{a,y}(P)=F_a(y)$ is pathwise differentiable at $P_0$ in the threshold-saturated model $\cM_{a,y}$ if and only if there exists a $y$-regular dual bridge $q_a\in\cH_{a,Z}$ satisfying
\[
        T_a^*q_a=1.
\]
Equivalently, $1\in\Range(T_a^*)$ and the residual moment in Definition \ref{def:dual-bridge} is finite.  When this condition holds, the influence function associated with the canonical bridge pair is
\[
        \varphi_{a,y}(O)=h_{a,y}(W,X)-F_a(y)
        +\ind{A=a}q_a(Z,X)\{B_y-h_{a,y}(W,X)\}.
\]
Because the tangent closure in Assumption \ref{ass:local-model} is $L_2^0(P_0)$, this influence function is the canonical gradient in $\cM_{a,y}$.  The semiparametric efficiency bound for estimating $F_a(y)$ is
\[
        \mathcal I_{a,y}^{-1}=\E\{\varphi_{a,y}^2(O)\}.
\]
In any smaller regular submodel satisfying the same bridge differentiability condition, the displayed function remains a valid influence function and the canonical gradient is its orthogonal projection onto the smaller tangent closure.  If no $y$-regular dual bridge exists, then $F_a(y)$ is not pathwise differentiable in $\cM_{a,y}$ and no regular root-$n$ estimator with bounded asymptotic variance exists in this model.
\end{theorem}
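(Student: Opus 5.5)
The proof has two directions, a sufficiency argument and a converse. For sufficiency, I take a regular dominated submodel $\{P_t\}$ with score $s\in\mathcal T_{a,y}$ and a $y$-regular dual bridge $q_a$, and decompose the derivative of $\Psi_{a,y}(P_t)=\E_{P_t}\{h_{a,y,P_t}(W,X)\}$ into two parts. The first part comes from moving the law with the bridge held fixed, and equals $\E\{(h_{a,y}-F_a(y))s\}$. The second part comes from moving the bridge, and equals $\ellfun(\dot h_s)=\inner{\dot h_s}{1}_W$.

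Since $T_a^*q_a=1$, the second part can be rewritten as $\inner{\dot h_s}{T_a^*q_a}_W=\E\{\ind{A=a}q_a\dot h_s\}$. I then apply the differentiated bridge identity of Assumption \ref{ass:local-model}(ii) with $r=q_a$. Its extension from bounded $r$ to $q_a$ is justified by $L_2$ truncation of $q_a$ combined with the residual moment (vii). Expanding the derivative under $P_t$ gives
\[
\E\{\ind{A=a}q_a\dot h_s\}=\E\big[\ind{A=a}q_a\{B_y-h_{a,y}\}s\big].
\]
Adding the two parts yields $d\Psi/dt=\E\{\varphi_{a,y}s\}$.

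It remains to check that $\varphi_{a,y}$ is a valid gradient.
\begin{itemize}
\item $\varphi_{a,y}\in L_2(P_0)$: the term $h_{a,y}$ lies in $\cH_W$, and the residual term is square-integrable by $y$-regularity.
\item $\varphi_{a,y}$ is centered: by the dual identification in Theorem \ref{thm:identification}.
\end{itemize}
Because $\cl\{\mathcal T_{a,y}\}=L_2^0(P_0)$, the gradient is unique, so $\varphi_{a,y}$ is canonical and the bound is $\E\varphi_{a,y}^2$. For a smaller model, the same computation shows $\varphi_{a,y}$ is a gradient, and the canonical gradient is its projection onto the smaller tangent closure by the standard characterization.

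For the converse, I assume pathwise differentiability with some gradient $\tilde\varphi\in L_2^0(P_0)$. For $v\in\mathcal D_a\cap\Null(T_a)^\perp$, Lemma \ref{lem:threshold-tilt} supplies a submodel with score $s_v$ along which the target has derivative $\ellfun(v)$. Hence $\ellfun(v)=\E\{\tilde\varphi s_v\}$. By Cauchy--Schwarz,
\[
|\ellfun(v)|\le\norm{\tilde\varphi}_{L_2}\norm{s_v}_{L_2}.
\]
To bound $\norm{s_v}_{L_2}$, I condition on $(A,Z,W,X)$: the conditional variance of $B_y$ equals $m(1-m)$. Using (iv), this gives $\E s_v^2=\E[\ind{A=a}(T_av)^2/\{m(1-m)\}]\le \{\eps(1-\eps)\}^{-1}\norm{T_av}_{a,Z}^2$. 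Therefore $|\ellfun(v)|\le C\norm{T_av}_{a,Z}$ on the dense class. The density in (v) is in the graph norm, and both sides are continuous in that norm, so the inequality extends to all of $\Null(T_a)^\perp$. On $\Null(T_a)$ the functional $\ellfun$ vanishes by (i), so the inequality holds on all of $\cH_W$. Lemma \ref{lem:riesz} with $u=1$ then gives $q_a\in\cH_{a,Z}$ with $T_a^*q_a=1$, which I may take to be the minimum-norm representative. Finally, (vii) makes it $y$-regular. The contrapositive yields non-differentiability when no such bridge exists. The lack of a regular root-$n$ estimator then follows from the classical result that regular estimability implies pathwise differentiability (van der Vaart 1991).

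I expect two delicate points. The main one is the limit passage in the sufficiency step: extending (ii) from bounded $r$ to $q_a$ requires dominated convergence of the $t$-derivative, not just of the expectations. I would truncate $q_a$ at level $M$, use (vii) to control the remainder uniformly in the score direction, and let $M\to\infty$. The other point is that the Riesz step must produce the canonical $q_a$ orthogonal to $\Null(T_a^*)$. This is harmless because $\varphi_{a,y}$ does not depend on which dual solution is used: the contribution of any $r\in\Null(T_a^*)$ vanishes by the primal bridge equation.
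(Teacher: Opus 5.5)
Your proposal is correct and follows the paper's proof essentially step for step. For sufficiency you use $T_a^*q_a=1$ together with the differentiated bridge identity; its extension to $y$-regular dual bridges is already built into Assumption \ref{ass:local-model}(ii), so your truncation argument is optional. For necessity you use the threshold tilts of Lemma \ref{lem:threshold-tilt}, the bound $\norm{s_v}_2^2\le\{\eps(1-\eps)\}^{-1}\norm{T_av}_{a,Z}^2$, graph-norm density, Assumption \ref{ass:local-model}(i), and Lemma \ref{lem:riesz}.

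One small correction to your closing aside concerns $r\in\Null(T_a^*)$. There the primal bridge equation only makes $\ind{A=a}r\{B_y-h_{a,y}\}$ mean zero, not zero. It is the differentiated identity in (ii) that makes this term orthogonal to every score, and hence zero in the saturated model. The point is moot anyway, because you take the minimum-norm representative directly.
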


\begin{remark}[Model-specific nonregularity]
The equivalence in Theorem \ref{thm:regularity} is a statement about the threshold-saturated observed-data bridge model in Assumption \ref{ass:local-model}.  A smaller model may restrict the tangent directions sufficiently that the projection of a non-$L_2$ saturated-model gradient becomes regular.  Thus the nonregularity conclusion should be read as an impossibility result for the saturated proximal bridge model, not for every possible parametric or semiparametric submodel contained in it.
\end{remark}

\begin{corollary}[Double robustness of the CDF score]\label{cor:dr}
For arbitrary square-integrable functions $\bar h_y(W,X)$ and $\bar q(Z,X)$, define
\[
        \Gamma_{a,y}(\bar h_y,\bar q)=
        \E\big[\bar h_y(W,X)+\ind{A=a}\bar q(Z,X)\{B_y-\bar h_y(W,X)\}\big].
\]
Then $\Gamma_{a,y}(\bar h_y,\bar q)=F_a(y)$ if either $\bar h_y$ is an outcome bridge for $(a,y)$ or $\bar q$ is a treatment bridge for arm $a$.
\end{corollary}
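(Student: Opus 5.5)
The corollary is a direct computation: expand $\Gamma_{a,y}$ by iterated expectations and check each case in turn. Write
\[
\Gamma_{a,y}(\bar h_y,\bar q)-F_a(y)=\E\{\bar h_y(W,X)\}-F_a(y)+\E\big[\ind{A=a}\bar q(Z,X)\{B_y-\bar h_y(W,X)\}\big].
\]
Throughout, Theorem~\ref{thm:identification} supplies $F_a(y)=\E\{h_{a,y}(W,X)\}$, and, whenever a treatment bridge exists, also $F_a(y)=\E\{\ind{A=a}q_a(Z,X)B_y\}$.

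\emph{Case 1: $\bar h_y$ is an outcome bridge.} Here $\E\{B_y-\bar h_y(W,X)\mid A=a,Z,X\}=0$.
\begin{itemize}
\item Condition the correction term on $(A,Z,X)$. On $\{A=a\}$ the inner conditional mean vanishes, so the correction term is zero. This uses only that $\ind{A=a}\bar q$ times the residual is integrable, which I would obtain from Cauchy--Schwarz in $\cH_{a,Z}$ together with boundedness of $B_y$ and $\bar h_y\in\cH_W$.
\item It remains to show $\E\{\bar h_y(W,X)\}=F_a(y)$. The functions $\bar h_y$ and $h_{a,y}$ both solve $T_ah=g_{a,y}$, so $\bar h_y-h_{a,y}\in\Null(T_a)$.
\item The target-uniqueness clause of Assumption~\ref{ass:bridge-existence} says $\ellfun$ is constant over solutions. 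Hence $\ellfun(\bar h_y)=\ellfun(h_{a,y})=F_a(y)$.
\end{itemize}

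\emph{Case 2: $\bar q$ is a treatment bridge.} Here $\E\{\ind{A=a}\bar q(Z,X)\mid W,X\}=1$.
\begin{itemize}
\item Regroup $\Gamma$ as $\E[\bar h_y(W,X)\{1-\ind{A=a}\bar q(Z,X)\}]+\E\{\ind{A=a}\bar q(Z,X)B_y\}$.
\item Condition the first term on $(W,X)$. The factor $1-\ind{A=a}\bar q$ has conditional mean zero, so the first term vanishes.
\item The second term equals $F_a(y)$ by the dual representation in Theorem~\ref{thm:identification}. That representation holds for any dual bridge, not only the canonical one, because its proof uses only $T_a^*q=1$.
\end{itemize}

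\emph{Main obstacle.} The delicate point is integrability. The product $\bar h_y(W,X)\ind{A=a}\bar q(Z,X)$ must be in $L_1$ for the iterated-expectation step to be legitimate, and this does not follow automatically from the two separate $L_2$ memberships, since they live in differently weighted spaces. I would handle it in two steps:
\begin{itemize}
\item Show $\ind{A=a}\bar q\in L_2(P_0)$ directly from $\bar q\in\cH_{a,Z}$, because $\norm{\ind{A=a}\bar q}_{L_2(P_0)}=\norm{\bar q}_{a,Z}$.
\item Apply Cauchy--Schwarz with $\bar h_y\in\cH_W$ to get integrability of the product.
\end{itemize}
A second point to check is that the dual identity $\E\{\ind{A=a}q B_y\}=F_a(y)$ transfers to a non-canonical $\bar q$. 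I would verify it by running the same two-step conditioning against the canonical $h_{a,y}$. Conditioning on $(W,X)$ with $T_a^*\bar q=1$ gives $\E\{\ind{A=a}\bar q\,h_{a,y}\}=\E\{h_{a,y}\}=F_a(y)$. Conditioning on $(A,Z,X)$ with $T_ah_{a,y}=g_{a,y}$ gives $\E\{\ind{A=a}\bar q\,h_{a,y}\}=\E\{\ind{A=a}\bar q\,B_y\}$. Combining the two yields $\E\{\ind{A=a}\bar q\,B_y\}=F_a(y)$ for any treatment bridge $\bar q$.
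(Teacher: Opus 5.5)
Your proposal is correct and follows the paper's proof. In the outcome-bridge case you remove the correction term by conditioning on $(A,Z,X)$. In the treatment-bridge case you regroup $\Gamma_{a,y}$ as $\E\{\ind{A=a}\bar q B_y\}+\E[\bar h_y\{1-T_a^*\bar q\}]$ and invoke the dual representation of Theorem~\ref{thm:identification}. Your extra checks fill in steps the paper leaves implicit: integrability via $\norm{\ind{A=a}\bar q}_{L_2(P_0)}=\norm{\bar q}_{a,Z}$ and Cauchy--Schwarz, and target uniqueness to get $\E\{\bar h_y\}=F_a(y)$ for a non-canonical bridge.
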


\begin{proposition}[Reference-paper special cases]\label{prop:special-cases}
The theory contains the two reference settings as exact special cases.
\begin{enumerate}[(i)]
\item If there is no unmeasured confounding and the proxy pair is degenerate in the observed covariates, $Z=W=X$, then
\[
        h_{a,y}(X)=P(Y\le y\mid A=a,X),\qquad
        q_a(X)=\frac{1}{P(A=a\mid X)},
\]
and $\varphi_{a,y}$ reduces to the usual observed-confounding CDF influence function used for efficient QTE inference.
\item If $B_y$ is replaced by a square-integrable outcome $Y$ and the two treatment arms are contrasted, the displayed influence function reduces to the proximal ATE influence function in \citet{cui2024semiparametric}.
\end{enumerate}
Thus the present paper does not claim novelty for pointwise bridge identification or the pointwise EIF; its additional claims concern the adjoint-range regularity boundary, process-level efficiency, spectral weak-proxy behavior, and density-free inversion bands.
\end{proposition}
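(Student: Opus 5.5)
For part (i), I would take $Z=W=X$. Then $T_a$ becomes the identity map $\cH_W\to\cH_{a,Z}$, since $\E\{h(X)\mid A=a,X\}=h(X)$. Only the norms differ: $T_a$ maps $L_2(P_X)$ into $L_2(\ind{A=a}dP)$, whose norm is $\E\{\pi_a(X)h^2(X)\}$ with $\pi_a(X)=P(A=a\mid X)$.

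I would then compute the two bridges in turn.
\begin{itemize}
\item The primal equation $T_ah=g_{a,y}$ reads $h(X)=\E\{B_y\mid A=a,X\}$, so $h_{a,y}(X)=P(Y\le y\mid A=a,X)$ is the unique solution. Here $\Null(T_a)=\{0\}$ by the positivity in Assumption \ref{ass:observed-nondeg}, so this solution is also the canonical one.
\item The dual equation $T_a^*q=\E\{\ind{A=a}q(X)\mid X\}=\pi_a(X)q(X)=1$ forces $q_a=1/\pi_a$. This $q_a$ lies in $\cH_{a,Z}$ exactly when $\E\{\pi_a(X)\pi_a(X)^{-2}\}=\E\{1/\pi_a(X)\}<\infty$. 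I would record this as the familiar overlap condition and note that it is the precise content of $1\in\Range(T_a^*)$ in this case.
\end{itemize}
Substituting both bridges into $\varphi_{a,y}$ gives
\[
\mu_a(y,X)-F_a(y)+\frac{\ind{A=a}}{\pi_a(X)}\{B_y-\mu_a(y,X)\},
\]
where $\mu_a(y,X)=P(Y\le y\mid A=a,X)$. This is the augmented IPW influence function of \citet{firpo2007efficient}, evaluated at the threshold $y$.

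For part (ii), I would repeat the construction of Theorem \ref{thm:regularity} with $B_y$ replaced by $Y$. The outcome bridge then solves $\E\{Y-h_a(W,X)\mid A=a,Z,X\}=0$, and the dual bridge is unchanged because $T_a^*q_a=1$ does not involve the outcome. This gives
\[
\varphi_a=h_a(W,X)-\E\{h_a\}+\ind{A=a}q_a(Z,X)\{Y-h_a(W,X)\}.
\]
Differencing over arms $a=1,0$ gives $\varphi_1-\varphi_0$.

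To match \citet{cui2024semiparametric}, I would rewrite their treatment bridge $q(Z,A,X)$. Their condition is $\E\{q(Z,a,X)\mid W,A=a,X\}=1/P(A=a\mid W,X)$, and I would show it is equivalent to our $\E\{\ind{A=a}q_a\mid W,X\}=1$ by multiplying through by $P(A=a\mid W,X)$, which is positive by Assumption \ref{ass:observed-nondeg}. Their combined term $(-1)^{1-A}q(Z,A,X)$ then splits as $\ind{A=1}q_1-\ind{A=0}q_0$. With this identification the two influence functions agree term by term.

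The expected obstacle is bookkeeping rather than mathematics. It consists of aligning the weighted-norm and conditional-probability conventions, namely the $1/P(A=a\mid W,X)$ normalization and the role of $\Null(T_a)$. Routine points such as the square integrability of $Y$ only need to be noted, since the linear-functional argument in Theorem \ref{thm:regularity} does not use boundedness of the outcome except through the residual moment in Assumption \ref{ass:local-model}(vii), which must be assumed for $Y$.
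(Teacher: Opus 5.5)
Your proposal is correct and follows the paper's proof. With $Z=W=X$, $T_a$ acts as the identity and $T_a^*$ as multiplication by $P(A=a\mid X)$, so both bridges come from solving directly; part (ii) is the same substitution of $Y$ for $B_y$ followed by differencing over arms. Your extra checks are also correct and make explicit what the paper leaves implicit: here $1\in\Range(T_a^*)$ is exactly the overlap condition $\E\{1/P(A=a\mid X)\}<\infty$, and the treatment-bridge normalization of \citet{cui2024semiparametric} is equivalent to $T_a^*q_a=1$ once it is multiplied by $P(A=a\mid W,X)$.
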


\subsubsection{Spectral regularity and weak-proxy phase transition}\label{sec:spectral}

Theorem \ref{thm:regularity} gives an abstract adjoint-range condition.  The next results make it checkable.  The singular-system assumption is a compact-operator benchmark, analogous to the Picard conditions used for Fredholm equations and nonparametric IV models \citep{kress1989linear,newey2003instrumental,hall2005nonparametric,carrasco2007linear}.  The singular values are interpreted as proxy relevance along different directions.  This benchmark is intentionally not imposed on every proximal model: with unrestricted continuous covariates, conditional expectation operators may contain identity components and need not be compact.

\begin{assumption}[Compact singular-system benchmark]\label{ass:svd}
For a fixed treatment arm $a$, $T_a:\cH_W\to\cH_{a,Z}$ is compact and admits a singular system $\{(s_{a,j},e_{a,j},f_{a,j}):j\ge1\}$ on $\Null(T_a)^\perp$:
\[
        T_a e_{a,j}=s_{a,j}f_{a,j},\qquad
        T_a^*f_{a,j}=s_{a,j}e_{a,j},
\]
where $s_{a,1}\ge s_{a,2}\ge\cdots>0$, $\{e_{a,j}\}$ is orthonormal in $\cH_W$, and $\{f_{a,j}\}$ is orthonormal in $\cH_{a,Z}$.  The target representer $1\in\cH_W$ has no component in $\Null(T_a)$, and
\[
        1=\sum_{j\ge1}\ell_{a,j}e_{a,j}
        \quad\text{in }\cH_W,
        \qquad
        \ell_{a,j}=\inner{1}{e_{a,j}}_W.
\]
\end{assumption}

\begin{theorem}[Spectral regularity criterion]
\label{thm:spectral}
Under Assumption \ref{ass:svd}, for every $y$ for which the primal bridge exists and the local model in Assumption \ref{ass:local-model} is valid, the following statements are equivalent:
\begin{enumerate}[(i)]
\item $1\in\Range(T_a^*)$;
\item
\[
        \sum_{j\ge1}\frac{\ell_{a,j}^2}{s_{a,j}^2}<\infty.
\]
\end{enumerate}
When these conditions hold, the minimum-norm dual bridge is
\[
        q_a=\sum_{j\ge1}\frac{\ell_{a,j}}{s_{a,j}}f_{a,j},
        \qquad
        \norm{q_a}_{a,Z}^2=\sum_{j\ge1}\frac{\ell_{a,j}^2}{s_{a,j}^2}.
\]
If, in addition, the residual moment in Definition \ref{def:dual-bridge} is finite for this $q_a$, then $F_a(y)$ is regular in the sense of Theorem \ref{thm:regularity}; under the bounded-envelope conditions of Assumption \ref{ass:process}, this additional moment is automatic.  If the displayed series diverges, no square-integrable dual bridge exists, hence no $y$-regular dual bridge exists.  The truncated dual bridges
\[
        q_{a,m}=\sum_{j=1}^{m}\frac{\ell_{a,j}}{s_{a,j}}f_{a,j}
\]
have $\norm{q_{a,m}}_{a,Z}\to\infty$.  If the threshold residual is nondegenerate in the sense that
\[
        \E\big[\{B_y-h_{a,y}(W,X)\}^2\mid A=a,Z,X\big]\ge \sigma_y^2>0
\]
almost surely on the singular directions considered, then the finite-dimensional efficiency bounds along the first $m$ singular directions diverge at least as fast as $\sigma_y^2\norm{q_{a,m}}_{a,Z}^2$.
\end{theorem}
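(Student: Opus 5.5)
The plan is to reduce everything to Picard's criterion for the adjoint equation $T_a^*q=1$, expressed in the singular system of Assumption \ref{ass:svd}. First I would record the structure of $\cH_{a,Z}$. The vectors $\{f_{a,j}\}$ form an orthonormal basis of $\overline{\Range(T_a)}=\Null(T_a^*)^\perp$. For any $q\in\cH_{a,Z}$, write $q=q_0+\sum_j c_jf_{a,j}$ with $q_0\in\Null(T_a^*)$. Then continuity of $T_a^*$ gives $T_a^*q=\sum_j c_js_{a,j}e_{a,j}$.

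Equating this with the expansion $1=\sum_j\ell_{a,j}e_{a,j}$ from Assumption \ref{ass:svd} and using orthonormality of $\{e_{a,j}\}$ yields $c_j=\ell_{a,j}/s_{a,j}$ for every $j$. The two directions then follow directly.
\begin{itemize}
\item[(i)$\Rightarrow$(ii):] Bessel/Parseval gives $\sum_j\ell_{a,j}^2/s_{a,j}^2=\sum_j c_j^2\le\norm{q}_{a,Z}^2<\infty$.
\item[(ii)$\Rightarrow$(i):] The series $q_a=\sum_j(\ell_{a,j}/s_{a,j})f_{a,j}$ converges in $\cH_{a,Z}$, and by continuity of $T_a^*$ it satisfies $T_a^*q_a=\sum_j\ell_{a,j}e_{a,j}=1$.
\end{itemize}
This $q_a$ lies in $\Null(T_a^*)^\perp$. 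Every other solution differs from it by $q_0\in\Null(T_a^*)$, which is orthogonal to $q_a$, so $q_a$ is the minimum-norm solution and $\norm{q_a}^2_{a,Z}=\sum_j\ell_{a,j}^2/s_{a,j}^2$. This matches the canonical representative of Definition \ref{def:dual-bridge}, and also Lemma \ref{lem:riesz}.

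The regularity statements are then direct invocations of Theorem \ref{thm:regularity}.
\begin{itemize}
\item If the residual moment is finite, $q_a$ is $y$-regular and $F_a(y)$ is pathwise differentiable.
\item Under bounded envelopes, $|B_y-h_{a,y}|$ is essentially bounded, so the moment is at most a constant times $\norm{q_a}_{a,Z}^2$ and is automatic.
\item If the series diverges, the necessity argument above shows no $q\in\cH_{a,Z}$ solves the equation, so a fortiori no $y$-regular bridge exists.
\end{itemize}
The truncations satisfy $\norm{q_{a,m}}^2_{a,Z}=\sum_{j\le m}\ell_{a,j}^2/s_{a,j}^2\uparrow\infty$.

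The main obstacle is making the claim about divergence of the finite-dimensional efficiency bounds precise. I would proceed as follows.
\begin{enumerate}
\item Consider the sieve submodel in which the bridge is restricted to $\mathrm{span}\{e_{a,1},\dots,e_{a,m}\}$. Its target is $\ell(h)=\sum_{j\le m}\ell_{a,j}\beta_j$ for $h=\sum_{j\le m}\beta_je_{a,j}$.
\item The moment restriction $\E[\ind{A=a}f_{a,k}(Z,X)\{B_y-h\}]=0$, $k\le m$, has Jacobian $\mathrm{diag}(s_{a,j})$ with respect to $\beta$. This is because $\E[\ind{A=a}f_{a,k}e_{a,j}]=\inner{T_ae_{a,j}}{f_{a,k}}_{a,Z}=s_{a,j}\delta_{jk}$.
\item The moment covariance is $\Omega_{kl}=\E[\ind{A=a}f_{a,k}f_{a,l}\sigma^2(Z,X)]$, where $\sigma^2(Z,X)$ is the conditional residual variance. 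The nondegeneracy hypothesis gives $\Omega\succeq\sigma_y^2 I_m$ in the weighted inner product.
\item The GMM/Chamberlain bound for $\ell(\beta)$ is then $\lambda^\top S^{-1}\Omega S^{-1}\lambda$ with $S=\mathrm{diag}(s_{a,j})$ and $\lambda=(\ell_{a,j})_{j\le m}$. It is bounded below by $\sigma_y^2\sum_{j\le m}\ell_{a,j}^2/s_{a,j}^2=\sigma_y^2\norm{q_{a,m}}^2_{a,Z}$, which diverges.
\end{enumerate}
The delicate points are interpreting ``along the singular directions'' as exactly this sieve model and checking that the bound $\Omega\succeq\sigma_y^2I$ uses orthonormality of $f_{a,j}$ under the $\ind{A=a}$-weighted measure. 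If the $h_{a,y}$-terms in the influence function add variance, they only increase the bound, since they are orthogonal to the residual term conditionally on $(A,Z,W,X)$.
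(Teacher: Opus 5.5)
Your treatment of (i)$\Leftrightarrow$(ii), the minimum-norm bridge and its norm, the regularity and nonregularity consequences, and $\norm{q_{a,m}}_{a,Z}\to\infty$ follows the paper's proof. Both decompose $q$ into a $\Null(T_a^*)$ part plus $\sum_jc_jf_{a,j}$, match $s_{a,j}c_j=\ell_{a,j}$, apply Parseval, and invoke Theorem~\ref{thm:regularity}.

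For the efficiency-bound claim the paper does less than you. It conditions on $(A=a,Z,X)$ to get $\E[\ind{A=a}q_{a,m}^2\{B_y-h_{a,y}\}^2]\ge\sigma_y^2\norm{q_{a,m}}_{a,Z}^2$, and treats this residual second moment as the finite-dimensional bound without naming a model. Your sandwich $\lambda^\top S^{-1}\Omega S^{-1}\lambda$ is exactly that quantity, because $S^{-1}\lambda$ is the coefficient vector of $q_{a,m}$. So the just-identified GMM framing is a reasonable way to make ``along the first $m$ singular directions'' precise. Two details of it need repair, though.

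\emph{Centering.} You let the sieve restrict the bridge itself to $\mathrm{span}\{e_{a,1},\ldots,e_{a,m}\}$. Then the pseudo-true bridge is the projection $\Pi_mh_{a,y}$. The model's target at $P_0$ becomes $\E\{\Pi_mh_{a,y}\}$, which generally differs from $F_a(y)$ by the spectral tail. Also $\Omega$ then involves the residual $B_y-\Pi_mh_{a,y}$, to which the nondegeneracy hypothesis does not apply. Instead, center at the truth: take $h=h_{a,y}+\sum_{j\le m}\beta_je_{a,j}$ with true value $\beta=0$. Then the moment residual is $B_y-h_{a,y}$, and $\Omega\succeq\sigma_y^2I_m$ follows from orthonormality of $\{f_{a,j}\}$ in $\cH_{a,Z}$, as you state. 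The resulting influence function is $\varphi_m=h_{a,y}-F_a(y)+\ind{A=a}q_{a,m}(B_y-h_{a,y})$, because the target $\E_P\{h\}$ also varies with $P$.

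\emph{The $h$-term.} You claim this term can only increase the bound because it is orthogonal to the residual term conditionally on $(A,Z,W,X)$. That is false. In general $\E\{B_y-h_{a,y}\mid A=a,Z,W,X\}=m_{a,y}-h_{a,y}\neq0$. Conditioning on $(A=a,Z,X)$ instead, the cross term equals $2\E[\ind{A=a}q_{a,m}\{\operatorname{Cov}(h_{a,y},B_y\mid A=a,Z,X)-\operatorname{Var}(h_{a,y}\mid A=a,Z,X)\}]$. This has no definite sign. With a noisy outcome proxy the bridge amplifies the proxy noise, so the bracket is negative.

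The conclusion still holds, because $\norm{h_{a,y}-F_a(y)}_W$ does not depend on $m$. Minkowski's inequality gives $\{\E\varphi_m^2\}^{1/2}\ge\sigma_y\norm{q_{a,m}}_{a,Z}-\norm{h_{a,y}-F_a(y)}_W$. Hence the bound grows like $\sigma_y^2\norm{q_{a,m}}_{a,Z}^2(1-o(1))$, which is all the asymptotic statement asserts.
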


\begin{remark}[Interpretation]
Small $s_{a,j}$ correspond to weak proxy directions.  The coefficients $\ell_{a,j}$ measure how much the counterfactual mean functional loads on those directions.  Root-$n$ regularity requires the functional to avoid loading too heavily on directions that the proxies cannot stably recover.
\end{remark}

\begin{definition}[Gaussian inverse benchmark]\label{def:gaussian-submodel}
For fixed $a$, consider the Gaussian sequence experiment
\[
        X_j=s_j\theta_j+n^{-1/2}\xi_j,
        \qquad j=1,2,\ldots,
\]
where $\xi_j\iid N(0,1)$, $s_j>0$, and the target is
\[
        L(\theta)=\sum_{j\ge1}\ell_j\theta_j.
\]
For $\beta>0$ and $R>0$, define the ellipsoid
\[
        \Theta_\beta(R)=\left\{\theta:\sum_{j\ge1}j^{2\beta}\theta_j^2\le R^2\right\}.
\]
This sequence experiment is used as a least-favorable inverse-problem benchmark.  It is not asserted to be the full minimax experiment for the entire proximal observed-data model without an additional LAN-embedding theorem.
\end{definition}

\begin{theorem}[Minimax benchmark phase transition in singular coordinates]\label{thm:minimax}
Consider the Gaussian inverse benchmark in Definition \ref{def:gaussian-submodel}.  Suppose
\[
        s_j\asymp j^{-\alpha},\qquad |\ell_j|\asymp j^{-\rho},
        \qquad \alpha>0,
\]
and assume $\beta+\rho>1/2$, so that $L(\theta)$ is bounded on $\Theta_\beta(R)$.  Let
\[
        \mathfrak R_n=\inf_{\widehat L}\sup_{\theta\in\Theta_\beta(R)}
        \E_\theta\{\widehat L-L(\theta)\}^2.
\]
Then the following phase transition holds.
\begin{enumerate}[(i)]
\item If $\rho>\alpha+1/2$, then $\sum_j\ell_j^2/s_j^2<\infty$ and $\mathfrak R_n\asymp n^{-1}$.
\item If $\rho<\alpha+1/2$, then root-$n$ regular estimation is impossible and
\[
        \mathfrak R_n\asymp
        n^{-\{2\beta+2\rho-1\}/\{2\alpha+2\beta\}}.
\]
\item At the boundary $\rho=\alpha+1/2$, the partial efficiency bound diverges at logarithmic rate: finite-rank regular estimators using the first $m$ singular directions have variance at least of order $n^{-1}\log m$.  A truncated estimator has risk no larger than a constant multiple of $n^{-1}\log n$; no matching boundary lower bound beyond the displayed partial-efficiency divergence is asserted here.
\end{enumerate}
Consequently, the Picard adjoint-range condition in Theorem \ref{thm:spectral}, together with the residual-moment condition needed for an $L_2$ influence function, is the root-$n$ phase boundary in this singular-coordinate benchmark.
\end{theorem}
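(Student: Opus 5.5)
The plan is to reduce the benchmark to a heteroscedastic Gaussian sequence model for a linear functional over an ellipsoid. Then I would prove matching upper bounds with a truncated linear estimator and lower bounds with a two-point (hardest one-dimensional subfamily) argument in the style of Ibragimov--Khasminskii and Donoho. Dividing by $s_j$ gives the equivalent experiment $\widetilde X_j=X_j/s_j=\theta_j+\sigma_j\xi_j$ with $\sigma_j=n^{-1/2}s_j^{-1}\asymp n^{-1/2}j^{\alpha}$. The Picard series is $\sum_j\ell_j^2/s_j^2\asymp\sum_j j^{2\alpha-2\rho}$. It is finite if and only if $\rho>\alpha+1/2$, which gives the first claim of (i) directly. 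It also links the benchmark to Theorem \ref{thm:spectral}, since the truncated dual bridge norm is $\norm{q_{a,m}}^2\asymp\sum_{j\le m}j^{2\alpha-2\rho}$.

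\textbf{Upper bounds.} Take $\widehat L_m=\sum_{j\le m}\ell_j\widetilde X_j$. Its variance is $n^{-1}\sum_{j\le m}\ell_j^2 s_j^{-2}\asymp n^{-1}\sum_{j\le m}j^{2\alpha-2\rho}$. By Cauchy--Schwarz against the ellipsoid weights, its squared bias is at most $R^2\sum_{j>m}\ell_j^2j^{-2\beta}\asymp m^{-(2\beta+2\rho-1)}$, which is finite because $\beta+\rho>1/2$.
\begin{itemize}
\item In case (i), let $m\to\infty$, or take $m=n$. The variance stays $O(n^{-1})$ and the bias is $o(n^{-1})$.
\item In case (ii), the variance is $\asymp n^{-1}m^{2\alpha-2\rho+1}$. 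Balancing it against $m^{-(2\beta+2\rho-1)}$ gives $m\asymp n^{1/(2\alpha+2\beta)}$ and the rate $n^{-(2\beta+2\rho-1)/(2\alpha+2\beta)}$.
\item In case (iii), the variance is $\asymp n^{-1}\log m$. With the same $m$, the bias exponent equals exactly $1$, so the total risk is $O(n^{-1}\log n)$.
\end{itemize}

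\textbf{Lower bounds.} For (i), restrict to the one-dimensional subfamily $\theta=te_1$ with $|t|\le R$. This is a Gaussian location problem with noise level $n^{-1/2}/s_1$, so a Le Cam two-point bound gives $\mathfrak R_n\gtrsim n^{-1}$.

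For (ii), use a block perturbation. Set $v_j=c\,m^{-\beta-1/2}\operatorname{sign}(\ell_j)$ for $m\le j<2m$ and $v_j=0$ otherwise. Then $\sum_j j^{2\beta}v_j^2\asymp c^2$, so $\pm v\in\Theta_\beta(R)$ for small $c$. The Kullback--Leibler divergence between $P_v$ and $P_{-v}$ is $2n\sum_j s_j^2v_j^2\asymp nc^2m^{-2\alpha-2\beta}$, which is $O(1)$ once $m\asymp n^{1/(2\alpha+2\beta)}$. The separation is $L(v)\asymp m\cdot m^{-\rho}\cdot m^{-\beta-1/2}=m^{1/2-\rho-\beta}$. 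Le Cam's two-point lemma then gives $\mathfrak R_n\gtrsim L(v)^2\asymp n^{-(2\beta+2\rho-1)/(2\alpha+2\beta)}$. This rate is slower than $n^{-1}$ exactly when $\rho<\alpha+1/2$, so no estimator attains the root-$n$ rate uniformly on $\Theta_\beta(R)$. The target is also not pathwise differentiable along the sequence directions, because Theorem \ref{thm:spectral} excludes a square-integrable representer. Together these give the impossibility statement.

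For (iii), fix the $m$-dimensional parametric submodel $\theta=(\theta_1,\dots,\theta_m,0,\dots)$ with $\theta$ in the interior of $\Theta_\beta(R)$. Its Fisher information is $n\,\mathrm{diag}(s_j^2)$. The Cramér--Rao/Hájek bound for $\sum_{j\le m}\ell_j\theta_j$ is $n^{-1}\sum_{j\le m}\ell_j^2/s_j^2\asymp n^{-1}\log m$, which is the displayed partial-efficiency divergence. As the statement says, no further boundary lower bound is claimed.

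The main obstacle is the lower bound in (ii). The perturbation must stay inside the ellipsoid, keep the two hypotheses statistically indistinguishable, and still load on $\ell$ with the correct power. The block construction with sign-aligned coordinates handles all three requirements at once. The one delicate point is using $|\ell_j|\asymp j^{-\rho}$ uniformly on the block $[m,2m)$, so that $\sum_{m\le j<2m}|\ell_j|\asymp m^{1-\rho}$. If sharper constants were needed, I would replace the two-point argument by Donoho's result that the linear minimax risk is within a factor $1.25$ of the minimax risk over convex symmetric classes. That would also identify the truncated estimator as rate optimal.
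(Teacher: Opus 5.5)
Your proposal follows the paper's proof essentially step for step. It uses the same truncated estimator $\sum_{j\le m}(\ell_j/s_j)X_j$ with Cauchy--Schwarz worst-case squared bias $R^2\sum_{j>m}\ell_j^2j^{-2\beta}$ and the same balancing choice $m\asymp n^{1/(2\alpha+2\beta)}$. The lower bounds also match: a one-dimensional two-point bound for (i), the sign-aligned block perturbation of size $m^{-\beta-1/2}$ on $j\asymp m$ with Le Cam for (ii), and the partial sum $\sum_{j\le m}\ell_j^2/s_j^2\asymp\log m$ for (iii).

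One step in (i) needs a fix. Taking $m=n$ gives squared bias $\asymp n^{-(2\beta+2\rho-1)}$, which is $O(n^{-1})$ only when $\beta+\rho\ge1$. That is not implied by $\rho>\alpha+1/2$; for example $\alpha=0.1$, $\rho=0.7$, $\beta=0.2$ satisfies case (i) with $\beta+\rho<1$. Instead take $m=\infty$: the full series is unbiased with variance $n^{-1}\sum_j\ell_j^2/s_j^2<\infty$. Any $m\gtrsim n^{1/(2\beta+2\rho-1)}$ also works. The paper's phrase ``choosing $m\to\infty$ slowly'' has the same defect and should read ``fast enough.''
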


\subsection{Efficient process theory and finite-rank regularized estimation}\label{sec:theory-process-est}

\subsubsection{Efficient CDF process and uniform one-step expansion}\label{sec:process}

The preceding results are pointwise in $y$.  Quantile and simultaneous inference require the entire CDF process.  The next assumptions give a fixed-law empirical-process route in the sense of \citet{vandervaartwellner1996} and \citet{kosorok2008introduction}.  Growing-dimension sieve conditions are treated separately and should not be confused with the fixed-$P_0$ Donsker statement.

Let
\[
        \Phi_a=\{\varphi_{a,y}:y\in\cY\},
\]
where $\varphi_{a,y}$ is the canonical gradient from Theorem \ref{thm:regularity}.

\begin{assumption}[Fixed-law process regularity]
\label{ass:process}
For each $a\in\{0,1\}$:
\begin{enumerate}[(i)]
\item The treatment bridge $q_a$ is square-integrable and essentially bounded.
\item The outcome bridge class $\cH_a=\{h_{a,y}:y\in\cY\}$ is uniformly bounded and of VC type: there exist constants $A,v<\infty$ and an envelope $H_a$ such that for every finitely discrete probability measure $Q$,
\[
        N\{\epsilon\norm{H_a}_{Q,2},\cH_a,L_2(Q)\}
        \le (A/\epsilon)^v,
        \qquad 0<\epsilon\le1.
\]
\item The map $y\mapsto h_{a,y}$ is $L_2(P_0)$-continuous, and $y\mapsto F_a(y)$ is continuous on $\cY$.
\item For every finite index set $\mathcal J=\{(a_1,y_1),\ldots,(a_m,y_m)\}$, the joint observed-data model
\[
        \cM_{\mathcal J}=\bigcap_{j=1}^m \cM_{a_j,y_j}
\]
has the same fixed $P_0$ Hilbert geometry, satisfies the bridge differentiability condition jointly for all components, and has tangent closure $L_2^0(P_0)$.  This joint saturation condition is used only for the finite-dimensional Loewner efficiency statement.  In a smaller joint model, the efficient covariance is obtained by projecting the vector of influence functions onto the smaller joint tangent closure.
\end{enumerate}
\end{assumption}

\begin{proposition}[Finite-rank process regularity]\label{prop:finite-vc}
Suppose $h_{a,y}(W,X)=b_W(W,X)^\top\theta_a(y)$ for a fixed bounded $d_W$-dimensional basis $b_W$, and each coordinate of $y\mapsto\theta_a(y)$ is continuous and has bounded variation on $\cY$.  If $q_a$ is bounded, then Assumption \ref{ass:process}(i)--(iii) holds.  The joint tangent-saturation requirement in Assumption \ref{ass:process}(iv) is a separate semiparametric model condition; it is not implied by finite-rank VC structure alone.
\end{proposition}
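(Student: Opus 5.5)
We verify items (i)--(iii) of Assumption \ref{ass:process} one at a time, working directly with the finite-rank representation $h_{a,y}=b_W^\top\theta_a(y)$.

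\emph{Item (i).} This is immediate. A bounded $q_a$ is essentially bounded, and since $\ind{A=a}q_a^2\le\norm{q_a}_\infty^2$ it is also square-integrable in $\cH_{a,Z}$.

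\emph{Item (ii): uniform boundedness.} Each coordinate of $\theta_a(\cdot)$ is continuous on the compact interval $\cY$, so $M_\theta=\sup_{y\in\cY}\norm{\theta_a(y)}<\infty$. Then $|h_{a,y}|\le M_\theta\norm{b_W}_\infty\sqrt{d_W}=:H_a$, a constant envelope.

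\emph{Item (ii): VC-type entropy.} I would route this through bounded variation rather than through a VC-subgraph argument. Write each coordinate as $\theta_{a,k}=\theta_{a,k}^+-\theta_{a,k}^-$, a difference of two bounded nondecreasing functions (Jordan decomposition). Then
\[
h_{a,y}=\sum_{k=1}^{d_W}\theta_{a,k}^+(y)\,b_{W,k}-\sum_{k=1}^{d_W}\theta_{a,k}^-(y)\,b_{W,k}.
\]
The main obstacle is that a finite-dimensional linear span is VC only up to coefficient boundedness. Here, however, the coefficient vector ranges over the bounded set $\{\theta_a(y):y\in\cY\}\subset\R^{d_W}$. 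So the class sits inside $\{b_W^\top t:\norm{t}\le M_\theta\}$. For any $Q$, $\norm{b_W^\top t-b_W^\top t'}_{Q,2}\le\norm{b_W}_\infty\sqrt{d_W}\norm{t-t'}$. A Euclidean $\epsilon$-net of the ball of radius $M_\theta$ therefore gives
\[
N\{\epsilon\norm{H_a}_{Q,2},\cH_a,L_2(Q)\}\le(3/\epsilon)^{d_W}
\]
uniformly in $Q$. This is the VC-type bound with $A=3$ and $v=d_W$. In this route bounded variation is not even needed; only boundedness of $\theta_a$ is used. I would remark that bounded variation gives an alternative bracketing route, because the monotone parts are ordered in $y$.

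\emph{Item (iii): $L_2$-continuity of the bridge map.} We have
\[
\norm{h_{a,y}-h_{a,y'}}_{W}\le\norm{b_W}_\infty\sqrt{d_W}\,\norm{\theta_a(y)-\theta_a(y')},
\]
which tends to $0$ as $y'\to y$ by coordinatewise continuity of $\theta_a$.

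\emph{Item (iii): continuity of $F_a$.} By Theorem \ref{thm:identification}, $F_a(y)=\E\{h_{a,y}(W,X)\}=\E\{b_W\}^\top\theta_a(y)$. This is a fixed vector paired with a continuous map, hence continuous on $\cY$.

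\emph{Item (iv).} I would close by stating explicitly that item (iv) concerns the tangent closure of the joint model $\cM_{\mathcal J}$. That is a property of the model class, not of the functions $h_{a,y}$ or $q_a$. Nothing in the finite-rank structure constrains the scores, so item (iv) is not implied, as the statement says.
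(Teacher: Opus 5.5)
Your proof is correct and uses the same core fact as the paper's sketch: a finite-dimensional linear class $\{b_W^\top t\}$ with a bounded basis and bounded coefficients has covering numbers that are polynomial uniformly in $Q$, and items (i) and (iii) then follow directly from boundedness of $q_a$ and continuity of $\theta_a$. Your Euclidean-net argument is more explicit than the paper's, giving $A=3$ and $v=d_W$, and it correctly shows that bounded variation is not needed for (i)--(iii). The paper's sketch instead invokes the bounded-variation curve and uses stability of VC-type classes under sums and products (with $\{\ind{Y\le y}\}$ and the bounded $\ind{A=a}q_a$) to conclude that the whole influence-function class $\Phi_a$ is VC type; that is more than the proposition asserts, but it is what the Donsker step in Theorem \ref{thm:cdf-process} actually uses.
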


\begin{theorem}[Efficient CDF process]
\label{thm:cdf-process}
Suppose Assumptions \ref{ass:causal}--\ref{ass:bridge-existence}, \ref{ass:local-model}, and \ref{ass:process} hold for both treatment arms, with Assumption \ref{ass:local-model} imposed uniformly over $y\in\cY$.  Then $\Phi_a$ is $P_0$-Donsker and the efficient Gaussian limit for the counterfactual CDF process is the tight mean-zero process $\mathbb G_a$ in $\ell^\infty(\cY)$ with covariance kernel
\[
        K_a(y,y')=\E\{\varphi_{a,y}(O)\varphi_{a,y'}(O)\}.
\]
For the two-arm process, the joint efficient covariance kernel is
\[
        K_{ab}(y,y')=\E\{\varphi_{a,y}(O)\varphi_{b,y'}(O)\},
        \qquad a,b\in\{0,1\}.
\]
More precisely, for any finite set $\mathcal J=\{(a_1,y_1),\ldots,(a_m,y_m)\}$, the canonical gradient vector of $(F_{a_1}(y_1),\ldots,F_{a_m}(y_m))$ in the joint model $\cM_{\mathcal J}$ is $(\varphi_{a_1,y_1},\ldots,\varphi_{a_m,y_m})$.  Hence the asymptotic covariance matrix of every regular estimator of this finite vector is no smaller, in the Loewner order, than $\E(\varphi_{\mathcal J}\varphi_{\mathcal J}^\top)$, where $\varphi_{\mathcal J}=(\varphi_{a_1,y_1},\ldots,\varphi_{a_m,y_m})^\top$.
\end{theorem}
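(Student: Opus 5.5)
The proof has three parts: a Donsker statement for $\Phi_a$, a finite-dimensional efficiency statement in the joint model $\cM_{\mathcal J}$, and the identification of the Gaussian limit with covariance $K_{ab}$.

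\textbf{Donsker property.} Write each element of $\Phi_a$ as
\[
\varphi_{a,y}(O)=\{1-\ind{A=a}q_a(Z,X)\}\,h_{a,y}(W,X)+\ind{A=a}q_a(Z,X)\,B_y-F_a(y).
\]
The class $\{B_y:y\in\cY\}$ consists of indicators of half-lines, so it is VC with envelope $1$. By Assumption \ref{ass:process}(ii), $\cH_a$ is uniformly bounded and VC type. The multipliers $1-\ind{A=a}q_a$ and $\ind{A=a}q_a$ are single fixed functions, essentially bounded by Assumption \ref{ass:process}(i). The constants $F_a(y)$ form a bounded one-dimensional class.

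Products of a VC-type class with a fixed bounded function are VC type, and sums of VC-type classes are VC type, with the uniform entropy exponents adding. Hence $\Phi_a$ has a bounded envelope and a polynomial uniform covering number. The uniform entropy integral is therefore finite, and $\Phi_a$ is $P_0$-Donsker (van der Vaart--Wellner, Thm 2.5.2). The union $\Phi_0\cup\Phi_1$ is Donsker by the same argument.

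Because $\Phi_a$ is Donsker, $\mathbb G_n$ converges in $\ell^\infty(\cY)$ to a tight Gaussian process $\mathbb G_a$ with covariance $K_a$. Tightness also gives $L_2(P_0)$-uniform continuity of the sample paths. This is consistent with Assumption \ref{ass:process}(iii): $L_2$-continuity of $y\mapsto h_{a,y}$, continuity of $F_a$, and $\norm{\ind{A=a}q_a(B_y-B_{y'})}_2\le \norm{q_a}_\infty\,|F_{a}^{obs}(y)-F_a^{obs}(y')|^{1/2}$ make $y\mapsto\varphi_{a,y}$ continuous in $L_2$.

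\textbf{Finite-dimensional canonical gradient.} Fix $\mathcal J=\{(a_1,y_1),\ldots,(a_m,y_m)\}$ and a regular submodel of $\cM_{\mathcal J}$ with score $s$. Assumption \ref{ass:process}(iv) states that bridge differentiability holds jointly. Each component $F_{a_j}(y_j)$ can therefore be differentiated exactly as in the "if" half of Theorem \ref{thm:regularity}.

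1. Write $\Psi=\E_P h$. Its derivative splits into $P_0\{(h_{a_j,y_j}-F_{a_j}(y_j))s\}+\ellfun(\dot h_s)$.
2. Write $\ellfun(\dot h_s)=\inner{\dot h_s}{T_{a_j}^*q_{a_j}}_W=\E\{\ind{A=a_j}q_{a_j}\dot h_s\}$.
3. Differentiate the bridge identity in Assumption \ref{ass:local-model}(ii), with $r=q_{a_j}$ extended by $L_2$ approximation. This converts $\E\{\ind{A=a_j}q_{a_j}\dot h_s\}$ into $P_0[\ind{A=a_j}q_{a_j}(B_{y_j}-h_{a_j,y_j})s]$.

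The pathwise derivative of component $j$ is therefore $P_0(\varphi_{a_j,y_j}s)$. Each $\varphi_{a_j,y_j}$ is mean zero, since $\E\varphi=F-F+0$ by the bridge equation, and lies in $L_2$, since all factors are bounded. The joint tangent closure is $L_2^0(P_0)$, so the projection step is trivial and $\varphi_{\mathcal J}$ is the canonical gradient vector.

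The Loewner bound then follows from the convolution theorem for regular estimators of a finite-dimensional pathwise differentiable parameter (van der Vaart 1998, Thm 25.20). It states that every regular limit equals $N(0,\E\varphi_{\mathcal J}\varphi_{\mathcal J}^\top)$ convolved with an independent law, so its covariance is no smaller in the Loewner order.

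\textbf{Assembling the efficient limit.} A process-level efficient limit is characterized by its finite-dimensional laws together with tightness. The finite-dimensional marginals of $\mathbb G_a$, and of the joint two-arm process, are $N(0,\E\varphi_{\mathcal J}\varphi_{\mathcal J}^\top)$, which are exactly the efficient laws from the previous part. Tightness comes from the Donsker property. Hence the efficient Gaussian limit is the tight process with kernel $K_{ab}$.

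The main obstacle is the uniformity over $y$ in the gradient step. The $L_2$-approximation extension of the bridge identity to $r=q_a$ must hold simultaneously for all $j$ in the joint model. I would handle this by noting that $q_a$ does not depend on $y$ and is bounded. With $q_a$ bounded, the extension needs no approximation, and the finitely many identities can be applied directly for each component.
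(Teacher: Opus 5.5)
Your proposal is correct and follows essentially the same route as the paper. You get the Donsker property from VC-type stability under sums and under products with the fixed bounded function $\ind{A=a}q_a$, and you obtain the canonical gradient vector by repeating the ``if'' half of Theorem~\ref{thm:regularity} componentwise in $\cM_{\mathcal J}$, whose tangent closure is $L_2^0(P_0)$ so no projection is needed; the finite-dimensional convolution theorem then gives the Loewner bound. Your observation that boundedness of $q_a$ lets you apply Assumption~\ref{ass:local-model}(ii) with $r=q_a$ directly, without the $L_2$-approximation clause, is a small but valid tightening of the paper's argument.
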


Let $\widehat h_{a,y}^{(-k)}$ and $\widehat q_a^{(-k)}$ be nuisance estimators trained outside fold $k$, and let $I_k$ denote the indices in fold $k$.  The cross-fitted one-step CDF estimator is
\[
        \widehat F_a(y)=\frac{1}{n}\sum_{k=1}^K\sum_{i\in I_k}
        \left[
        \widehat h_{a,y}^{(-k)}(W_i,X_i)
        +\ind{A_i=a}\widehat q_a^{(-k)}(Z_i,X_i)
        \{\ind{Y_i\le y}-\widehat h_{a,y}^{(-k)}(W_i,X_i)\}
        \right].
\]
For each fold, define the fold-specific estimated influence function
\[
        \widehat\varphi_{a,y}^{(-k)}(O_i)=
        \widehat h_{a,y}^{(-k)}(W_i,X_i)-\widehat F_a(y)
        +\ind{A_i=a}\widehat q_a^{(-k)}(Z_i,X_i)
        \{\ind{Y_i\le y}-\widehat h_{a,y}^{(-k)}(W_i,X_i)\}.
\]

\begin{assumption}[Cross-fitted nuisance rates, entropy, and equicontinuity]
\label{ass:nuisance}
For each treatment arm $a$, the following conditions hold fold-wise with probability tending to one, conditionally on the corresponding training samples.  All fold-specific statements below use the evaluation law $P_0$ and are uniform over the fixed number of folds.
\begin{enumerate}[(i)]
\item The nuisance estimators satisfy
\[
        r_{h,a,n}=\sup_{y\in\cY}\norm{\widehat h_{a,y}-h_{a,y}}_W=o_p(1),
        \qquad
        r_{q,a,n}=\norm{\widehat q_a-q_a}_{a,Z}=o_p(1),
\]
and $r_{h,a,n}r_{q,a,n}=o_p(n^{-1/2})$.
\item With $\widehat F_a$ replaced by the constant $F_a(y)$ for the purpose of empirical-process centering, define
\[
        \widehat\phi_{a,y}^{(-k)}(O)=
        \widehat h_{a,y}^{(-k)}(W,X)-F_a(y)
        +\ind{A=a}\widehat q_a^{(-k)}(Z,X)\{B_y-\widehat h_{a,y}^{(-k)}(W,X)\}.
\]
The random difference class
\[
        \mathcal F_{a,n}^{(-k)}=
        \{\widehat\phi_{a,y}^{(-k)}-\varphi_{a,y}:y\in\cY\}
\]
has an envelope $F_{a,n}^{(-k)}$ and radius $\delta_{a,n}$ satisfying
\[
        \norm{F_{a,n}^{(-k)}}_{P,2}=O_p(1),\qquad
        \sup_{f\in\mathcal F_{a,n}^{(-k)}}\norm{f}_{P,2}\le\delta_{a,n}=o_p(1).
\]
Moreover, for deterministic sequences $A_n,v_n$, uniformly over folds,
\[
        \sup_Q \log N\{\epsilon\norm{F_{a,n}^{(-k)}}_{Q,2},\mathcal F_{a,n}^{(-k)},L_2(Q)\}
        \le v_n\log(A_n/\epsilon),\qquad 0<\epsilon\le1,
\]
and, with
\[
        J_n(\delta)=\int_0^{\delta}\sqrt{1+v_n\log(A_n/u)}\,du,
\]
the localized maximal-inequality rate satisfies
\[
        J_n(\delta_{a,n})+\frac{J_n^2(\delta_{a,n})}{\delta_{a,n}^2\sqrt n}=o_p(1).
\]
For VC-type classes with fixed entropy constants, this condition reduces to $\delta_{a,n}\sqrt{\log(1/\delta_{a,n})}+\log(1/\delta_{a,n})/\sqrt n=o_p(1)$.
\item As a direct high-level condition used in the proof, the localized empirical-process remainder obeys
\[
        \sup_{y\in\cY}\left|(\mathbb P_{n,k}-P_0)\{\widehat\phi_{a,y}^{(-k)}-\varphi_{a,y}\}\right|=o_p(n^{-1/2})
\]
for each fold $k$, where $\mathbb P_{n,k}=|I_k|^{-1}\sum_{i\in I_k}\delta_{O_i}$ is the ordinary empirical measure on fold $k$.  The same conditional equicontinuity holds for the multiplier process jointly over $a\in\{0,1\}$.
\end{enumerate}
The entropy condition in item (ii) is a standard sufficient condition for item (iii) by the conditional maximal inequalities of \citet{vandervaartwellner1996} and \citet{kosorok2008introduction}.  Stating item (iii) explicitly prevents the estimator theorem from relying on an unstated Donsker argument for random estimated classes.
\end{assumption}

\begin{lemma}[Orthogonal drift identity]\label{lem:drift}
For any square-integrable $\bar h_y$ and $\bar q$,
\[
\begin{aligned}
&\E\big[\bar h_y(W,X)+\ind{A=a}\bar q(Z,X)\{B_y-\bar h_y(W,X)\}\big]-F_a(y)\nonumber\\
&\quad =-\E\big[\ind{A=a}\{\bar q(Z,X)-q_a(Z,X)\}\{\bar h_y(W,X)-h_{a,y}(W,X)\}\big]
\end{aligned}
\]
whenever $h_{a,y}$ and $q_a$ are true primal and dual bridges.
\end{lemma}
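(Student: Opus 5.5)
The plan is to expand the left-hand side, add and subtract the true bridges, and use the primal bridge equation, the dual bridge equation, and Theorem~\ref{thm:identification} to cancel every term except the cross product. Write $\Delta_h=\bar h_y-h_{a,y}$ and $\Delta_q=\bar q-q_a$. We assume throughout that the products appearing below are integrable. This holds, for instance, when $\bar h_y$, $h_{a,y}$, and $B_y$ are bounded, or when $\bar q$ and $q_a$ lie in $\cH_{a,Z}$ and the $h$-functions are square-integrable under $\ind{A=a}dP$ (Assumption~\ref{ass:observed-nondeg}).

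The first step rewrites the left-hand side as
\[
\E\{\bar h_y\}+\E\big[\ind{A=a}\bar q\{B_y-h_{a,y}\}\big]-\E\big[\ind{A=a}\bar q\,\Delta_h\big]-F_a(y).
\]
For the middle term, condition on $(A,Z,X)$. On $\{A=a\}$ the primal equation gives $\E\{B_y-h_{a,y}(W,X)\mid A=a,Z,X\}=0$, and $\bar q$ depends only on $(Z,X)$, so this term vanishes. Next split the last expectation as
\[
\E\big[\ind{A=a}\bar q\,\Delta_h\big]=\E\big[\ind{A=a}\Delta_q\Delta_h\big]+\E\big[\ind{A=a}q_a\Delta_h\big].
\]
For the second piece, condition on $(W,X)$. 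Since $\Delta_h$ depends only on $(W,X)$, the dual equation $\E\{\ind{A=a}q_a(Z,X)\mid W,X\}=1$ turns it into $\E\{\Delta_h(W,X)\}=\E\{\bar h_y\}-\E\{h_{a,y}\}$.

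Collecting terms, the left-hand side equals
\[
\E\{\bar h_y\}-\E\big[\ind{A=a}\Delta_q\Delta_h\big]-\E\{\bar h_y\}+\E\{h_{a,y}\}-F_a(y).
\]
By Theorem~\ref{thm:identification}, $\E\{h_{a,y}(W,X)\}=F_a(y)$, so this reduces to $-\E[\ind{A=a}\Delta_q\Delta_h]$, which is the claim. The identity holds for any primal solution $h_{a,y}$ and any dual solution $q_a$, not only the canonical ones: it uses only the two defining moment equations and the identification result, which itself holds for any primal solution by the target-uniqueness clause of Assumption~\ref{ass:bridge-existence}.

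The only delicate point is justifying the two conditioning steps. This means checking that products such as $\ind{A=a}\bar q\{B_y-h_{a,y}\}$ and $\ind{A=a}q_a\Delta_h$ are integrable, so that the tower property applies and the expectations can be split. Products of two $L_2$ functions are integrable by Cauchy--Schwarz in the weighted space $\cH_{a,Z}$, provided the $h$-functions are square-integrable under the $A=a$ weighting. I would state this as a standing integrability requirement; under the bounded-envelope conditions of Assumption~\ref{ass:process} it is automatic.
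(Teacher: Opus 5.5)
Your proof is correct and is essentially the paper's argument. You add and subtract the true bridges, kill $\E[\ind{A=a}\bar q\{B_y-h_{a,y}\}]$ with the primal moment, convert $\E[\ind{A=a}q_a\Delta_h]$ into $\E\{\Delta_h\}$ via $T_a^*q_a=1$, and finish with $F_a(y)=\E\{h_{a,y}(W,X)\}$. Your integrability caveat is in fact automatic: $\E\{\ind{A=a}h^2(W,X)\}\le\E\{h^2(W,X)\}$, so Cauchy--Schwarz in $\cH_{a,Z}$ covers every product that appears.
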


\begin{theorem}[Uniform doubly robust expansion]\label{thm:uniform-dr}
Under Assumptions \ref{ass:process} and \ref{ass:nuisance}, for each $a\in\{0,1\}$,
\[
        \sup_{y\in\cY}\left|\widehat F_a(y)-F_a(y)-(P_n-P)\varphi_{a,y}\right|
        =o_p(n^{-1/2}).
\]
Consequently,
\[
        \sqrt n\{\widehat F_a-F_a\}\dto \mathbb G_a
        \quad\text{in }\ell^\infty(\cY),
\]
with the efficient covariance kernel in Theorem \ref{thm:cdf-process}.  The deterministic drift is controlled by the product remainder
\[
        \sup_{y\in\cY}|R_{a,y}|
        \le r_{q,a,n}r_{h,a,n}+o_p(n^{-1/2})
\]
up to constants induced by the weighted $L_2$ norms.
\end{theorem}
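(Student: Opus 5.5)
The proof proceeds through the standard cross-fitting decomposition, done fold by fold and uniformly in $y$. For fold $k$, with empirical measure $\mathbb P_{n,k}$ and fold proportion $|I_k|/n\to 1/K$, I would write
\[
\mathbb P_{n,k}\widehat\phi_{a,y}^{(-k)}
=(\mathbb P_{n,k}-P_0)\varphi_{a,y}
+(\mathbb P_{n,k}-P_0)\{\widehat\phi_{a,y}^{(-k)}-\varphi_{a,y}\}
+P_0\widehat\phi_{a,y}^{(-k)} .
\]
Here I use $P_0\varphi_{a,y}=0$, which follows because $h_{a,y}$ solves the primal bridge equation and $F_a(y)=\E h_{a,y}$ by Theorem \ref{thm:identification}. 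Since $\widehat F_a(y)-F_a(y)=\sum_k (|I_k|/n)\,\mathbb P_{n,k}\widehat\phi^{(-k)}_{a,y}$ exactly, the centering constant cancels. Summing the first terms with weights $|I_k|/n$ reproduces $(P_n-P)\varphi_{a,y}$ exactly. The second term is $o_p(n^{-1/2})$ uniformly in $y$ by Assumption \ref{ass:nuisance}(iii). Item (ii) of that assumption supplies item (iii) through the conditional maximal inequality, applied given the training sample so that $\widehat\phi^{(-k)}$ is fixed.

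The third term is the drift $R_{a,y}^{(k)}=P_0\widehat\phi^{(-k)}_{a,y}$. Lemma \ref{lem:drift} applies with $\bar h_y=\widehat h^{(-k)}_{a,y}$ and $\bar q=\widehat q_a^{(-k)}$, treated as fixed functions given the training folds, and gives
\[
R_{a,y}^{(k)}=-\E\big[\ind{A=a}(\widehat q_a-q_a)(\widehat h_{a,y}-h_{a,y})\big].
\]
This is the step needing care, because $\widehat h_{a,y}-h_{a,y}$ lives in $\cH_W$ while the weight $\ind{A=a}$ sits on the $Z$ side. I would handle it by conditioning on $(A=a,Z,X)$ to get $\E[\ind{A=a}(\widehat q_a-q_a)\,T_a(\widehat h_{a,y}-h_{a,y})]$. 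Cauchy--Schwarz in $\cH_{a,Z}$ then bounds it by $r_{q,a,n}\norm{T_a(\widehat h-h)}_{a,Z}\le \norm{T_a}\,r_{q,a,n}r_{h,a,n}$, using boundedness of $T_a$ from Assumption \ref{ass:observed-nondeg}. Taking the supremum over $y$ gives $\sup_y|R_{a,y}|\lesssim r_{q,a,n}r_{h,a,n}=o_p(n^{-1/2})$. This is the displayed product-remainder bound, with the operator norm as the constant induced by the weighted norms.

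Combining the three pieces over the fixed $K$ folds gives the uniform expansion. For weak convergence, Theorem \ref{thm:cdf-process}, whose hypotheses are covered by Assumption \ref{ass:process}, shows $\Phi_a$ is $P_0$-Donsker, so $\sqrt n(P_n-P)\varphi_{a,\cdot}\dto\mathbb G_a$ in $\ell^\infty(\cY)$. The $o_p(n^{-1/2})$ uniform remainder then passes through by Slutsky's lemma in $\ell^\infty(\cY)$. The covariance kernel is $K_a$ by construction. If Theorem \ref{thm:cdf-process} were unavailable, I would verify Donskerness directly: $\varphi_{a,y}$ is a Lipschitz combination of the VC-type class $\cH_a$, the VC class $\{B_y\}$, and the fixed bounded $q_a$, which is enough.

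The main obstacle is the uniformity of the conditional empirical-process term over $y$ with random, training-dependent classes. The paper's Assumption \ref{ass:nuisance}(iii) essentially assumes this away. What still needs checking is that fold-conditional $o_p$ statements convert to unconditional ones, which follows by dominated convergence on conditional probabilities, and that the envelope and entropy bounds hold with probability tending to one.
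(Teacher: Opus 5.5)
Your proposal is correct and follows essentially the same route as the paper: a fold-wise split into the true-score empirical term (summing exactly to $(P_n-P)\varphi_{a,y}$), the random-class term (made $o_p(n^{-1/2})$ by Assumption~\ref{ass:nuisance}(iii)), and the drift (Lemma~\ref{lem:drift}), followed by the Donsker property of $\Phi_a$ for weak convergence. Your detour through $T_a$ in the drift bound is valid but unnecessary: direct Cauchy--Schwarz with $\E[\ind{A=a}(\widehat h_{a,y}-h_{a,y})^2]\le\norm{\widehat h_{a,y}-h_{a,y}}_W^2$ already gives the $r_{q,a,n}r_{h,a,n}$ bound, and in any case $\norm{T_a}\le 1$ by Jensen's inequality.
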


\subsubsection{Closed-form and convex finite-rank estimation}\label{sec:finite-rank}

This part translates the Hilbert-space theory into linear algebra.  Rectangular Tikhonov systems are useful for computation, but they do not identify arbitrary null-space components.  The main bridge-rate theorem is therefore stated for square full-rank systems; the preceding rectangular result is stated only for identifiable projections.

Let
\[
        b_W(W,X)\in\R^{d_W},\qquad b_Z(Z,X)\in\R^{d_Z}.
\]
Define
\[
        \Sigma_a=P\{\ind{A=a}b_Zb_W^\top\}\in\R^{d_Z\times d_W},
        \qquad
        \gamma_a(y)=P\{\ind{A=a}b_ZB_y\}\in\R^{d_Z},
\]
and
\[
        \mu_W=P\{b_W(W,X)\}\in\R^{d_W}.
\]
The finite-rank primal and dual equations are
\[
        \Sigma_a\theta_a(y)=\gamma_a(y),
        \qquad
        \Sigma_a^\top\alpha_a=\mu_W.
\]
If $d_Z=d_W$ and $\Sigma_a$ is nonsingular, the closed-form solution is
\[
        \theta_a(y)=\Sigma_a^{-1}\gamma_a(y),
        \qquad
        \alpha_a=\Sigma_a^{-\top}\mu_W.
\]

For rectangular or ill-conditioned systems, choose positive definite weighting matrices $\Omega_Z\in\R^{d_Z\times d_Z}$ and $\Omega_W\in\R^{d_W\times d_W}$, and ridge constants $\lambda_h,\lambda_q\ge0$.  Define
\[
\widehat\theta_{a,\lambda_h}(y)=
\argmin_{\theta\in\R^{d_W}}
\left\{\norm{\widehat\Sigma_a\theta-\widehat\gamma_a(y)}_{\Omega_Z}^2+\lambda_h\norm{\theta}_2^2\right\},
\]
\[
\widehat\alpha_{a,\lambda_q}=
\argmin_{\alpha\in\R^{d_Z}}
\left\{\norm{\widehat\Sigma_a^\top\alpha-\widehat\mu_W}_{\Omega_W}^2+\lambda_q\norm{\alpha}_2^2\right\},
\]
where $\norm{x}_{\Omega}^2=x^\top\Omega x$ and
\[
        \widehat\Sigma_a=P_n\{\ind{A=a}b_Zb_W^\top\},\quad
        \widehat\gamma_a(y)=P_n\{\ind{A=a}b_ZB_y\},\quad
        \widehat\mu_W=P_nb_W.
\]
When $\lambda_h>0$ and $\lambda_q>0$, the unique closed-form convex quadratic solutions are
\[
        \widehat\theta_{a,\lambda_h}(y)=(\widehat\Sigma_a^\top\Omega_Z\widehat\Sigma_a+\lambda_h I_{d_W})^{-1}\widehat\Sigma_a^\top\Omega_Z\widehat\gamma_a(y),
        \qquad
        \widehat\alpha_{a,\lambda_q}=(\widehat\Sigma_a\Omega_W\widehat\Sigma_a^\top+\lambda_q I_{d_Z})^{-1}\widehat\Sigma_a\Omega_W\widehat\mu_W.
\]
If $\lambda_h=0$ or $\lambda_q=0$ and the corresponding normal matrix is singular, the inverse in the display is replaced by the Moore--Penrose inverse, giving the minimum-norm least-squares solution.  Then $\widehat h_{a,y}=b_W^\top\widehat\theta_{a,\lambda_h}(y)$ and $\widehat q_a=b_Z^\top\widehat\alpha_{a,\lambda_q}$.

\begin{proposition}[Rectangular finite-rank identifiability]\label{prop:rectangular}
Let $\Sigma\in\R^{d_Z\times d_W}$ be fixed and consider the population equation $\Sigma\theta=\gamma$.  If $d_Z<d_W$ or $\Sigma$ is rank-deficient, no estimator based only on $(\Sigma,\gamma)$ can identify the component of $\theta$ in $\Null(\Sigma)$.  The Moore--Penrose solution $\theta^\dagger=\Sigma^+\gamma$ is the minimum-norm representative in $\Range(\Sigma^\top)$, and Tikhonov solutions converge to $\theta^\dagger$ as $\lambda\downarrow0$ when $\gamma\in\Range(\Sigma)$.  Thus rectangular bridge estimators control identifiable projections or residuals, not the full bridge norm, unless the true coefficient is restricted to $\Range(\Sigma^\top)$.
\end{proposition}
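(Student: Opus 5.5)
The plan is to reduce the statement to elementary finite-dimensional linear algebra: the singular value decomposition of $\Sigma$ and the orthogonal decomposition $\R^{d_W}=\Range(\Sigma^\top)\oplus\Null(\Sigma)$. I will prove the three claims in order: non-identifiability of the null-space component, the minimum-norm characterization of $\theta^\dagger$, and the Tikhonov limit. The concluding sentence then follows by combining them.

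\emph{Non-identifiability.} If $d_Z<d_W$, rank–nullity gives $\Null(\Sigma)\neq\{0\}$, and the same holds whenever $\Sigma$ is rank-deficient. Take any $v\in\Null(\Sigma)$ with $v\neq0$. For every $t$, the coefficients $\theta$ and $\theta+tv$ produce the same pair $(\Sigma,\Sigma\theta)=(\Sigma,\gamma)$. Any map $\mathcal E(\Sigma,\gamma)$ therefore returns the same value for both. It follows that no estimator can recover the projection $\Pi_{\Null(\Sigma)}\theta$, and no estimator can be consistent for $\theta$ uniformly over the fiber $\theta+\Null(\Sigma)$. This is the standard observational-equivalence argument.

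\emph{Minimum-norm characterization.} Write the SVD $\Sigma=\sum_{j\le r}\sigma_j u_jv_j^\top$ with $\sigma_j>0$, so that $\Sigma^+=\sum_{j\le r}\sigma_j^{-1}v_ju_j^\top$. Then $\theta^\dagger=\Sigma^+\gamma$ lies in $\operatorname{span}\{v_j\}=\Range(\Sigma^\top)$. Now assume $\gamma\in\Range(\Sigma)$, so that $\Sigma\theta^\dagger=\gamma$. Every solution of $\Sigma\theta=\gamma$ has the form $\theta^\dagger+w$ with $w\in\Null(\Sigma)=\Range(\Sigma^\top)^\perp$. By Pythagoras, $\|\theta^\dagger+w\|_2^2=\|\theta^\dagger\|_2^2+\|w\|_2^2$, so $\theta^\dagger$ is the unique minimum-norm solution. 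If $\gamma\notin\Range(\Sigma)$, the same argument applied to the normal equations $\Sigma^\top\Sigma\theta=\Sigma^\top\gamma$ shows that $\theta^\dagger$ is the minimum-norm least-squares solution.

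\emph{Tikhonov limit.} I take $\Omega_Z=I$ first. The ridge solution is
\[
\theta_\lambda=(\Sigma^\top\Sigma+\lambda I)^{-1}\Sigma^\top\gamma=\sum_{j\le r}\frac{\sigma_j}{\sigma_j^2+\lambda}(u_j^\top\gamma)v_j .
\]
Each coefficient converges to $\sigma_j^{-1}u_j^\top\gamma$ as $\lambda\downarrow0$. Since the sum is finite, $\theta_\lambda\to\theta^\dagger$. For a general positive definite weight $\Omega_Z$, I would substitute $\tilde\Sigma=\Omega_Z^{1/2}\Sigma$ and $\tilde\gamma=\Omega_Z^{1/2}\gamma$. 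This substitution leaves $\Null$ and $\Range(\cdot^\top)$ unchanged, because $\Omega_Z^{1/2}$ is invertible. When $\gamma\in\Range(\Sigma)$ the solution set is unaffected by the weight, so the limit is still the Euclidean minimum-norm solution $\theta^\dagger$.

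\emph{Conclusion and main difficulty.} Combining the three steps, rectangular estimators determine only $\Pi_{\Range(\Sigma^\top)}\theta$, equivalently the residual $\Sigma\theta-\gamma$. They control the full bridge norm only when the true $\theta$ lies in $\Range(\Sigma^\top)$, in which case $\theta=\theta^\dagger$. The only step needing care is this weighted case: I must confirm that the weighted Tikhonov limit is still $\Sigma^+\gamma$ and not a weighted pseudoinverse. This holds because the penalty is unweighted and the limit selects the minimum-$\|\cdot\|_2$ point of the exact solution set.
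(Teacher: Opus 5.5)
Your proposal is correct and follows the same route as the paper's proof. Both use the observational-equivalence argument $\Sigma(\theta+v)=\Sigma\theta$ for $v\in\Null(\Sigma)$, the minimum-norm property of $\Sigma^+\gamma$ from $\Range(\Sigma^\top)\perp\Null(\Sigma)$, and the singular-value expansion $\sum_j \sigma_j(\sigma_j^2+\lambda)^{-1}(u_j^\top\gamma)v_j\to\theta^\dagger$. Your reduction of a general weight $\Omega_Z$ to the identity case via $\Omega_Z^{1/2}\Sigma$ is also correct, and it is valid precisely because $\gamma\in\Range(\Sigma)$ makes the exact solution set independent of the weight. The paper's proof only treats unweighted Tikhonov and does not cover this case.
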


\begin{proposition}[Exact finite-rank plug-in and one-step equivalence]\label{prop:plugin-onestep}
Suppose $d_W=d_Z$, $\lambda_h=\lambda_q=0$, and $\widehat\Sigma_a$ is nonsingular.  Let
\[
        \widehat\theta_a(y)=\widehat\Sigma_a^{-1}\widehat\gamma_a(y),
        \qquad
        \widehat\alpha_a=\widehat\Sigma_a^{-\top}\widehat\mu_W.
\]
Then the closed-form plug-in estimator and the one-step estimator based on the same full-sample bridge solutions are algebraically identical:
\[
        \widehat\mu_W^\top\widehat\theta_a(y)
        =P_n\left[\widehat h_{a,y}(W,X)+\ind{A=a}\widehat q_a(Z,X)\{B_y-\widehat h_{a,y}(W,X)\}\right]
        =\widehat\alpha_a^\top\widehat\gamma_a(y).
\]
With ridge regularization or rectangular systems, this exact identity generally fails; the discrepancy is part of the regularization or projection bias.
\end{proposition}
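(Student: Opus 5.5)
The identity is pure linear algebra on the empirical moment matrices, so the plan is to expand the middle expression and substitute the two defining equations $\widehat\Sigma_a\widehat\theta_a(y)=\widehat\gamma_a(y)$ and $\widehat\Sigma_a^\top\widehat\alpha_a=\widehat\mu_W$. Both equations hold exactly, since $\widehat\Sigma_a$ is square and nonsingular and $\lambda_h=\lambda_q=0$.

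First write $\widehat h_{a,y}=b_W^\top\widehat\theta_a(y)$ and $\widehat q_a=b_Z^\top\widehat\alpha_a$. Because the coefficient vectors are fixed given the sample, the empirical average is linear in them, and the middle term splits as
\[
P_n\widehat h_{a,y}+P_n\{\ind{A=a}\widehat q_aB_y\}-P_n\{\ind{A=a}\widehat q_a\widehat h_{a,y}\}
=\widehat\mu_W^\top\widehat\theta_a(y)+\widehat\alpha_a^\top\widehat\gamma_a(y)-\widehat\alpha_a^\top\widehat\Sigma_a\widehat\theta_a(y).
\]
This uses $P_n\{\ind{A=a}b_Zb_W^\top\}=\widehat\Sigma_a$ and $P_n\{\ind{A=a}b_ZB_y\}=\widehat\gamma_a(y)$.

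Next, the cross term $\widehat\alpha_a^\top\widehat\Sigma_a\widehat\theta_a(y)$ can be evaluated in two ways. Using the primal equation, it equals $\widehat\alpha_a^\top\widehat\gamma_a(y)$. Using the dual equation, it equals $(\widehat\Sigma_a^\top\widehat\alpha_a)^\top\widehat\theta_a(y)=\widehat\mu_W^\top\widehat\theta_a(y)$. The second evaluation gives $\widehat\mu_W^\top\widehat\theta_a(y)=\widehat\alpha_a^\top\widehat\gamma_a(y)$ directly, which is the outer equality. Substituting the first evaluation into the display cancels the second and third terms and leaves $\widehat\mu_W^\top\widehat\theta_a(y)$, which is the first equality. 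This is the empirical counterpart of the drift identity in Lemma \ref{lem:drift}: the plug-in residual vanishes exactly because both estimated bridges solve their empirical moment equations.

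For the final sentence about ridge or rectangular systems, the same expansion applies. Now the primal residual $\widehat\gamma_a(y)-\widehat\Sigma_a\widehat\theta_{a,\lambda_h}(y)$ and the dual residual $\widehat\mu_W-\widehat\Sigma_a^\top\widehat\alpha_{a,\lambda_q}$ are generally nonzero. The discrepancies between the three quantities are then the inner products $\widehat\alpha^\top(\widehat\gamma-\widehat\Sigma\widehat\theta)$ and $(\widehat\mu_W-\widehat\Sigma^\top\widehat\alpha)^\top\widehat\theta$.

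A simple $1\times1$ example with $\lambda>0$ shows these residuals need not vanish. Take $\Omega=1$, $\widehat\Sigma=\sigma$ and $\widehat\gamma=\gamma$, so that $\widehat\theta=\sigma\gamma/(\sigma^2+\lambda)\ne\gamma/\sigma$. This suffices, because the claim is only that the identity "generally fails." There is no real obstacle here; the only care needed is to treat the coefficients as sample-fixed when pulling them out of $P_n$, and this holds because the same full sample defines both the bridges and the averages.
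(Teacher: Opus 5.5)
Your proof is correct and is the paper's argument: expand the one-step average as $\widehat\mu_W^\top\widehat\theta_a(y)+\widehat\alpha_a^\top\widehat\gamma_a(y)-\widehat\alpha_a^\top\widehat\Sigma_a\widehat\theta_a(y)$, then evaluate the cross term once with $\widehat\Sigma_a\widehat\theta_a(y)=\widehat\gamma_a(y)$ and once with $\widehat\Sigma_a^\top\widehat\alpha_a=\widehat\mu_W$. Your residual decomposition for the ridge remark goes beyond the paper, which does not prove it; in your $1\times1$ example add that $\sigma,\mu,\gamma\neq0$, so that the discrepancy $\widehat\alpha\{\widehat\gamma-\sigma\widehat\theta\}=\sigma\mu\gamma\lambda/(\sigma^2+\lambda)^2$ is actually nonzero (a nonzero residual alone is not enough), and note that in that example the two plug-ins still coincide, so the equality that breaks is the one between the one-step and the plug-ins.
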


\begin{assumption}[Square finite-rank concentration and weak-proxy strength]
\label{ass:finite-rank}
For each arm $a$, $d_W=d_Z=d=d_n$ and the following conditions hold for the full sample and, with the same rates, for every training fold used by the cross-fitted estimator.  The number of folds is fixed, so each training sample has size proportional to $n$.
\begin{enumerate}[(i)]
\item The basis vectors are uniformly bounded or sub-Gaussian.  The Gram matrices $P(b_Wb_W^\top)$ and $P\{\ind{A=a}b_Zb_Z^\top\}$ have eigenvalues bounded above and away from zero on their finite-rank spans.
\item The population matrix $\Sigma_a$ is nonsingular with $\sigma_{\min}(\Sigma_a)\ge\kappa_{a,n}>0$.
\item Uniformly over $y\in\cY$ and uniformly over the full sample and all training folds,
\[
        \norm{\widehat\Sigma_a-\Sigma_a}_{\opnorm}=O_p\left(\sqrt{d/n}\right),
        \quad
        \sup_{y\in\cY}\norm{\widehat\gamma_a(y)-\gamma_a(y)}_2=O_p\left(\sqrt{d\log n/n}\right),
\]
and $\norm{\widehat\mu_W-\mu_W}_2=O_p(\sqrt{d/n})$.
\item Define the population finite-rank pseudo-bridges by
\[
        \theta_a^{(d)}(y)=\Sigma_a^{-1}\gamma_a(y),\qquad
        \alpha_a^{(d)}=\Sigma_a^{-\top}\mu_W,
\]
and $h_{a,y}^{(d)}=b_W^\top\theta_a^{(d)}(y)$, $q_a^{(d)}=b_Z^\top\alpha_a^{(d)}$.  They satisfy
\[
        b_{h,a,n}=\sup_{y\in\cY}\norm{h_{a,y}^{(d)}-h_{a,y}}_W,
        \qquad
        b_{q,a,n}=\norm{q_a^{(d)}-q_a}_{a,Z},
\]
with finite coefficient sizes
\[
        M_{h,a,n}=\sup_{y\in\cY}\norm{\theta_a^{(d)}(y)}_2,
        \qquad M_{q,a,n}=\norm{\alpha_a^{(d)}}_2.
\]
\end{enumerate}
\end{assumption}

\begin{theorem}[Square finite-rank nuisance rates under weak proxies]
\label{thm:sieve-rate}
Under Assumption \ref{ass:finite-rank}, if $\sqrt{d/n}=o(\kappa_{a,n})$, then the unregularized square finite-rank estimators computed on the full sample or on any training fold satisfy
\[
        \sup_{y\in\cY}\norm{\widehat h_{a,y}-h_{a,y}}_W
        =O_p\left(
        \kappa_{a,n}^{-1}\sqrt{d\log n/n}
        +\kappa_{a,n}^{-1}M_{h,a,n}\sqrt{d/n}
        +b_{h,a,n}
        \right),
\]
\[
        \norm{\widehat q_a-q_a}_{a,Z}
        =O_p\left(
        \kappa_{a,n}^{-1}\sqrt{d/n}
        +\kappa_{a,n}^{-1}M_{q,a,n}\sqrt{d/n}
        +b_{q,a,n}
        \right).
\]
If ridge estimators are used in the square system with $\Omega_Z=\Omega_W=I$ and $\lambda_h,\lambda_q=o(\kappa_{a,n}^2)$, the same rates hold with additional ridge-bias terms of order $(\lambda_h/\kappa_{a,n}^2)M_{h,a,n}$ and $(\lambda_q/\kappa_{a,n}^2)M_{q,a,n}$.  Deterministic well-conditioned weighting matrices only alter constants.  Therefore the cross-fitted one-step CDF estimator is uniformly root-$n$ and efficient if the product of the two displayed rates is $o(n^{-1/2})$ and Assumption \ref{ass:nuisance}'s entropy/equicontinuity conditions hold.
\end{theorem}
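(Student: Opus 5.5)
The plan is to split each nuisance error into a stochastic finite-rank part and a deterministic sieve-bias part. For the outcome bridge, write $\widehat h_{a,y}-h_{a,y}=b_W^\top\{\widehat\theta_a(y)-\theta_a^{(d)}(y)\}+\{h_{a,y}^{(d)}-h_{a,y}\}$. The second term is bounded uniformly in $y$ by $b_{h,a,n}$ by definition. For the first term, Assumption \ref{ass:finite-rank}(i) says the eigenvalues of $P(b_Wb_W^\top)$ are bounded above, so $\norm{b_W^\top\delta}_W\lesssim\norm{\delta}_2$ and the problem reduces to Euclidean control of coefficient errors. The dual bridge is handled the same way, using the Gram matrix $P\{\ind{A=a}b_Zb_Z^\top\}$ and the norm $\norm{\cdot}_{a,Z}$.

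The core step is a matrix perturbation bound. By Weyl's inequality, $\sigma_{\min}(\widehat\Sigma_a)\ge\kappa_{a,n}-\norm{\widehat\Sigma_a-\Sigma_a}_{\opnorm}$. Under Assumption \ref{ass:finite-rank}(iii) and the hypothesis $\sqrt{d/n}=o(\kappa_{a,n})$, this gives $\sigma_{\min}(\widehat\Sigma_a)\ge\kappa_{a,n}/2$ with probability tending to one, so $\widehat\Sigma_a$ is invertible and $\norm{\widehat\Sigma_a^{-1}}_{\opnorm}\le 2/\kappa_{a,n}$. The identity
\[
\widehat\theta_a(y)-\theta_a^{(d)}(y)=\widehat\Sigma_a^{-1}\{\widehat\gamma_a(y)-\gamma_a(y)\}-\widehat\Sigma_a^{-1}(\widehat\Sigma_a-\Sigma_a)\theta_a^{(d)}(y)
\]
then yields $\sup_y\norm{\widehat\theta_a(y)-\theta_a^{(d)}(y)}_2\lesssim\kappa_{a,n}^{-1}\sup_y\norm{\widehat\gamma_a-\gamma_a}_2+\kappa_{a,n}^{-1}\norm{\widehat\Sigma_a-\Sigma_a}_{\opnorm}M_{h,a,n}$. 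Substituting the rates of Assumption \ref{ass:finite-rank}(iii) gives the first display. For the dual, the same identity is applied with $\widehat\Sigma_a^\top$ and $\widehat\mu_W$, which gives the $\kappa_{a,n}^{-1}\sqrt{d/n}$ and $\kappa_{a,n}^{-1}M_{q,a,n}\sqrt{d/n}$ terms. Because Assumption \ref{ass:finite-rank}(iii) holds uniformly over the full sample and the training folds, and the number of folds is fixed, a union bound extends the result to every fold.

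For the ridge variant with $\Omega=I$, the ridge solution is compared with the unregularized solution $\widehat\theta_a(y)$. Writing $\widehat G=\widehat\Sigma_a^\top\widehat\Sigma_a$, the difference is
\[
(\widehat G+\lambda_h I)^{-1}\widehat G\widehat\theta_a(y)-\widehat\theta_a(y)=-\lambda_h(\widehat G+\lambda_h I)^{-1}\widehat\theta_a(y).
\]
Its norm is at most $(\lambda_h/\sigma_{\min}(\widehat\Sigma_a)^2)\norm{\widehat\theta_a(y)}_2\lesssim(\lambda_h/\kappa_{a,n}^2)\{M_{h,a,n}+o_p(1)\}$, with $\lambda_h=o(\kappa_{a,n}^2)$ keeping this term small. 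For fixed well-conditioned $\Omega$, the same argument applies after factoring $\Omega^{1/2}$, which changes only constants.

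The efficiency conclusion follows by plugging these rates into Assumption \ref{ass:nuisance}(i). If the product of the two displayed rates is $o(n^{-1/2})$, then $r_{h,a,n}r_{q,a,n}=o_p(n^{-1/2})$. Together with the assumed entropy and equicontinuity conditions, Theorem \ref{thm:uniform-dr} then gives the uniform efficient expansion.

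I expect the main obstacle to be the uniformity in $y$ of the $\widehat\gamma_a$ deviation combined with fold-wise validity. This is taken as a hypothesis in Assumption \ref{ass:finite-rank}(iii), so the remaining work is to make sure the invertibility event is common across all $y$. It is, because $\widehat\Sigma_a$ does not depend on $y$. A second point needing care is the dual rate: the first term there should use the $\sqrt{d/n}$ rate of $\widehat\mu_W$, and I would verify that the two dual terms combine as stated.
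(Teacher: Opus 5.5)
Your proposal is correct and follows the paper's proof essentially step for step. Both use the same high-probability event $\norm{\widehat\Sigma_a-\Sigma_a}_{\opnorm}\le\kappa_{a,n}/2$ giving $\norm{\widehat\Sigma_a^{-1}}_{\opnorm}\le 2\kappa_{a,n}^{-1}$, the same perturbation identity for $\widehat\theta_a(y)-\theta_a^{(d)}(y)$ and its dual analogue, the Gram transfer plus sieve bias, the same resolvent identity for ridge (your $-\lambda_h(\widehat G+\lambda_h I)^{-1}\widehat\theta_a(y)$ is the paper's expression), and substitution into Theorem \ref{thm:uniform-dr}. One small tightening: your bounds actually give $\sup_y\norm{\widehat\theta_a(y)}_2\le M_{h,a,n}\{1+o_p(1)\}+O_p(\kappa_{a,n}^{-1}\sqrt{d\log n/n})$ rather than $M_{h,a,n}+o_p(1)$, but because this is multiplied by $\lambda_h/\kappa_{a,n}^2=o(1)$, the excess is absorbed into the stochastic terms already in the display.
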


\begin{corollary}[Polynomial weak-proxy scaling]\label{cor:kappa}
Suppose approximation bias is negligible, ridge bias is undersmoothed, $M_{h,a,n}$ and $M_{q,a,n}$ are bounded, and
\[
        \kappa_{a,n}\asymp d^{-\alpha}
\]
for some $\alpha\ge0$.  Then the leading stochastic part of the product remainder is of order
\[
        \frac{d^{1+2\alpha}\sqrt{\log n}}{n}.
\]
A sufficient condition for uniform root-$n$ inference is therefore
\[
        d^{1+2\alpha}\sqrt{\log n}=o(n^{1/2}).
\]
Thus weaker proxy relevance, represented by larger $\alpha$, sharply reduces the admissible sieve dimension.
\end{corollary}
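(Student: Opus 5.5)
The corollary is a direct substitution into the rate bounds of Theorem \ref{thm:sieve-rate}, followed by the product-rate requirement in Assumption \ref{ass:nuisance}(i). Under the stated simplifications, $b_{h,a,n}=b_{q,a,n}=0$ up to negligible terms. The ridge terms $(\lambda/\kappa_{a,n}^2)M$ are undersmoothed, so they are dominated by the stochastic terms. The coefficient sizes $M_{h,a,n},M_{q,a,n}$ are bounded. Hence the outcome-bridge rate reduces to
\[
r_{h,a,n}=O_p\bigl(\kappa_{a,n}^{-1}\sqrt{d\log n/n}\bigr),
\]
because the term $\kappa_{a,n}^{-1}M_{h,a,n}\sqrt{d/n}$ is of smaller order than $\kappa_{a,n}^{-1}\sqrt{d\log n/n}$. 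The dual-bridge rate reduces to $r_{q,a,n}=O_p(\kappa_{a,n}^{-1}\sqrt{d/n})$.

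Multiplying the two rates gives
\[
r_{h,a,n}r_{q,a,n}=O_p\bigl(\kappa_{a,n}^{-2}\,d\sqrt{\log n}/n\bigr).
\]
Substituting $\kappa_{a,n}\asymp d^{-\alpha}$, so that $\kappa_{a,n}^{-2}\asymp d^{2\alpha}$, yields the displayed order $d^{1+2\alpha}\sqrt{\log n}/n$. This product bounds the drift term $\sup_y|R_{a,y}|$ in Theorem \ref{thm:uniform-dr}, by Lemma \ref{lem:drift} and Cauchy--Schwarz in the weighted norms. Requiring it to be $o(n^{-1/2})$ is exactly $d^{1+2\alpha}\sqrt{\log n}=o(n^{1/2})$.

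It remains to check that Theorem \ref{thm:sieve-rate} applies, which needs $\sqrt{d/n}=o(\kappa_{a,n})$, i.e.\ $d^{1+2\alpha}=o(n)$. This follows from the sufficient condition, since $d^{1+2\alpha}\le d^{1+2\alpha}\sqrt{\log n}=o(n^{1/2})=o(n)$. The same check also gives the consistency requirements $r_{h,a,n},r_{q,a,n}=o_p(1)$ in Assumption \ref{ass:nuisance}(i). Because training folds have size proportional to $n$ and the rates hold fold-wise, the cross-fitted version inherits the same bound.

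The main obstacle is bookkeeping rather than new mathematics. One point to state explicitly: the condition is sufficient only for the drift part. The entropy and equicontinuity conditions of Assumption \ref{ass:nuisance}(ii)--(iii) must still hold, and the corollary should be read as conditioning on them, exactly as in Theorem \ref{thm:sieve-rate}. The last sentence, that larger $\alpha$ shrinks the admissible $d$, is then immediate because the condition reads $d=o\bigl((n^{1/2}/\sqrt{\log n})^{1/(1+2\alpha)}\bigr)$.
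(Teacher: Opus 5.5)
Your proposal is correct and follows the paper's own argument. The paper substitutes $\kappa_{a,n}^{-1}\asymp d^{\alpha}$ into the two rates of Theorem \ref{thm:sieve-rate}, multiplies them to get $d^{1+2\alpha}\sqrt{\log n}/n$, and imposes the $o(n^{-1/2})$ product-rate requirement of Theorem \ref{thm:uniform-dr}, exactly as you do. You also check something the paper leaves implicit: that the sufficient condition itself implies $\sqrt{d/n}=o(\kappa_{a,n})$ and consistency of both nuisance estimators, so Theorem \ref{thm:sieve-rate} actually applies.
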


\subsection{Shape constraints, density-free quantile bands, and shortfall risk}\label{sec:theory-bands-risk}

\subsubsection{Monotonicity and density-free quantile bands}\label{sec:bands}

The one-step estimator has an efficient first-order expansion but need not be monotone in finite samples.  Shape projection is therefore useful for presentation and inversion.  The key inferential point is that quantile bands below are obtained by inverting CDF bands, not by estimating the counterfactual density.

On a grid $y_1<\cdots<y_K$, define the isotonic projection
\[
        (\widetilde F_a(y_1),\ldots,\widetilde F_a(y_K))
        =\argmin_{0\le v_1\le\cdots\le v_K\le1}
        \sum_{k=1}^K\omega_k\{\widehat F_a(y_k)-v_k\}^2,
\]
with deterministic weights $\omega_k>0$.  This is a convex quadratic problem and is solved by the pool-adjacent-violators algorithm.

\begin{proposition}[Projection stability]\label{prop:isotonic}
Let $F_a^K=(F_a(y_1),\ldots,F_a(y_K))$, which belongs to the monotone cone.  Then the isotonic projection satisfies
\[
        \sum_{k=1}^K\omega_k\{\widetilde F_a(y_k)-F_a(y_k)\}^2
        \le
        \sum_{k=1}^K\omega_k\{\widehat F_a(y_k)-F_a(y_k)\}^2.
\]
If the projection is inactive with probability tending to one on a strict interior monotonicity region, then the projected and unprojected estimators are first-order equivalent on that region.  Without such an inactivity condition, the projection is a shape-enforcing map and is not assumed to be $o_p(n^{-1/2})$ away from the unprojected estimator.
\end{proposition}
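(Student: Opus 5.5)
The inequality follows from the fact that $\widetilde F_a$ is the metric projection onto a closed convex set that contains $F_a^K$. Work in $\R^K$ with the weighted inner product $\inner{u}{v}_\omega=\sum_k\omega_ku_kv_k$, which is a genuine inner product because $\omega_k>0$. Let $\mathcal C=\{v\in\R^K:0\le v_1\le\cdots\le v_K\le1\}$. This set is a closed convex polytope, an intersection of finitely many closed half-spaces. The estimator $\widetilde F_a$ is exactly the $\norm{\cdot}_\omega$-projection $\Pi_{\mathcal C}\widehat F_a^K$ of $\widehat F_a^K=(\widehat F_a(y_1),\ldots,\widehat F_a(y_K))$, and the projection exists and is unique by strict convexity of the objective. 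Since $F_a$ is a distribution function evaluated at increasing grid points, $F_a^K\in\mathcal C$.

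First I will invoke the variational characterization of the projection: $\inner{\widehat F_a^K-\widetilde F_a}{v-\widetilde F_a}_\omega\le0$ for every $v\in\mathcal C$. Taking $v=F_a^K$ and expanding
\[
\norm{\widehat F_a^K-F_a^K}_\omega^2=\norm{\widehat F_a^K-\widetilde F_a}_\omega^2+\norm{\widetilde F_a-F_a^K}_\omega^2+2\inner{\widehat F_a^K-\widetilde F_a}{\widetilde F_a-F_a^K}_\omega,
\]
the cross term is nonnegative by the variational inequality. This gives the Pythagorean-type bound $\norm{\widetilde F_a-F_a^K}_\omega^2\le\norm{\widehat F_a^K-F_a^K}_\omega^2$, which is the displayed inequality. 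An equivalent route is to note that projections onto closed convex sets are nonexpansive and that $\Pi_{\mathcal C}F_a^K=F_a^K$. I will also note in passing that PAVA computes this projection once the box constraint is handled by clipping the unconstrained isotonic fit to $[0,1]$. The inequality itself does not depend on the algorithm.

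For the first-order equivalence claim, let $E_n$ be the event that $\widehat F_a^K$ already lies in the relevant part of $\mathcal C$, meaning that no monotonicity or box constraint binds on the interior region. If on $E_n$ the projection leaves the coordinates in that region unchanged, then $\widetilde F_a(y_k)-\widehat F_a(y_k)=0$ there. When $\Pp(E_n)\to1$, it follows that $\sqrt n\max_k|\widetilde F_a(y_k)-\widehat F_a(y_k)|$ is zero with probability tending to one, hence $o_p(1)$. The expansion of Theorem \ref{thm:uniform-dr} then transfers to $\widetilde F_a$ on that region.

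The main obstacle is the localization. PAVA pools blocks, so inactivity has to mean that the blocks intersecting the region are singletons. I would state this as the inactivity hypothesis itself rather than try to derive it. A sufficient condition is strict monotonicity with gaps $F_a(y_{k+1})-F_a(y_k)$ bounded away from zero together with uniform $O_p(n^{-1/2})$ consistency. The final sentence of the proposition needs no proof, since it is a disclaimer.
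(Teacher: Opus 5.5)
Your proposal is correct and takes essentially the same approach as the paper. The paper invokes nonexpansiveness of the $\omega$-weighted metric projection onto the closed convex set $\{0\le v_1\le\cdots\le v_K\le1\}$ together with the fact that $F_a^K$ lies in that set, and declares first-order equivalence immediate because the projection equals the identity on the region with probability tending to one. Your variational-inequality argument is a one-point version of that nonexpansiveness (it even gives the slightly stronger bound $\norm{\widetilde F_a-F_a^K}_\omega^2+\norm{\widehat F_a^K-\widetilde F_a}_\omega^2\le\norm{\widehat F_a^K-F_a^K}_\omega^2$), and your treatment of inactivity as an explicit hypothesis matches how the paper uses it.
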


Let $\xi_1,\ldots,\xi_n$ be independent multipliers with mean zero, variance one, and sub-exponential tails, independent of the data.  Define the cross-fitted multiplier process
\[
        \widehat{\mathbb G}_{a}^{\xi}(y)=\frac{1}{\sqrt n}\sum_{k=1}^K\sum_{i\in I_k}\xi_i\widehat\varphi_{a,y}^{(-k)}(O_i).
\]
Let $\widehat c_{1-\alpha}$ be the conditional $(1-\alpha)$ quantile of
\[
        \max_{a\in\{0,1\}}\sup_{y\in\cY}\left|\widehat{\mathbb G}_{a}^{\xi}(y)\right|.
\]
Define unprojected simultaneous CDF bands
\[
        L_a(y)=\max\{0,\widehat F_a(y)-\widehat c_{1-\alpha}/\sqrt n\},
        \qquad
        U_a(y)=\min\{1,\widehat F_a(y)+\widehat c_{1-\alpha}/\sqrt n\}.
\]

\begin{assumption}[Multiplier convergence and anti-concentration]\label{ass:bootstrap}
Conditionally on the data,
\[
        \{\widehat{\mathbb G}_{a}^{\xi}:a\in\{0,1\}\}
        \dto
        \{\mathbb G_a:a\in\{0,1\}\}
\]
in probability in $\ell^\infty(\{0,1\}\times\cY)$, where $\mathbb G_a$ is the Gaussian process in Theorem \ref{thm:cdf-process}.  The distribution of
\[
        \max_{a\in\{0,1\}}\sup_{y\in\cY}|\mathbb G_a(y)|
\]
is continuous at its $(1-\alpha)$ quantile.
\end{assumption}

\begin{theorem}[Multiplier CDF bands]\label{thm:cdf-bands}
Under the conditions of Theorem \ref{thm:uniform-dr} and Assumption \ref{ass:bootstrap},
\[
        \Pp\left\{L_a(y)\le F_a(y)\le U_a(y)
        \text{ for all }a\in\{0,1\},\ y\in\cY\right\}\to 1-\alpha.
\]
If isotonic projection is first-order inactive, the same conclusion holds with $\widetilde F_a$ in place of $\widehat F_a$.  Otherwise, projection should be treated as shape enforcement; inferential coverage follows only for bands that are shown to contain the unprojected CDF band or otherwise satisfy the deterministic coverage condition in Lemma \ref{lem:cdf-inversion}.
\end{theorem}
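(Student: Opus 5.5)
The coverage statement for $(L_a,U_a)$ reduces to a statement about the sup-norm of the studentized process. Up to truncation, the event $\{L_a(y)\le F_a(y)\le U_a(y)\ \forall a,y\}$ coincides with
\[
\max_{a\in\{0,1\}}\sup_{y\in\cY}\sqrt n\,|\widehat F_a(y)-F_a(y)|\le \widehat c_{1-\alpha}.
\]
Truncation at $0$ and $1$ never excludes $F_a(y)\in[0,1]$: if $F_a(y)\le \widehat F_a(y)+\widehat c_{1-\alpha}/\sqrt n$, then $F_a(y)\le\min\{1,\cdot\}$ automatically because $F_a(y)\le 1$, and symmetrically for $L_a$. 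So the two events are identical, and it suffices to show
\[
\Pp\{S_n\le\widehat c_{1-\alpha}\}\to1-\alpha,\qquad S_n=\max_a\sup_y\sqrt n|\widehat F_a(y)-F_a(y)|.
\]

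The argument then proceeds in three steps.

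\emph{Step 1 (joint limit of $S_n$).} Apply Theorem \ref{thm:uniform-dr} for each $a$, giving $\sqrt n(\widehat F_a-F_a)=\mathbb G_n\varphi_{a,\cdot}+o_p(1)$ uniformly in $y$. Each class $\Phi_a$ is Donsker by Theorem \ref{thm:cdf-process}, and the finite union $\Phi_0\cup\Phi_1$ is again Donsker. Hence the pair converges jointly in $\ell^\infty(\{0,1\}\times\cY)$ to $(\mathbb G_0,\mathbb G_1)$ with kernel $K_{ab}$. The map $z\mapsto\max_a\sup_y|z_a(y)|$ is continuous (Lipschitz) on this space, so the continuous mapping theorem gives $S_n\dto S=\max_a\sup_y|\mathbb G_a(y)|$.

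\emph{Step 2 (the critical value).} Let $c_{1-\alpha}$ denote the $(1-\alpha)$ quantile of $S$. Assumption \ref{ass:bootstrap} gives conditional weak convergence of the multiplier process. Applying the same Lipschitz functional, the conditional law of $\max_a\sup_y|\widehat{\mathbb G}^\xi_a(y)|$ converges in probability to the law of $S$, for instance in bounded-Lipschitz distance. Continuity of the law of $S$ at $c_{1-\alpha}$, which is the second part of the assumption, then yields $\widehat c_{1-\alpha}\to c_{1-\alpha}$ in probability, by the standard quantile-convergence lemma (e.g. van der Vaart, Lemma 21.2, in its conditional form).

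\emph{Step 3 (coverage).} By Slutsky, $S_n-\widehat c_{1-\alpha}\dto S-c_{1-\alpha}$. Since $c_{1-\alpha}$ is a continuity point of the law of $S$, the portmanteau theorem gives $\Pp\{S_n\le\widehat c_{1-\alpha}\}\to\Pp\{S\le c_{1-\alpha}\}=1-\alpha$.

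For the projected statement, note that under first-order inactivity $\sup_y\sqrt n|\widetilde F_a-\widehat F_a|=o_p(1)$ on the relevant region. Thus $S_n$ changes by $o_p(1)$, and Steps 1--3 go through unchanged. The final sentence of the theorem is a scope caveat rather than a claim, so nothing more needs to be proved for it.

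The main obstacle is Step 2. Assumption \ref{ass:bootstrap} is phrased as convergence in probability conditionally on the data, and turning that into convergence of the conditional quantile needs care. I would first express both laws through bounded-Lipschitz metrics, then use continuity at $c_{1-\alpha}$ to sandwich $\widehat c_{1-\alpha}$ between $c_{1-\alpha}\pm\eta$ with probability tending to one, and finally let $\eta\downarrow0$.
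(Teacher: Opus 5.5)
Your proposal is correct and follows the paper's own proof: the joint linearization from Theorem~\ref{thm:uniform-dr}, continuous mapping through $\max_a\sup_y|\cdot|$, consistency of $\widehat c_{1-\alpha}$ from Assumption~\ref{ass:bootstrap}, and Slutsky. It also makes explicit several details the paper leaves implicit, namely that truncation at $0$ and $1$ is harmless, that $\Phi_0\cup\Phi_1$ is Donsker, and the bounded-Lipschitz transfer through the sup functional. One minor precision: van der Vaart's Lemma~21.2 needs continuity of the quantile function of $S$ at $1-\alpha$, not merely continuity of its law at $c_{1-\alpha}$; this holds here because the support of the tight centered Gaussian $(\mathbb G_0,\mathbb G_1)$ is a closed linear subspace (nondegenerate, as the continuity assumption forces), which makes the law of $S$ strictly increasing on $[0,\infty)$.
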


The quantile bands do not require estimating the counterfactual density.  If a band is not monotone in finite samples, monotone envelopes may be used for numerical stability:
\[
        U_a^\uparrow(y)=\sup_{t\le y}U_a(t),
        \qquad
        L_a^\uparrow(y)=\sup_{t\le y}L_a(t).
\]
For the deterministic inversion statement below, any functions $L_a,U_a$ satisfying pointwise containment are sufficient.  Define
\[
        Q_a^L(\tau)=\inf\{y\in\cY:U_a(y)\ge \tau\},
        \qquad
        Q_a^U(\tau)=\inf\{y\in\cY:L_a(y)\ge \tau\},
\]
with the convention that the infimum is $y_U$ if the set is empty.  Define QTE bands
\[
        \Delta_Q^L(\tau)=Q_1^L(\tau)-Q_0^U(\tau),
        \qquad
        \Delta_Q^U(\tau)=Q_1^U(\tau)-Q_0^L(\tau).
\]

\begin{lemma}[CDF-band inversion]\label{lem:cdf-inversion}
For any deterministic functions $L_a,U_a$ satisfying
\[
        L_a(y)\le F_a(y)\le U_a(y)
        \quad\text{for all }y\in\cY\text{ and }a\in\{0,1\},
\]
one has
\[
        Q_a^L(\tau)\le Q_a(\tau)\le Q_a^U(\tau)
\]
for every $\tau$ at which the quantile is uniquely defined and belongs to the interior of $\cY$.
\end{lemma}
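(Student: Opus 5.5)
The plan is to compare sublevel sets directly, using only the monotonicity of $F_a$ and the pointwise sandwich $L_a\le F_a\le U_a$ on $\cY$. No continuity or monotonicity of $L_a$ or $U_a$ is used. Fix $a$ and a $\tau$ with $Q_a(\tau)\in(y_L,y_U)$, and set
\[
S_F=\{y\in\cY:F_a(y)\ge\tau\},\qquad S_U=\{y\in\cY:U_a(y)\ge\tau\},\qquad S_L=\{y\in\cY:L_a(y)\ge\tau\}.
\]

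\emph{Step 1: $\inf S_F=Q_a(\tau)$.} The definition of $Q_a$ takes the infimum over $\R$, so I first show that restricting to $\cY$ changes nothing.
\begin{enumerate}[(a)]
\item For $y<Q_a(\tau)$, the definition of the infimum gives $F_a(y)<\tau$. Hence no point of $\{F_a\ge\tau\}$ lies below $Q_a(\tau)$.
\item Since $F_a$ is a CDF it is right-continuous, so $F_a(Q_a(\tau))\ge\tau$.
\item Because $Q_a(\tau)$ lies in $\cY$, it belongs to $S_F$.
\end{enumerate}
Together these give $\inf S_F=\min S_F=Q_a(\tau)$.

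\emph{Step 2: lower bound.} Since $F_a\le U_a$ on $\cY$, we have $S_F\subseteq S_U$. In particular $S_U$ is nonempty, so the empty-set convention is never invoked. An infimum over a larger set is no larger, so
\[
Q_a^L(\tau)=\inf S_U\le\inf S_F=Q_a(\tau).
\]

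\emph{Step 3: upper bound.} Since $L_a\le F_a$ on $\cY$, we have $S_L\subseteq S_F$. There are two cases.
\begin{enumerate}[(a)]
\item If $S_L\neq\emptyset$, then $Q_a^U(\tau)=\inf S_L\ge\inf S_F=Q_a(\tau)$.
\item If $S_L=\emptyset$, the convention sets $Q_a^U(\tau)=y_U$. Since $Q_a(\tau)\in\cY$, we get $Q_a^U(\tau)=y_U\ge Q_a(\tau)$.
\end{enumerate}
Steps 2 and 3 together give $Q_a^L(\tau)\le Q_a(\tau)\le Q_a^U(\tau)$.

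The only delicate step is Step 1, namely reconciling the infimum over $\R$ in the definition of $Q_a$ with the infimum over $\cY$ in the band quantiles. This is exactly where the hypothesis that $Q_a(\tau)$ lies in $\cY$ enters, together with right-continuity of $F_a$. Uniqueness of the quantile is not actually needed for the inequalities. It only matters for interpreting the bands as bounds on a well-defined target.
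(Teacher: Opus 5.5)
Your proof is correct and follows essentially the same route as the paper: $Q_a(\tau)$ lies in $\{y\in\cY:U_a(y)\ge\tau\}$, which gives the lower bound, and $L_a\le F_a<\tau$ below $Q_a(\tau)$, which gives the upper bound. Your version is somewhat more careful than the paper's. You handle the empty-set convention $Q_a^U(\tau)=y_U$ explicitly, and you correctly note that right-continuity of $F_a$ and the definition of the infimum drive the argument, not uniqueness of the quantile.
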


\begin{theorem}[Conservative density-free simultaneous quantile bands]\label{thm:density-free-bands}
Let $\cT\subset(0,1)$ be compact and suppose $Q_a(\tau)$ is unique and belongs to the interior of $\cY$ for every $\tau\in\cT$ and $a\in\{0,1\}$.  Combining Theorem \ref{thm:cdf-bands} with Lemma \ref{lem:cdf-inversion} yields
\[
        \liminf_{n\to\infty}\Pp\left\{
        \Delta_Q^L(\tau)\le \Delta_Q(\tau)\le \Delta_Q^U(\tau)
        \text{ for all }\tau\in\cT
        \right\}\ge 1-\alpha.
\]
The inequality is generally sharp as a validity statement: inversion of a two-sided simultaneous CDF band can be conservative for quantiles.  No estimator of $f_a\{Q_a(\tau)\}$ is required.  If one additionally assumes $0<c_f\le f_a\{Q_a(\tau)\}\le C_f<\infty$, the usual inverse-map linear expansion holds, but it is not needed for this band construction.
\end{theorem}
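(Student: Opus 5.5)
The plan is to reduce the statement to a deterministic event inclusion and then take limits using Theorem \ref{thm:cdf-bands}. Let $E_n$ be the event that $L_a(y)\le F_a(y)\le U_a(y)$ holds for all $a\in\{0,1\}$ and $y\in\cY$. Theorem \ref{thm:cdf-bands} gives $\Pp(E_n)\to1-\alpha$. The whole argument then rests on one claim: on $E_n$, the inclusion $\Delta_Q^L(\tau)\le\Delta_Q(\tau)\le\Delta_Q^U(\tau)$ holds simultaneously for all $\tau\in\cT$. If so, the probability of simultaneous QTE coverage is at least $\Pp(E_n)$, and taking $\liminf$ yields the bound $1-\alpha$.

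To establish the inclusion, I will fix an outcome $\omega\in E_n$. On that outcome the realized functions $L_a,U_a$ are deterministic and satisfy the pointwise containment hypothesis of Lemma \ref{lem:cdf-inversion} for both arms.

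The hypothesis of the theorem says that $Q_a(\tau)$ is unique and interior to $\cY$ for every $\tau\in\cT$. This is exactly the side condition the lemma needs, and it holds uniformly in $\tau$. The lemma therefore gives $Q_a^L(\tau)\le Q_a(\tau)\le Q_a^U(\tau)$ for all $\tau\in\cT$ and $a\in\{0,1\}$ at once. Simultaneity costs nothing here, because the lemma is deterministic and pointwise in $\tau$.

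The QTE bound then follows by subtraction:
\[
Q_1^L(\tau)-Q_0^U(\tau)\le Q_1(\tau)-Q_0(\tau)\le Q_1^U(\tau)-Q_0^L(\tau).
\]
These are exactly $\Delta_Q^L(\tau)\le\Delta_Q(\tau)\le\Delta_Q^U(\tau)$.

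I would briefly check the inversion step directly in case the lemma's conventions need care.

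\textbf{Lower bound.} Set $t<Q_a(\tau)$. Then $F_a(t)<\tau$, but $t$ could still satisfy $U_a(t)\ge\tau$, so the lower bound needs a separate argument. Write $y^*=Q_a(\tau)\in\operatorname{int}\cY$. Containment gives $U_a(y^*)\ge F_a(y^*)\ge\tau$, so $y^*$ belongs to the set whose infimum defines $Q_a^L(\tau)$. Hence $Q_a^L(\tau)\le y^*$.

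\textbf{Upper bound.} Any $y$ with $L_a(y)\ge\tau$ satisfies $F_a(y)\ge\tau$, so $y\ge Q_a(\tau)$. Taking the infimum gives $Q_a^U(\tau)\ge Q_a(\tau)$. If the set is empty, the convention $Q_a^U(\tau)=y_U$ is at least $Q_a(\tau)$ by interiority.

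The monotone envelopes and the truncation to $[0,1]$ do not break containment, since $F_a$ is itself monotone and lies in $[0,1]$.

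The only genuine subtlety I expect is measurability. $\Pp(E_n)$ and the probability of the QTE event must make sense, or outer probabilities must be used, because the suprema in $E_n$ range over an uncountable $y$. I would handle this with outer probability, noting that the event inclusion still gives $\Pp^*(\text{QTE event})\ge\Pp_*(E_n)$. Alternatively, I would use the right-continuity of $F_a$ to reduce to countably many $y$.

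Finally, I would remark on why the bound is only conservative. The QTE event can hold on outcomes outside $E_n$, so only the $\liminf\ge$ inequality is claimed. No density $f_a\{Q_a(\tau)\}$ enters at any point of the argument.
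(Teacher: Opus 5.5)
Your proposal is correct and follows the paper's proof. On the simultaneous CDF-coverage event $E_n$, Lemma \ref{lem:cdf-inversion} gives $Q_a^L(\tau)\le Q_a(\tau)\le Q_a^U(\tau)$ for both arms and all $\tau\in\cT$; subtracting endpoints gives the QTE inclusion, and $E_n\subseteq\{\text{QTE coverage}\}$ combined with Theorem \ref{thm:cdf-bands} gives the $\liminf$ bound. Your direct re-check of the inversion, the envelope and truncation remark, and the measurability discussion are sound extras, with one refinement: the inclusion already yields the inner-probability bound $\Pp_*(\text{QTE coverage})\ge\Pp_*(E_n)$, which is the appropriate form for a coverage guarantee, rather than the outer-probability version you wrote.
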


\begin{remark}[Grid implementation]
If quantiles are computed from a grid $y_1<\cdots<y_K$, first-order quantile linearization requires the mesh width $\max_k|y_{k+1}-y_k|=o(n^{-1/2})$ on regions where a root-$n$ expansion is claimed.  Coverage by CDF-band inversion is more robust, but a fine grid is still needed to avoid visible discretization error.
\end{remark}

\subsubsection{Lower-tail CVaR through a shortfall bridge}\label{sec:cvar}

The usual quantile-process integral representation of CVaR can require uniform control near the lower tail.  The shortfall representation of \citet{rockafellar2002conditional} gives a cleaner route: lower-tail CVaR is the value of an optimized shortfall criterion, so Danskin's envelope theorem differentiates the value while holding the optimizing quantile fixed.  The resulting influence function contains a shortfall bridge but no density.

For $t\in\cY$, define
\[
        \mathcal S_a(t)=\E\{(t-Y(a))_+\}.
\]
Let $r_{a,t}$ be a shortfall bridge satisfying
\[
        T_ar_{a,t}=m_{a,t},
        \qquad
        m_{a,t}(Z,X)=\E\{(t-Y)_+\mid A=a,Z,X\}.
\]
Then
\[
        \mathcal S_a(t)=\E\{r_{a,t}(W,X)\}
        =\E\{\ind{A=a}q_a(Z,X)(t-Y)_+\}.
\]
The efficient influence function for $\mathcal S_a(t)$ is
\[
        \eta_{a,t}(O)=r_{a,t}(W,X)-\mathcal S_a(t)
        +\ind{A=a}q_a(Z,X)\{(t-Y)_+-r_{a,t}(W,X)\}.
\]

\begin{assumption}[Shortfall bridge and uniform CVaR regularity]
\label{ass:cvar}
For each $a\in\{0,1\}$, let $\cQ_a=\{Q_a(\tau):\tau\in\cT\}$ and let $\cQ_a^\eta$ be a fixed compact neighborhood of $\cQ_a$ contained in $\operatorname{int}(\cY)$.  The following hold.
\begin{enumerate}[(i)]
\item For every $t\in\cQ_a^\eta$, the shortfall bridge $r_{a,t}\in L_2(P_{W,X})$ exists.  The class $\{r_{a,t}:t\in\cQ_a^\eta\}$ satisfies the fixed-law process conditions analogous to Assumption \ref{ass:process}(i)--(iii), with $(t-Y)_+$ and $r_{a,t}$ replacing $B_y$ and $h_{a,y}$.  The joint saturation condition analogous to Assumption \ref{ass:process}(iv) is imposed only when an efficiency lower bound for a finite vector of shortfall or CVaR values is claimed.
\item $\cT=[\tau_L,\tau_U]\subset(0,1)$, $F_a$ is continuous on $\cQ_a^\eta$, and the following uniform second-moment and residual-moment bounds hold:
\[
        \sup_{t\in\cQ_a^\eta}\E\{(t-Y)_+^2\mid A=a\}<\infty,
        \qquad
        \sup_{t\in\cQ_a^\eta}\E\left[\ind{A=a}q_a^2(Z,X)\{(t-Y)_+-r_{a,t}(W,X)\}^2\right]<\infty.
\]
\item The value criterion
\[
        M_{a,\tau}(t)=t-\tau^{-1}\mathcal S_a(t)
\]
has a unique maximizer $Q_a(\tau)$ for each $\tau\in\cT$, and the uniqueness is uniform: for every $\epsilon>0$ there exists $\eta_\epsilon>0$ such that
\[
        \inf_{\tau\in\cT}\left[M_{a,\tau}\{Q_a(\tau)\}
        -\sup_{t\in\cQ_a^\eta:\ |t-Q_a(\tau)|\ge\epsilon}M_{a,\tau}(t)\right]
        \ge \eta_\epsilon.
\]
\item For every regular submodel $P_\epsilon$ considered, with score $\dot\ell$ at $\epsilon=0$, the shortfall process is uniformly pathwise differentiable on $\cQ_a^\eta$ with influence functions $\eta_{a,t}$ defined below, and
\[
        \sup_{t\in\cQ_a^\eta}
        \left|
        \mathcal S_{a,P_\epsilon}(t)-\mathcal S_{a,P_0}(t)-\epsilon P_0\{\eta_{a,t}\dot\ell\}
        \right|=o(|\epsilon|).
\]
\item The fold-specific shortfall nuisance estimators $\widehat r_{a,t}^{(-k)}$ and $\widehat q_a^{(-k)}$ satisfy the analogue of Assumption \ref{ass:nuisance} uniformly over $t\in\cQ_a^\eta$: their product-rate drift is $o_p(n^{-1/2})$, and the corresponding random estimated shortfall influence-function class satisfies the same localized empirical-process equicontinuity condition.  This item is a nuisance-rate and stochastic-equicontinuity assumption; it does not assume the shortfall estimator expansion itself.
\end{enumerate}
\end{assumption}

Define the cross-fitted one-step shortfall estimator
\[
        \widehat{\mathcal S}_a(t)=\frac{1}{n}\sum_{k=1}^K\sum_{i\in I_k}
        \left[
        \widehat r_{a,t}^{(-k)}(W_i,X_i)
        +\ind{A_i=a}\widehat q_a^{(-k)}(Z_i,X_i)
        \{(t-Y_i)_+-\widehat r_{a,t}^{(-k)}(W_i,X_i)\}
        \right].
\]
Let
\[
        \widehat C_a(\tau)=\sup_{t\in\cQ_a^\eta}\left\{t-\tau^{-1}\widehat{\mathcal S}_a(t)\right\}.
\]

\begin{lemma}[Uniform shortfall expansion]\label{lem:shortfall-expansion}
Under Assumption \ref{ass:cvar}(i), (ii), and (v),
\[
        \sup_{t\in\cQ_a^\eta}\left|
        \widehat{\mathcal S}_a(t)-\mathcal S_a(t)-(P_n-P)\eta_{a,t}
        \right|=o_p(n^{-1/2}).
\]
Consequently, $\sqrt n(\widehat{\mathcal S}_a-\mathcal S_a)$ converges weakly in $\ell^\infty(\cQ_a^\eta)$ to the tight Gaussian process with covariance $\E\{\eta_{a,t}\eta_{a,t'}\}$ whenever the fixed-law shortfall influence-function class is Donsker.
\end{lemma}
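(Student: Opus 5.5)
The proof copies the argument for Theorem \ref{thm:uniform-dr}, with the shortfall outcome $(t-Y)_+$ in place of $B_y$ and $r_{a,t}$ in place of $h_{a,y}$. For each fold $k$, write
\[
\widehat\psi_{a,t}^{(-k)}(O)=\widehat r_{a,t}^{(-k)}(W,X)-\mathcal S_a(t)+\ind{A=a}\widehat q_a^{(-k)}(Z,X)\{(t-Y)_+-\widehat r_{a,t}^{(-k)}(W,X)\}.
\]
Since $P_0\eta_{a,t}=0$, this gives the decomposition
\[
\widehat{\mathcal S}_a(t)-\mathcal S_a(t)-(P_n-P)\eta_{a,t}
=\sum_{k}\frac{|I_k|}{n}\Big[(\mathbb P_{n,k}-P_0)\{\widehat\psi_{a,t}^{(-k)}-\eta_{a,t}\}+P_0\widehat\psi_{a,t}^{(-k)}\Big].
\]
The number of folds is fixed, so it is enough to show that each of the two bracketed terms is $o_p(n^{-1/2})$ uniformly over $t\in\cQ_a^\eta$ for every fold.

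\emph{Drift term.} The orthogonal drift identity of Lemma \ref{lem:drift} carries over verbatim, because its proof uses only the two bridge equations. The primal equation $T_ar_{a,t}=m_{a,t}$ makes $\E[\ind{A=a}\bar q\{(t-Y)_+-r_{a,t}\}]=0$ for every $\bar q$. The dual equation $T_a^*q_a=1$ gives $\E[\ind{A=a}q_a\bar r]=\E\bar r$ for every $\bar r$. Both expectations are finite: square-integrability comes from Assumption \ref{ass:cvar}(ii) together with boundedness of $q_a$ and of the bridge class (the analogue of Assumption \ref{ass:process}(i)). Adding and subtracting then gives, conditionally on the training sample,
\[
P_0\widehat\psi_{a,t}^{(-k)}=-\E\big[\ind{A=a}\{\widehat q_a^{(-k)}-q_a\}\{\widehat r_{a,t}^{(-k)}-r_{a,t}\}\big].
\]
By Cauchy--Schwarz in $\cH_{a,Z}$, this is bounded uniformly in $t$ by $\norm{\widehat q_a^{(-k)}-q_a}_{a,Z}\sup_t\norm{\widehat r^{(-k)}_{a,t}-r_{a,t}}_W$. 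The step from $\cH_{a,Z}$ to $\norm{\cdot}_W$ needs $\E[\ind{A=a}f^2(W,X)]\le\norm{f}_W^2$, which holds trivially. Assumption \ref{ass:cvar}(v) makes this product $o_p(n^{-1/2})$.

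\emph{Empirical-process term.} Conditionally on the training data, the functions $\widehat\psi^{(-k)}_{a,t}-\eta_{a,t}$ are fixed. Assumption \ref{ass:cvar}(v) imposes exactly the localized equicontinuity condition analogous to Assumption \ref{ass:nuisance}(iii), which yields $\sup_t|(\mathbb P_{n,k}-P_0)\{\cdot\}|=o_p(n^{-1/2})$ directly. If that condition is instead read as the entropy form of Assumption \ref{ass:nuisance}(ii), I would pass through the conditional maximal inequality. The $L_2$ radius is controlled by
\[
\norm{\widehat\psi-\eta}_{P,2}\lesssim\norm{\widehat r-r}_W+\norm{\widehat q-q}_{a,Z}\,\sup_t\norm{(t-Y)_+-r_{a,t}}_\infty,
\]
and the supremum is finite on the compact set $\cQ_a^\eta$ because $|(t-Y)_+-(t'-Y)_+|\le|t-t'|$. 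A second piece, $\norm{\widehat q}_\infty\norm{\widehat r-r}$, needs boundedness of the estimators or a truncation.

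\emph{Main obstacle and weak convergence.} The main obstacle is that $(t-Y)_+$ is unbounded in $Y$, unlike $B_y$. Envelopes and residual moments must therefore come from the second-moment bounds in Assumption \ref{ass:cvar}(ii) rather than from boundedness. I would handle this by splitting the residual $(t-Y)_+-r_{a,t}$ into a truncated part, which is bounded by a constant $M$, and a tail part. The tail contribution is then controlled in $L_2$ uniformly in $t$ using the uniform second-moment bound, with $M\to\infty$ slowly. Once the expansion holds, $\sqrt n(\widehat{\mathcal S}_a-\mathcal S_a)=\mathbb G_n\eta_{a,\cdot}+o_p(1)$ in $\ell^\infty(\cQ_a^\eta)$. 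If $\{\eta_{a,t}\}$ is Donsker, the limit is the tight Gaussian process with covariance $\E\{\eta_{a,t}\eta_{a,t'}\}$, and Slutsky's lemma completes the proof.
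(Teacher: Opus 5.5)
Your proposal is correct and follows essentially the same route as the paper's proof. You split the remainder fold by fold into the orthogonal product drift $-\E[\ind{A=a}(\widehat q_a-q_a)(\widehat r_{a,t}-r_{a,t})]$, which the product-rate clause of Assumption \ref{ass:cvar}(v) controls, and a localized empirical-process term, which its equicontinuity clause controls directly; weak convergence then follows from the Donsker condition.

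One caution about your optional entropy-route detour. The claim that $\sup_t\norm{(t-Y)_+-r_{a,t}}_\infty$ is finite is false in general, because $(t-Y)_+$ is unbounded as $Y\to-\infty$; the $1$-Lipschitz property in $t$ only bounds increments in $t$, not the level. Since Assumption \ref{ass:cvar}(v) already supplies the needed equicontinuity, neither that bound nor your truncation argument is needed.
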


\begin{theorem}[CVaR influence function without density estimation]
\label{thm:cvar}
Under Assumption \ref{ass:cvar}, the maximizer of $t-\tau^{-1}\mathcal S_a(t)$ over $\cY$ lies in $\cQ_a^\eta$, and
\[
        C_a(\tau)=\sup_{t\in\cQ_a^\eta}\left[t-\frac{1}{\tau}\mathcal S_a(t)\right]
        =Q_a(\tau)-\frac{1}{\tau}\mathcal S_a\{Q_a(\tau)\}.
\]
The map $P\mapsto C_{a,P}(\tau)$ is pathwise differentiable uniformly over $\tau\in\cT$ by the uniform envelope theorem, with influence function
\[
        \chi_{a,\tau}(O)=-\frac{1}{\tau}\eta_{a,Q_a(\tau)}(O).
\]
The cross-fitted value estimator satisfies
\[
        \sup_{\tau\in\cT}\left|
        \sqrt n\{\widehat C_a(\tau)-C_a(\tau)\}
        -\frac{1}{\sqrt n}\sum_{i=1}^n\chi_{a,\tau}(O_i)\right|=o_p(1).
\]
The lower-tail CVaR contrast $\Delta_C(\tau)=C_1(\tau)-C_0(\tau)$ has influence function $\chi_{1,\tau}-\chi_{0,\tau}$.  No counterfactual density appears in this influence function.
\end{theorem}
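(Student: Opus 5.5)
The proof splits into four pieces: the Rockafellar--Uryasev variational identity, the pathwise derivative by an envelope argument, the estimator expansion by an argmax/value perturbation bound, and the contrast.

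\textbf{Variational identity.} First verify the identity. For each $t$, $\mathcal S_a$ is convex with right derivative $\mathcal S_a'(t^+)=F_a(t)$, since $\partial_t(t-y)_+=\ind{y<t}$ a.e. Hence $M_{a,\tau}(t)=t-\tau^{-1}\mathcal S_a(t)$ is concave on $\cY$. Its superdifferential contains $0$ exactly when $F_a(t^-)\le\tau\le F_a(t)$, which under continuity of $F_a$ means $F_a(t)=\tau$. By Assumption \ref{ass:cvar}(iii) this maximizer is the unique point $Q_a(\tau)$, and by construction $Q_a(\tau)$ lies in $\cQ_a\subset\cQ_a^\eta$. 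Concavity then gives that the supremum over $\cY$ equals the supremum over $\cQ_a^\eta$. The value $Q_a(\tau)-\tau^{-1}\mathcal S_a\{Q_a(\tau)\}$ equals $C_a(\tau)$ by the Rockafellar--Uryasev identity $\tau^{-1}\int_0^\tau Q_a(u)\,du=Q_a(\tau)-\tau^{-1}\E\{(Q_a(\tau)-Y(a))_+\}$, which is recalled in Section~\ref{sec:notation}.

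\textbf{Pathwise derivative.} Next obtain the derivative by a uniform envelope (Danskin) argument. Along a regular submodel $P_\epsilon$, write $C_{a,\epsilon}(\tau)=\sup_{t\in\cQ_a^\eta}M_{a,\tau,\epsilon}(t)$. Assumption \ref{ass:cvar}(iv) gives
\[
M_{a,\tau,\epsilon}=M_{a,\tau}-\epsilon\tau^{-1}P_0\{\eta_{a,\cdot}\dot\ell\}+o(\epsilon)
\]
uniformly in $t$ and $\tau$. The derivative is then pinned down by a two-sided bound.
\begin{itemize}
\item Lower bound: evaluate the perturbed criterion at $t=Q_a(\tau)$, which gives $C_{a,\epsilon}-C_a\ge-\epsilon\tau^{-1}P_0\{\eta_{a,Q_a(\tau)}\dot\ell\}+o(\epsilon)$.
\item Upper bound: let $t_\epsilon$ be a near-maximizer of the perturbed criterion. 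Uniform well-separatedness (iii) forces $\sup_\tau|t_\epsilon(\tau)-Q_a(\tau)|\to0$. Then $C_{a,\epsilon}-C_a\le-\epsilon\tau^{-1}P_0\{\eta_{a,t_\epsilon}\dot\ell\}+o(\epsilon)$.
\end{itemize}
To close the gap, use $L_2(P_0)$-continuity of $t\mapsto\eta_{a,t}$, which follows from (i)--(ii) together with continuity of $t\mapsto r_{a,t}$ and of $(t-Y)_+$ in $t$. This gives $P_0\{(\eta_{a,t_\epsilon}-\eta_{a,Q_a(\tau)})\dot\ell\}\to0$ uniformly. Therefore $\chi_{a,\tau}=-\tau^{-1}\eta_{a,Q_a(\tau)}$. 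The quantile derivative never appears because the first-order condition kills it, so no density enters.

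\textbf{Estimator expansion.} Lemma \ref{lem:shortfall-expansion} gives $\widehat{\mathcal S}_a=\mathcal S_a+(P_n-P)\eta_{a,\cdot}+o_p(n^{-1/2})$ uniformly on $\cQ_a^\eta$, and the same sandwich is repeated with the random perturbation $\Delta_n(t)=-\tau^{-1}(\widehat{\mathcal S}_a-\mathcal S_a)(t)$.
\begin{itemize}
\item Lower bound: evaluating at $Q_a(\tau)$ gives $\widehat C_a(\tau)-C_a(\tau)\ge\Delta_n\{Q_a(\tau)\}$.
\item Upper bound: let $\widehat t_\tau$ be an approximate maximizer of $t-\tau^{-1}\widehat{\mathcal S}_a(t)$. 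Then $\widehat C_a-C_a\le\Delta_n(\widehat t_\tau)$.
\end{itemize}
Since $\sup_t|\Delta_n(t)|=O_p(n^{-1/2})\to0$, well-separatedness gives $\sup_\tau|\widehat t_\tau-Q_a(\tau)|=o_p(1)$. The difference $\Delta_n(\widehat t_\tau)-\Delta_n\{Q_a(\tau)\}$ reduces to increments $(P_n-P)(\eta_{a,\widehat t_\tau}-\eta_{a,Q_a(\tau)})$ plus $o_p(n^{-1/2})$. These increments are $o_p(n^{-1/2})$ by asymptotic equicontinuity of the Donsker class $\{\eta_{a,t}\}$, which comes from (i) via the analogue of Assumption \ref{ass:process}, combined with $L_2$-continuity. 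The factor $\tau^{-1}\le\tau_L^{-1}$ is bounded, so the bound is uniform in $\tau\in\cT$.

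\textbf{Contrast.} The contrast $\Delta_C(\tau)$ follows by linearity, with influence function $\chi_{1,\tau}-\chi_{0,\tau}$.

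The main obstacle is the uniform-in-$\tau$ consistency of the (near-)maximizers and the resulting equicontinuity step. The argmax is only consistent, not rate-controlled, since no curvature (density) is assumed. The plan is therefore to rely solely on continuity in $t$ plus equicontinuity rather than a quadratic expansion, which is exactly what avoids the density.
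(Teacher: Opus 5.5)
Your proposal is correct and follows essentially the same route as the paper's proof. You locate $Q_a(\tau)$ in $\cQ_a^\eta$ through the one-sided-derivative (superdifferential) characterization of the Rockafellar--Uryasev criterion, obtain the pathwise derivative by a uniform envelope argument driven by Assumption~\ref{ass:cvar}(iii)--(iv), and derive the estimator expansion from Lemma~\ref{lem:shortfall-expansion} together with Donsker equicontinuity and uniform argmax localization. The only difference is presentational: you unpack the paper's appeal to the ``uniform envelope theorem'' into an explicit two-sided sandwich with near-maximizers, which is precisely the argument that invocation encapsulates.
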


\section{Simulation study}\label{sec:simulation}

The simulations are designed to stress the theoretical mechanisms in the paper rather than to provide a generic estimator benchmark.  Component I checks identification, orthogonality, and pointwise distributional inference in an exact finite-rank proximal DGP.  Component II isolates the weak-proxy phase transition predicted by the singular-system theory.  Component III targets process-level inference: simultaneous CDF bands, density-free quantile inversion, density-estimation stress, isotonic projection, and lower-tail CVaR.  Component IV is included both as a practical stress test and as an AISTATS-facing experiment: it replaces hand-chosen series bases by cross-fitted spline, tree, boosting, and neural-network feature maps, while keeping the final bridge estimation step as the convex quadratic problem studied in the theory.  The simulation design follows the cross-fitting logic of debiased machine learning \citep{chernozhukov2018double}, uses tree and boosting representations in the spirit of \citet{breiman2001random} and \citet{friedman2001greedy}, and uses ReLU representations motivated by semiparametric inference with neural-network nuisance estimates \citep{farrell2021deep}.  The nuisance objects here are proximal bridge solutions rather than ordinary conditional regressions; the design is therefore deliberately different from the localized QTE setting of \citet{kallus2024localized}.

\subsection{Common implementation details}

All simulations are based on iid observations of $O=(X,A,Z,W,Y)$ with a binary treatment $A$, treatment-inducing proxies $Z$, outcome-inducing proxies $W$, observed covariates $X$, and a latent confounder $U$.  The estimands are the counterfactual CDF values $F_a(y)$, the QTE contrasts $\Delta_Q(\tau)=Q_1(\tau)-Q_0(\tau)$, and the lower-tail CVaR contrasts $\Delta_C(\tau)=C_1(\tau)-C_0(\tau)$.  All reported intervals and bands use nominal level $95\%$ unless otherwise stated.

The numerical settings are component-specific.  Component I uses the exact finite-rank DGP below with
\[
        n\in\{500,1000,2000,4000,8000\},\qquad R=1000,
        \qquad N_{\mathrm{truth}}=5\times 10^6.
\]
It uses five-fold cross-fitting, $K_Y=151$ CDF grid points, $M=1000$ Rademacher multiplier draws, and ridge constant $c_\lambda=0.01$.  Component II uses two designs.  The finite-rank weak-proxy experiment fixes $n=4000$, uses $R=1000$, five-fold cross-fitting, $K_Y=161$, and $M=499$ multiplier draws.  The Gaussian inverse benchmark uses $R=3000$ Gaussian replications at $n\in\{500,1000,2000,4000,8000,16000,32000\}$.  Component III uses the density-stress design with
\[
        n\in\{800,1600,3200,6400\},\qquad R=1000,
\]
$K_Y=181$, five-fold cross-fitting, and $M=499$ multiplier draws.  Component IV uses
\[
        n\in\{800,1600,3200\},\qquad R=500,
\]
three-fold cross-fitting, $K_Y=121$, $N_{\mathrm{truth}}=10^6$, $M=299$, and ridge constant $c_\lambda=0.05$.

The estimators are component-specific. Components I--III use the finite-rank proximal one-step score, its one-bridge variants, and appropriate non-proximal baselines.  Component IV uses the same final proximal score and ridge bridge equations for all proximal estimators, but changes the feature construction. For quantile inference, the proposed method is the density-free inverse-CDF band of Theorem \ref{thm:density-free-bands}. Component III additionally includes an estimated-density Bonferroni delta benchmark to isolate the extra nuisance $f_a\{Q_a(\tau)\}$ required by density-based delta inference.  Isotonic projection is used for shape enforcement before numerical inversion, and the projection ratio is reported to check Proposition \ref{prop:isotonic}.

\subsection{Component I: calibrated proximal CDF, QTE, and CVaR inference}

Component I uses an exact finite-rank proximal calibration.  This design makes the bridge equations exactly representable in the working sieve, so finite-sample failures can be attributed to proximal inference rather than to uncontrolled approximation error.  Let $X=(X_1,X_2)$ have independent Bernoulli$(1/2)$ coordinates.  The latent confounder is binary, with
\[
        \Pp(U=1\mid X)=\operatorname{expit}(-0.25+0.55X_1-0.35X_2),
        \qquad U_c=2U-1.
\]
For proxy relevance $\rho=0.75$, define binary treatment and outcome proxies by
\[
\begin{aligned}
        \Pp(Z=1\mid U,X)&=\operatorname{expit}(-0.15+2.2\rho U_c+0.25X_1-0.15X_2),\\
        \Pp(W=1\mid U,X)&=\operatorname{expit}(0.10+2.2\rho U_c-0.20X_1+0.20X_2).
\end{aligned}
\]
Treatment assignment is
\[
        \Pp(A=1\mid U,Z,X)=\operatorname{expit}(-0.20+0.80U_c+0.45Z+0.25X_1-0.20X_2),
\]
and the potential outcomes are
\[
        Y(a)=0.35a+0.35X_1-0.25X_2+0.75U+0.25W+0.20aU+0.15aX_1
        +\sigma(X)\varepsilon_a,
        \qquad \sigma(X)=0.70+0.10X_1+0.05X_2,
\]
where $\varepsilon_0$ and $\varepsilon_1$ are independent standard normals, independent of $(U,Z,W,A,X)$.  The observed outcome is $Y=Y(A)$.  This construction satisfies the proximal conditional-independence restrictions at the latent level, while standard exchangeability given $(X,Z,W)$ fails because both $A$ and $Y(a)$ depend on $U$.

The rich Series-PDR basis consists of polynomial features up to total degree three in $(W,X_1,X_2)$ for the primal bridge and in $(Z,X_1,X_2)$ for the dual bridge, followed by fold-specific standardization.  The one-bridge coarsened variants use degree-one features on the deliberately coarsened side and degree-three features on the other side.  The Naive-AIPW baseline uses an ordinary observed-confounding adjustment that treats $(X,Z,W)$ as sufficient.  The goal is to verify that Naive-AIPW remains biased, Series-PDR restores correct centering and calibrated inference, one-bridge estimators reveal the point-estimation versus uncertainty-quantification tradeoff, and one-bridge coarsening displays the product-rate behavior in the doubly robust expansion.

\subsection{Component II: weak-proxy phase transition and singular-system diagnostics}

Component II directly tests the adjoint-range and weak-proxy conclusions of Theorems \ref{thm:spectral}, \ref{thm:minimax}, and \ref{thm:sieve-rate}.  It has two subexperiments.  First, in the exact finite-rank proximal calibration of Component I, we fix $n=4000$ and vary
\[
        \rho\in\{0.90,0.75,0.60,0.45,0.30,0.20\}.
\]
For each $\rho$, we record the fold-specific empirical singular value
\[
        \widehat\kappa_{a,n}=\sigma_{\min}\{\widehat\Sigma_a\},\qquad
        \widehat\Sigma_a=P_{n,\mathrm{train}}\{\ind{A=a}B_Z(Z,X)B_W(W,X)^\top\},
\]
the empirical dual-bridge norm, influence-function standard deviation, CDF and QTE errors, and CDF/QTE band lengths.  The predicted signature is monotone variance and length inflation as $\widehat\kappa_{a,n}$ decreases, with undercoverage only when regularization bias or operator estimation error becomes non-negligible relative to $n^{-1/2}$.

Second, we simulate the Gaussian inverse benchmark in Definition \ref{def:gaussian-submodel}.  For $j=1,\ldots,J$ with $J=400$, set
\[
        s_j=j^{-\alpha},\qquad \ell_j=j^{-\rho_\ell},\qquad
        \theta_j=R_\beta j^{-\beta-1/2},\qquad
        X_j=s_j\theta_j+n^{-1/2}\xi_j,
\]
with $\xi_j\iid N(0,1)$ and $(\alpha,\beta)=(1,1)$.  We compare truncated series estimators in the regular, boundary, and nonregular regimes
\[
        \rho_\ell>\alpha+1/2,\qquad \rho_\ell=\alpha+1/2,\qquad \rho_\ell<\alpha+1/2.
\]
The target diagnostic is the slope of log MSE against log $n$.  In the nonregular regime, the slope should match $-(2\beta+2\rho_\ell-1)/(2\alpha+2\beta)$, whereas in the regular regime it should be close to $-1$.  We also plot $\sum_{j\le m}\ell_j^2/s_j^2$ to visualize the Picard boundary.

\subsection{Component III: process inference under density-estimation stress}

Component III focuses on process inference and on the role of density estimation in QTE inference.  We use the same exact finite-rank proximal proxy calibration as in the process experiments, with proxy relevance fixed at $\rho=0.75$, but replace the outcome noise by a smooth multimodal Gaussian mixture.  Specifically,
\[
        Y(a)=m_a(X,U,W)+\varepsilon_a,
\]
where
\[
        m_a(X,U,W)=0.35a+0.35X_1-0.25X_2+0.75U+0.25W+0.20aU+0.15aX_1,
\]
and, conditionally on a latent mixture label $R$,
\[
        \varepsilon_a\mid R=r,X\sim N\{d_r,\sigma^2(X)s_r^2\},\qquad
        \sigma(X)=0.62+0.09X_1+0.05X_2.
\]
We use $\Pr(R=r)=(0.45,0.35,0.20)$, $d=(-0.58,0.05,1.48)$, and $s=(0.08,0.35,0.07)$.  This construction keeps the proximal bridge equations correctly specified in the finite-rank basis, while making the local density $f_a\{Q_a(\tau)\}$ difficult to estimate because the counterfactual density is smooth but multimodal and locally highly curved.

We compute the joint CDF band
\[
        \widehat F_a(y)\pm n^{-1/2}\widehat c_{1-\alpha},
        \qquad a\in\{0,1\},\ y\in\mathcal Y_K,
\]
from Theorem \ref{thm:cdf-bands}, and invert it to obtain the conservative density-free QTE bands in Theorem \ref{thm:density-free-bands}.  We report simultaneous QTE coverage over
\[
        \mathcal T=\{0.02,0.05,0.10,0.25,0.50,0.75,0.90,0.95,0.98\}.
\]
The proposed density-free band is compared with an estimated-density Bonferroni delta band whose standard error uses cross-fitted proximal kernel estimates of $f_a\{Q_a(\tau)\}$.  This benchmark uses the same proximal bridge machinery as the proposed method, but additionally requires estimating the counterfactual density at the quantile.  The comparison isolates the cost of the extra density nuisance: density-free bands are not intended to be shorter, but to provide stable simultaneous QTE inference without estimating $f_a\{Q_a(\tau)\}$.

We separately evaluate the lower-tail CVaR process using the shortfall bridge representation in Theorem \ref{thm:cvar}.  For $\tau\in\{0.25,0.50,0.75\}$, we estimate
\[
        C_a(\tau)=Q_a(\tau)-\tau^{-1}\mathcal S_a\{Q_a(\tau)\},
        \qquad \mathcal S_a(t)=\E[(t-Y(a))_+].
\]
Finally, we report the weighted $\ell_2$ distance between the unprojected and isotonic-projected CDF estimates and verify the nonexpansiveness inequality in Proposition \ref{prop:isotonic} replication by replication.

\subsection{Component IV: machine-learning feature maps and nonlinear proxy measurements}

Component IV is an implementation stress test for learned feature maps.  Its purpose is not to change the final estimator: every proximal method still uses the same cross-fitted one-step score and solves the same convex ridge moment equations for the primal and dual bridges.  The only difference is how the finite-rank features $B_W(W,X)$ and $B_Z(Z,X)$ are constructed on the training folds.

We use a finite-state proximal calibration observed through noisy nonlinear measurements.  Let $X^{\mathrm{bin}}\in\{0,1\}^3$ have independent Bernoulli$(1/2)$ coordinates and let
\[
    X=\bigl(X^{\mathrm{bin}}_1+0.25\varepsilon_{X1},
             X^{\mathrm{bin}}_2+0.25\varepsilon_{X2},
             X^{\mathrm{bin}}_3+0.25\varepsilon_{X3},
             \varepsilon_{X4},\varepsilon_{X5}\bigr).
\]
The latent confounder is binary, with
\[
    \Pp(U=1\mid X^{\mathrm{bin}})=\operatorname{expit}(-0.25+0.85X^{\mathrm{bin}}_1-0.65X^{\mathrm{bin}}_2+0.35X^{\mathrm{bin}}_3),
    \qquad U_c=2U-1.
\]
Latent treatment and outcome proxies are generated by
\[
\begin{aligned}
    \Pp(Z^*=1\mid U,X^{\mathrm{bin}})&=\operatorname{expit}(-0.10+2.65\rho U_c+0.45X^{\mathrm{bin}}_1-0.35X^{\mathrm{bin}}_3),\\
    \Pp(W^*=1\mid U,X^{\mathrm{bin}})&=\operatorname{expit}(0.05+2.65\rho U_c-0.30X^{\mathrm{bin}}_1+0.40X^{\mathrm{bin}}_2),
\end{aligned}
\]
with $\rho=0.82$.  The analyst observes noisy nonlinear measurements
\[
\begin{aligned}
    Z&=\bigl(Z^*+0.30\varepsilon_{Z1},(2Z^*-1)(0.7+0.3X^{\mathrm{bin}}_2)+0.25\varepsilon_{Z2}\bigr),\\
    W&=\bigl(W^*+0.30\varepsilon_{W1},(2W^*-1)(0.7+0.3X^{\mathrm{bin}}_3)+0.25\varepsilon_{W2}\bigr).
\end{aligned}
\]
Treatment assignment is
\[
    \Pp(A=1\mid U,Z^*,X^{\mathrm{bin}})=\operatorname{expit}(-0.20+1.15U_c+0.75Z^*+0.35X^{\mathrm{bin}}_1-0.30X^{\mathrm{bin}}_2).
\]
The potential outcomes are
\[
\begin{aligned}
    Y(0)&=b(X^{\mathrm{bin}},X)+0.95U_c+r(W^*,X^{\mathrm{bin}})+\sigma(X^{\mathrm{bin}})\varepsilon_0,\\
    Y(1)&=b(X^{\mathrm{bin}},X)+\tau(X^{\mathrm{bin}},U,W^*)+1.10U_c+r(W^*,X^{\mathrm{bin}})
        +0.12W^*X^{\mathrm{bin}}_3+\sigma(X^{\mathrm{bin}})\varepsilon_1,
\end{aligned}
\]
where $b$, $r$, $\tau$, and $\sigma$ are the nonlinear functions implemented in the companion script.  This design satisfies the proximal conditional-independence restrictions at the latent proxy level, while the observed continuous proxies are noisy nonlinear measurements.  Consequently, simple linear features in $(X,Z)$ and $(X,W)$ underfit the bridge functions, whereas nonlinear learned representations can improve the finite-rank approximation.

We compare the following cross-fitted feature maps, each learned only on the training folds, frozen, standardized, and then used in the same ridge moment equations for $\widehat h_{a,y}$ and $\widehat q_a$: a linear sieve, cubic spline transformations, random-forest leaf indicators \citep{breiman2001random}, gradient-boosted tree leaf indicators \citep{friedman2001greedy}, ReLU hidden-layer activations \citep{farrell2021deep}, and a pre-specified score ensemble averaging the cross-fitted PDR score processes from spline, RF-leaf, GB-leaf, and ReLU features.  The ordinary observed-confounding baseline is a gradient-boosted ML-IPW estimator that adjusts for $(X,Z,W)$ but does not use proximal bridge correction.  The expected outcome is not that every feature map succeeds uniformly, but that stable learned feature maps can be combined with the proximal one-step score while preserving the final convex ridge structure, and that ordinary observed-confounding ML adjustment remains biased under latent confounding.

\section{Results of Simulation Studies}
\subsection{Component I: calibrated proximal CDF, QTE, and CVaR inference}

Figure~\ref{fig:component1-calibrated} highlights the main finite-sample advantage of the proposed Series-PDR estimator. The key gain is not uniform point-estimation dominance, but rather bias correction and calibrated inference under unmeasured confounding. The naive observed confounding estimator Naive-AIPW has persistent bias for both $F_1\{Q_1(0.50)\}$ and the median QTE, showing that it converges to the wrong limit in this proximal design. By contrast, Series-PDR has rapidly vanishing bias and pointwise CDF coverage close to the nominal level. Series-POR can achieve competitive point-estimation accuracy when the outcome bridge is well approximated, but its intervals undercover substantially; Series-PIPW is conservative and less stable at moderate sample sizes. The two coarsened PDR specifications remain much closer to Series-PDR than to Naive-AIPW, which is consistent with the product-rate drift structure of the proximal doubly robust expansion. The interval-length panel further shows that the proposed density-free QTE and shortfall-CVaR procedures remain informative, with steadily shrinking bands as $n$ increases. Overall, the figure shows that Series-PDR restores correct centering and reliable distributional inference in settings where ordinary observed confounding adjustment fails. More numerical results are reported in Figure~\ref{fig:component1-detailed-diagnostics}.

\begin{figure}[!htbp]
    \centering
    \includegraphics[width=0.98\textwidth]{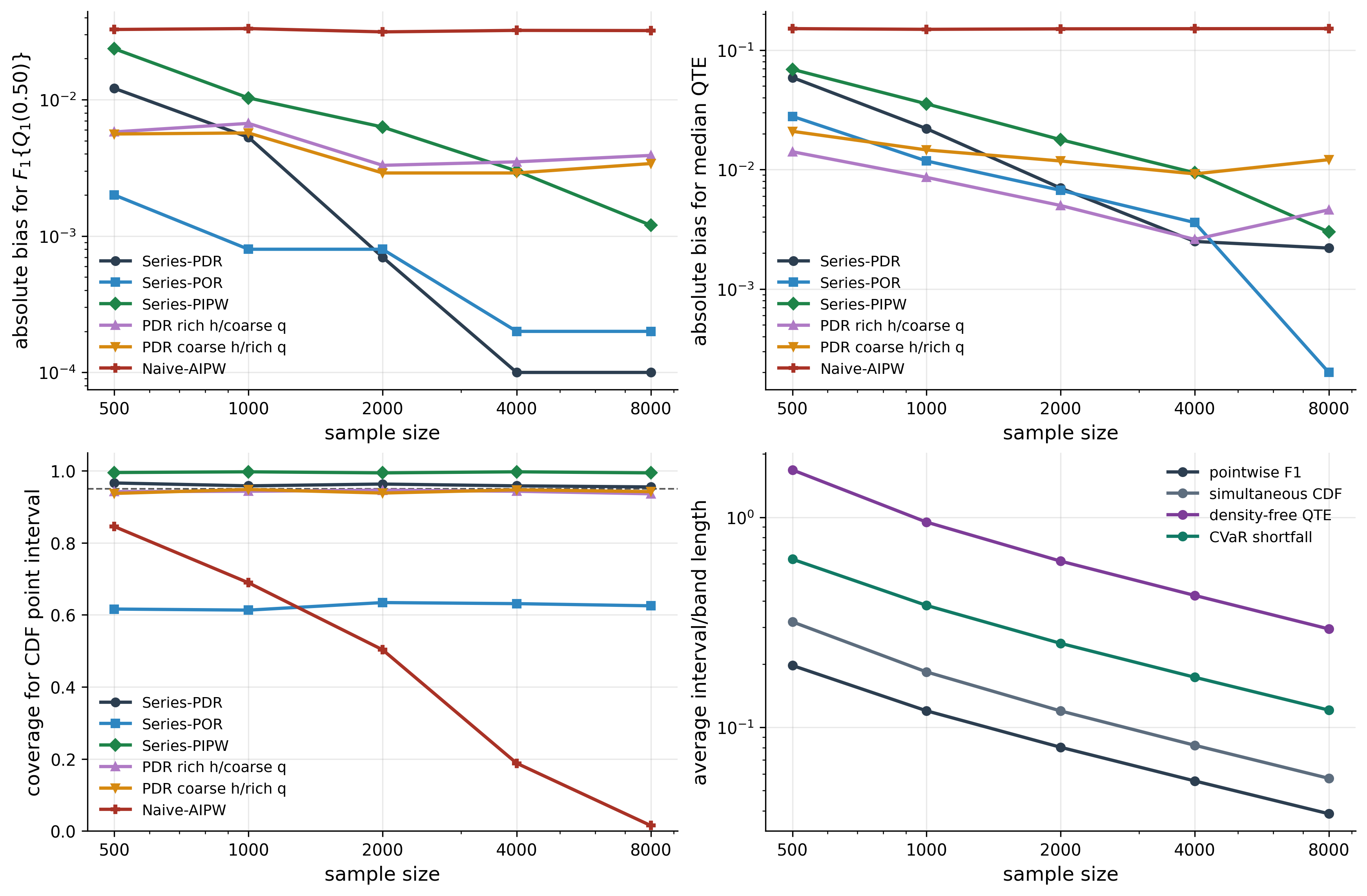}
    \caption{\textbf{Bias correction and calibrated inference for proximal distributional effects.}
The figure compares finite-sample bias and coverage across proximal and non-proximal estimators in Component I. Results use the exact finite-rank proximal DGP with $\rho=0.75$, sample sizes $n\in\{500,1000,2000,4000,8000\}$, $R=1000$ Monte Carlo replications, $N_{\mathrm{truth}}=5\times10^6$ truth draws, five-fold cross-fitting, and $M=1000$ multiplier draws. The figure reports the interval or band lengths for the proposed Series-PDR inference procedure. The results show that proximal correction is necessary under unmeasured confounding, and that Series-PDR delivers the most stable overall inferential performance. More numerical results are reported in Figure~\ref{fig:component1-detailed-diagnostics}.}
    \label{fig:component1-calibrated}
\end{figure}

\subsection{Component II: weak-proxy phase transition and singular-system diagnostics}

Table~\ref{tab:component2-main} summarizes the main findings. In Panel A, decreasing
proxy relevance from $\rho=0.90$ to $\rho=0.20$ reduces $\widehat\kappa_{\min}$ from
0.020 to 0.002 and increases the maximum dual-bridge norm from 1.73 to 3.95. The
effect on uncertainty is much larger: the estimated influence-function standard deviation
increases from 0.831 to 51.472, the simultaneous CDF-band length increases from 0.076 to
5.911, and the density-free median-QTE band length increases from 0.383 to 11.329. The
CDF and median-QTE RMSEs increase accordingly. This is the finite-sample signature of
the weak-proxy theory: as the inverse problem becomes ill-conditioned, regular inference
requires increasingly large variance and increasingly wide bands.

Coverage also behaves as predicted by the theory. For strong and moderate proxies,
pointwise and simultaneous CDF coverage stay close to the nominal level. When
$\rho=0.30$, pointwise CDF coverage drops to 0.826, indicating that regularization bias and
operator estimation error have become non-negligible relative to the first-order Gaussian
approximation. At $\rho=0.20$, coverage partly recovers only because the intervals become
very wide; this should be interpreted as loss of informativeness rather than improved
identification. The density-free QTE bands remain conservative throughout, as expected
from CDF-band inversion.

Panel B verifies the spectral phase transition in the Gaussian benchmark. The empirical
MSE exponent is 0.96 in the regular regime, close to the root-$n$ benchmark exponent 1.
At the Picard boundary, the empirical exponent is 0.91, reflecting the logarithmic
divergence of $\sum_j\ell_j^2/s_j^2$. In the nonregular regime, the empirical exponent is
0.64, matching the theoretical exponent 0.65. The full weak-proxy diagnostics are reported
in Figure~\ref{fig:component2-weak-proxy} and Table~\ref{tab:component2-appendix}.

\begin{table}[!htbp]
\centering
\setlength{\tabcolsep}{3.0pt}
\renewcommand{\arraystretch}{1.05}
\caption{\textbf{Weak-proxy diagnostics and Gaussian inverse benchmark.}
Panel A reports selected rows from the exact finite-rank proximal DGP at fixed sample size $n=4000$ with $R=1000$ Monte Carlo replications and $M=499$ multiplier draws. Panel B reports the Gaussian inverse benchmark with $s_j=j^{-1}$, $\ell_j=j^{-\rho_\ell}$, $\beta=1$, $R=3000$ Gaussian replications, and $n\in\{500,1000,2000,4000,8000,16000,32000\}$.}
\label{tab:component2-main}

\begin{tabular}{@{}rrrrrrrrr@{}}
\toprule
\multicolumn{9}{l}{\textit{Panel A: finite-rank proximal DGP}}\\
$\rho$
& $\widehat\kappa_{\min}$
& $\max_a\|\widehat q_a\|_2$
& IF s.d.
& CDF RMSE
& Pt. cov.
& Sim. len.
& QTE RMSE
& QTE len.\\
\midrule
0.90 & 0.020 & 1.73 & 0.831  & 0.013 & 0.950 & 0.076 & 0.044 & 0.383\\
0.60 & 0.013 & 1.92 & 1.039  & 0.016 & 0.963 & 0.093 & 0.057 & 0.465\\
0.30 & 0.004 & 2.78 & 39.175 & 0.320 & 0.826 & 5.393 & 1.514 & 9.382\\
0.20 & 0.002 & 3.95 & 51.472 & 0.382 & 0.923 & 5.911 & 1.559 & 11.329\\
\bottomrule
\end{tabular}

\vspace{0.6em}

\begin{tabular}{@{}lccccc@{}}
\toprule
\multicolumn{6}{l}{\textit{Panel B: Gaussian inverse benchmark}}\\
Regime & $\rho_\ell$ & Picard status & Polynomial exponent & Empirical exponent & MSE at $n=32000$\\
\midrule
regular    & 1.80 & finite        & 1.00 & 0.96 & $6.20\times10^{-5}$\\
boundary   & 1.50 & log-divergent & 1.00$^\dagger$ & 0.91 & $1.06\times10^{-4}$\\
nonregular & 0.80 & poly-divergent & 0.65 & 0.64 & $2.00\times10^{-3}$\\
\bottomrule
\end{tabular}

\vspace{0.35em}
\begin{minipage}{0.96\textwidth}
\footnotesize
\emph{Notes.} Pt. cov. is pointwise CDF coverage for $F_1\{Q_1(0.50)\}$, Sim. len. is
the average simultaneous CDF-band length, and QTE len. is the average density-free
median-QTE band length. $^\dagger$At the Picard boundary the polynomial exponent is one, but the rate is affected by a
logarithmic divergence; the empirical exponent is therefore expected to be below one in
finite samples.
\end{minipage}
\end{table}

\subsection{Component III: simultaneous CDF bands, density-free quantile bands, shape projection, and CVaR}

Table~\ref{tab:component3-main} reports the density-estimation stress test. The CDF
process itself is well calibrated: simultaneous CDF coverage is close to the nominal level
across sample sizes, and the average CDF-band length decreases from 0.572 at $n=800$ to
0.064 at $n=6400$. Thus the proximal CDF process and multiplier band behave as predicted
by Theorem~\ref{thm:cdf-bands}.

The key comparison is between the proposed density-free QTE bands and the estimated-density
delta bands. The density-free bands have simultaneous QTE coverage equal to 1.000 for all
sample sizes. This is conservative, as expected from Theorem~\ref{thm:density-free-bands},
because the bands are obtained by deterministic inversion of simultaneous CDF bands. The
conservatism is not vacuous: their average length decreases from 2.643 to 0.772 as $n$
increases. In contrast, the estimated-density Bonferroni delta bands are shorter, but their
simultaneous coverage is below nominal and deteriorates from 0.883 at $n=800$ to 0.756 at
$n=6400$. The persistent density-ratio error
$|\widehat f/f-1|$, which remains around 0.37--0.43, shows that the undercoverage is driven
by the additional counterfactual density nuisance required by the delta method. This
comparison highlights the intended role of the proposed method: density-free QTE bands
trade length for stable simultaneous coverage and avoid estimating \(f_a\{Q_a(\tau)\}\).

The remaining diagnostics support the other parts of the process theory. The shortfall-CVaR
coverage is close to nominal, ranging from 0.921 to 0.962. The isotonic projection diagnostics verify Proposition~\ref{prop:isotonic}:
the projected-to-unprojected squared-error ratio is never larger than one, and no
nonexpansiveness violation occurs in any replication. Full quantile-level diagnostics are
reported in Figure~\ref{fig:component3-density-stress}.

\begin{table}[t]
\centering
\scriptsize
\setlength{\tabcolsep}{2.2pt}
\renewcommand{\arraystretch}{1.05}
\caption{\textbf{Process inference under density-estimation stress.}
The table compares the proposed density-free simultaneous QTE bands with estimated-density Bonferroni delta bands in Component III. Results use $n\in\{800,1600,3200,6400\}$, $R=1000$ Monte Carlo replications, five-fold cross-fitting, $K_Y=181$ CDF grid points, and $M=499$ multiplier draws. QTE quantities are averaged over quantile levels in $\mathcal T=\{0.02,0.05,0.10,0.25,0.50,0.75,0.90,0.95,0.98\}$; simultaneous QTE coverage requires all reported quantile levels to be covered in a replication.}
\label{tab:component3-main}
\begin{tabular}{@{}rrrrrrrrrrr@{}}
\toprule
$n$ & CDF cov. & CDF len. & DF-QTE sim. cov. & DF len. & EstD sim. cov. & EstD len. & $|\widehat f/f-1|$ & CVaR cov. & Iso ratio & Iso viol.\\
\midrule
800  & 0.941 & 0.572 & 1.000 & 2.643 & 0.883 & 1.306 & 0.430 & 0.921 & 0.974 & 0.000\\
1600 & 0.966 & 0.143 & 1.000 & 1.541 & 0.915 & 0.571 & 0.394 & 0.962 & 1.000 & 0.000\\
3200 & 0.964 & 0.094 & 1.000 & 1.068 & 0.857 & 0.376 & 0.381 & 0.960 & 1.000 & 0.000\\
6400 & 0.947 & 0.064 & 1.000 & 0.772 & 0.756 & 0.259 & 0.369 & 0.947 & 1.000 & 0.000\\
\bottomrule
\end{tabular}
\vspace{0.35em}
\begin{minipage}{0.96\textwidth}
\footnotesize
\emph{Notes.} CDF cov. and CDF len. are simultaneous CDF-band coverage and average length.
DF-QTE is the proposed density-free QTE band obtained by CDF-band inversion. EstD is a
Bonferroni simultaneous delta band using cross-fitted proximal kernel density estimates.
$|\widehat f/f-1|$ is the average absolute density-ratio error across arms and quantile
levels. Iso ratio is the projected-to-unprojected weighted squared-error ratio in
Proposition~\ref{prop:isotonic}; Iso viol. is the fraction of replications violating the
nonexpansiveness inequality.
\end{minipage}
\end{table}

\subsection{Component IV: machine-learning feature maps and nonlinear proxy measurements}
Table~\ref{tab:component4-main} reports the learned-feature experiment at the largest
sample size. The ordinary observed-confounding ML-IPW baseline performs poorly despite
using a flexible propensity learner: its CDF coverage collapses to 0.006 and its QTE
simultaneous coverage is only 0.254. This shows that flexible ML adjustment for
\((X,Z,W)\) does not repair latent confounding when the proximal bridge structure is
ignored.

All stable proximal PDR implementations substantially improve on the naive ML baseline.
The linear proximal sieve already reduces the CDF RMSE from 0.075 to 0.015 and restores
near-nominal CDF coverage. The learned feature maps have similar or slightly better
distributional performance, and the pre-specified ML-ensemble PDR gives the best overall
QTE performance: its average QTE RMSE is 0.154 and its simultaneous QTE band length is
1.360, both the smallest among the reported stable proximal implementations, while
maintaining simultaneous QTE coverage equal to one. The improvement over the linear
sieve is modest for the CDF point target but more visible for the QTE target, which is more
sensitive to nonlinear bridge approximation.

The appendix diagnostics show the same pattern across sample sizes.  Naive ML-IPW drifts toward an incorrect observed-confounding limit, whereas the proximal PDR estimators retain small CDF bias, lower CDF and QTE RMSE, and valid density-free QTE coverage.  The final proximal one-step estimator is unchanged across the learned-feature variants; the gains come from using feature maps that better approximate the bridge functions while preserving the final convex ridge bridge solve.  Full results across sample sizes and feature maps are given in Table~\ref{tab:component4-appendix}.

\begin{table}[!htbp]
\centering
\setlength{\tabcolsep}{2.6pt}
\renewcommand{\arraystretch}{1.05}
\caption{\textbf{Learned-feature proximal PDR at the largest sample size $n=3200$.}
The table compares ordinary observed-confounding ML-IPW, a linear proximal sieve, learned-feature proximal PDR estimators, and a pre-specified score ensemble in Component IV. Results use $R=500$ Monte Carlo replications, $N_{\mathrm{truth}}=10^6$ truth draws, three-fold cross-fitting, $K_Y=121$ CDF grid points, and $M=299$ multiplier draws. All proximal methods use the same final convex ridge moment equations; they differ only in the construction of \(B_W(W,X)\) and \(B_Z(Z,X)\). QTE coverage is simultaneous over \(\tau\in\{0.25,0.50,0.75\}\) using density-free CDF-band inversion.}
\label{tab:component4-main}
\begin{tabular}{@{}lrrrrr@{}}
\toprule
Method & CDF RMSE & CDF cov. & QTE RMSE & QTE cov. & QTE len.\\
\midrule
Naive ML-IPW & 0.075 & 0.006 & 0.663 & 0.254 & 1.567\\
Linear PDR & 0.015 & 0.938 & 0.159 & 1.000 & 1.419\\
Spline PDR & 0.018 & 0.930 & 0.180 & 1.000 & 1.560\\
RF-leaf PDR & 0.018 & 0.934 & 0.199 & 1.000 & 1.625\\
GB-leaf PDR & 0.017 & 0.942 & 0.187 & 1.000 & 1.472\\
ReLU PDR & 0.016 & 0.932 & 0.175 & 1.000 & 1.520\\
ML-ensemble PDR & 0.015 & 0.940 & 0.154 & 1.000 & 1.360\\
\bottomrule
\end{tabular}
\vspace{0.35em}
\begin{minipage}{0.96\textwidth}
\footnotesize
\emph{Notes.} All entries are computed at $n=3200$. CDF refers to \(F_1\{Q_1(0.50)\}\). QTE RMSE averages the RMSEs for
\(\Delta_Q(0.25)\), \(\Delta_Q(0.50)\), and \(\Delta_Q(0.75)\). QTE coverage and length refer
to the density-free simultaneous QTE band. The final row averages the cross-fitted PDR
score processes from spline, RF-leaf, GB-leaf, and ReLU feature maps before constructing
the CDF and QTE bands. The full sample-size and feature-map diagnostics are reported in Table~\ref{tab:component4-appendix}.
\end{minipage}
\end{table}

\section{Real data illustration}
\label{sec:realdata}

We illustrate the proposed method using the publicly available Right Heart Catheterization
(RHC) data from the SUPPORT study \citep{connors1996effectiveness,hirano2001estimation}.
The treatment is whether a patient received RHC during the first 24 hours of ICU care. The
analysis sample has \(n=5735\) patients, with \(n_1=2184\) treated patients and \(n_0=3551\)
controls. We use hospital length of stay as a real-valued distributional outcome and analyze
\[
        Y=\log\{1+\mathrm{hospital\ days}\}.
\]
The logarithmic transformation reduces the influence of extremely long stays while preserving
the ordering needed for CDF and quantile analysis.

The data contain several baseline physiologic measurements that are scientifically plausible
proxies for latent severity. Following the proxy allocation used in previous proximal analyses
of these data \citep{cui2024semiparametric}, we take
\[
        Z=(\texttt{pafi1},\texttt{paco21}),\qquad
        W=(\texttt{ph1},\texttt{hema1}).
\]
Here \(\texttt{pafi1}\) is the day-1 PaO2/FIO2 ratio and \(\texttt{paco21}\) is PaCO2; these are used
as treatment-inducing proxies because respiratory and blood-gas measurements are plausibly
related to treatment decisions and latent physiologic severity. The variables \(\texttt{ph1}\) and
\(\texttt{hema1}\), arterial pH and hematocrit, are used as outcome-inducing proxies because they
are baseline physiologic measurements related to downstream outcomes and latent severity. The
remaining baseline demographic, diagnosis, severity, and comorbidity variables are collected in
\(X\). Continuous covariates are median-imputed and standardized; categorical covariates are
mode-imputed and one-hot encoded. To keep the bridge system numerically stable, the final
finite-rank basis uses the five baseline covariate features most associated with treatment and
outcome, together with the proxy main effects, proxy quadratic terms, proxy interactions, and
covariate--proxy interactions. All nuisance estimates are five-fold cross-fitted, the ridge parameter
is \(\lambda=0.1n^{-1/2}\), and the density-free QTE bands use \(M=1000\) Rademacher multiplier
draws.

Table~\ref{tab:rhc-realdata-main} summarizes the main distributional estimates. Positive QTE
values indicate a longer hospital stay under RHC on the log-day scale. The ordinary
observed-confounding baseline, Naive ML-IPW, suggests a sizable positive shift in the hospital
length-of-stay distribution: the estimated QTEs are \(0.218\), \(0.278\), and \(0.434\) at the
0.25, 0.50, and 0.75 quantiles. In contrast, the proposed Proximal PDR estimator attenuates
these estimates to \(0.065\), \(0.058\), and \(0.087\), respectively. The same attenuation appears
for the lower-tail CVaR contrasts: Naive ML-IPW estimates \(\Delta_C(0.25)=0.113\),
\(\Delta_C(0.50)=0.178\), and \(\Delta_C(0.75)=0.239\), whereas Proximal PDR gives
\(-0.007\), \(0.026\), and \(0.043\). Thus ordinary ML adjustment appears to attribute a much
larger distributional shift to RHC than the proximal analysis.

The PDR density-free QTE bands further show that, after proxy adjustment, the displayed
distributional contrasts are not sharply separated from zero. The three reported QTE bands all
contain zero. This should not be interpreted as proving that RHC has no effect; the true
counterfactual distribution is unknown and the validity of the proximal analysis depends on the
proxy assumptions. Rather, the real-data analysis illustrates the practical role of the proposed
method: it changes the estimand-relevant adjustment by using the primal-dual proximal bridge
structure, and it supplies density-free uncertainty bands for the QTE process. The complete QTE
grid in Table~\ref{tab:rhc-realdata-appendix} shows that this attenuation is not an artifact of
reporting only three quantiles.

\begin{table}[!h]
\centering
\setlength{\tabcolsep}{3.1pt}
\renewcommand{\arraystretch}{1.05}
\caption{\textbf{RHC real-data distributional estimates.}
The analysis uses the Right Heart Catheterization data with \(n=5735\) patients
(\(n_1=2184\) RHC, \(n_0=3551\) no RHC). The outcome is
\(\log(1+\mathrm{hospital\ days})\). Proximal estimates use
\(Z=(\texttt{pafi1},\texttt{paco21})\) and
\(W=(\texttt{ph1},\texttt{hema1})\), five-fold cross-fitting, ridge
\(\lambda=0.1n^{-1/2}\), and \(M=1000\) Rademacher multiplier draws for CDF and QTE bands.}
\label{tab:rhc-realdata-main}
\begin{tabular}{@{}lrrrrl@{}}
\toprule
Estimand & Naive ML-IPW & Proximal POR & Proximal PIPW & Proximal PDR & PDR 95\% band\\
\midrule
\(\Delta_Q(0.25)\) & 0.218 & 0.117 & 0.074 & 0.065 & [-0.201, 0.286]\\
\(\Delta_Q(0.50)\) & 0.278 & 0.156 & 0.114 & 0.058 & [-0.095, 0.248]\\
\(\Delta_Q(0.75)\) & 0.434 & 0.244 & 0.191 & 0.087 & [-0.152, 0.382]\\
\(\Delta_C(0.25)\) & 0.113 & 0.036 & \(6.92{\times}10^{-4}\) & -0.007 & --\\
\(\Delta_C(0.50)\) & 0.178 & 0.082 & 0.047 & 0.026 & --\\
\(\Delta_C(0.75)\) & 0.239 & 0.113 & 0.077 & 0.043 & --\\
\bottomrule
\end{tabular}
\vspace{0.35em}
\begin{minipage}{0.96\textwidth}
\footnotesize
\emph{Notes.} Positive QTE and CVaR contrasts indicate a longer hospital stay under RHC
on the log-day scale. Naive ML-IPW is an ordinary observed-confounding analysis using
\((X,Z,W)\) as adjustment variables but no proximal bridge. Proximal POR, PIPW, and PDR
use the outcome bridge, the dual bridge, and the primal-dual one-step score, respectively.
The PDR band is the density-free simultaneous QTE band for the QTE rows, obtained by
inverting the multiplier CDF band over the reported QTE grid. CVaR rows are reported as
point estimates because the real-data display focuses on distributional contrasts. Full QTE
grid estimates are reported in Table~\ref{tab:rhc-realdata-appendix}.
\end{minipage}
\end{table}

\section{Discussion}\label{sec:discussion}

This paper studies counterfactual distributional inference under proximal identification as an inverse-problem question rather than as a collection of unrelated pointwise proximal mean problems.  The resulting picture is deliberately two-sided.  When the dual bridge exists in the adjoint range and the residual moment is finite, the counterfactual CDF has a canonical gradient, the CDF process has an efficient Gaussian limit, and cross-fitted one-step estimators admit uniform doubly robust expansions.  When the proxy operator is weak, the same theory explains why inference becomes unstable: the dual bridge norm and efficient variance inflate, and in singular-coordinate benchmarks root-$n$ regularity can fail.  This distinction is important in applications because a proximal bridge equation can be identifiable in principle but nearly nonregular in practice.

The density-free quantile bands are intentionally conservative.  They trade interval length for a clean inferential guarantee obtained by deterministic inversion of simultaneous CDF bands.  This is useful when estimating the counterfactual density at a quantile is itself an ill-conditioned proximal problem.  Component III shows this tradeoff directly: estimated-density delta bands can be shorter, but may undercover under density-estimation stress, whereas CDF-band inversion remains stable.  The same philosophy motivates the shortfall representation for lower-tail CVaR.  By differentiating the value of the optimized shortfall criterion rather than the quantile map itself, the CVaR influence function avoids a counterfactual density term.

The finite-rank estimators used here are intentionally transparent.  In square well-conditioned systems, the bridge estimators are closed-form and the plug-in and one-step forms are algebraically linked.  In rectangular or weakly identified systems, Tikhonov regularization remains a convex quadratic computation but identifies only canonical projections or stabilized residuals.  This distinction prevents the common overclaim that arbitrary bridge coefficients can be learned from an underidentified conditional moment equation.  The learned-feature simulations extend the same idea: modern representations can be used to construct finite-dimensional features, but the final proximal step remains the same convex ridge bridge solve, and the conditioning of the empirical operator remains a central diagnostic.

Several limitations remain.  First, the regularity dichotomy is proved in a threshold-saturated observed-data bridge model.  This is the right model for a sharp impossibility statement, but in smaller semiparametric submodels the canonical gradient is the tangent-space projection and nonregularity can be model-specific.  Second, the minimax phase transition is stated for a Gaussian singular-coordinate benchmark.  It gives a clean local inverse-problem diagnostic, but a full minimax theory for the entire proximal observed-data model would require an additional LAN embedding or a primitive least-favorable construction.  Third, the uniform process results use high-level empirical-process and cross-fitting assumptions for estimated bridge classes.  These assumptions are standard for semiparametric inference with flexible nuisance estimators, but verifying them for particular deep or adaptive feature learners remains problem-specific.  Fourth, proximal identification itself depends on scientifically valid proxy classification and completeness-type richness conditions; these assumptions cannot be certified by the observed data alone.

Future work should develop practical diagnostics for proxy relevance and bridge stability, including operator-spectrum diagnostics that can be reported with uncertainty.  Another direction is adaptive regularization for the CDF process, balancing weak-proxy bias and variance while preserving valid simultaneous bands.  It would also be valuable to extend the present point-treatment theory to longitudinal treatments and censoring, where survival-curve and dynamic-regime analogues require additional bridge and censoring structures.  Finally, although the simulations show that learned feature maps can improve nonlinear bridge approximation, a general theory for representation learning in proximal inverse problems remains open.  Such a theory would need to connect approximation error, empirical operator conditioning, and the orthogonal product remainder in a single set of verifiable conditions.

\newpage
\appendix
\setcounter{table}{0}
\renewcommand{\thetable}{A.\arabic{table}}
\renewcommand{\theHtable}{A.\arabic{table}}
\setcounter{figure}{0}
\renewcommand{\thefigure}{A.\arabic{figure}}
\renewcommand{\theHfigure}{A.\arabic{figure}}

\section{Proofs for identification and regularity}\label{app:regularity-proofs}

\begin{proof}[Proof of Theorem \ref{thm:identification}]
Let
\[
        r(U,X)=\E\{B_y\mid U,A=a,X\}-\E\{h_{a,y}(W,X)\mid U,X\}.
\]
By Assumption \ref{ass:causal}, $Y\perp Z\mid U,A,X$ implies
\[
        \E\{B_y\mid U,A=a,Z,X\}=\E\{B_y\mid U,A=a,X\},
\]
and $W\perp (A,Z)\mid U,X$ implies
\[
        \E\{h_{a,y}(W,X)\mid U,A=a,Z,X\}=\E\{h_{a,y}(W,X)\mid U,X\}.
\]
Combining the last two displays gives
\[
        \E\{B_y-h_{a,y}(W,X)\mid U,A=a,Z,X\}=r(U,X).
\]
Taking conditional expectation of this identity given $(A=a,Z,X)$ and using the observed bridge equation yields
\[
        \E\{r(U,X)\mid A=a,Z,X\}
        =\E\{B_y-h_{a,y}(W,X)\mid A=a,Z,X\}=0.
\]
The completeness condition in Assumption \ref{ass:bridge-existence} yields $r(U,X)=0$, and hence
\[
        \E\{B_y\mid U,A=a,X\}=\E\{h_{a,y}(W,X)\mid U,X\}.
\]
By consistency and latent exchangeability,
\[
        \E\{B_y(a)\mid U,X\}=\E\{B_y\mid U,A=a,X\}.
\]
Therefore
\[
\begin{aligned}
        F_a(y)
        &=\E\{B_y(a)\}
          =\E\big[\E\{B_y(a)\mid U,X\}\big]  \\
        &=\E\big[\E\{h_{a,y}(W,X)\mid U,X\}\big]
          =\E\{h_{a,y}(W,X)\}.
\end{aligned}
\]
For the dual representation, $T_a^*q_a=1$ gives
\[
        \E\{h_{a,y}(W,X)\}=\inner{h_{a,y}}{1}_W
        =\inner{h_{a,y}}{T_a^*q_a}_W
        =\inner{T_ah_{a,y}}{q_a}_{a,Z}.
\]
Since $T_ah_{a,y}=g_{a,y}$,
\[
        \inner{T_ah_{a,y}}{q_a}_{a,Z}
        =\E\{\ind{A=a}q_a(Z,X)g_{a,y}(Z,X)\}
        =\E\{\ind{A=a}q_a(Z,X)B_y\}.
\]
This proves both representations.
\end{proof}

\begin{proof}[Proof of Lemma \ref{lem:threshold-tilt}]
Fix $v\in\mathcal D_a\cap\Null(T_a)^\perp$.  By Assumption \ref{ass:local-model}(vi), the conditional tilt belongs to $\cM_{a,y}$, is regular, and has canonical bridge derivative $v$.  It remains only to verify the stated conditional score calculation.  The function $s_v$ is bounded by Assumption \ref{ass:local-model}(iv) and the definition of $\mathcal D_a$.  It has conditional mean zero given $V=(A,Z,W,X)$ because $\E(B_y\mid A=a,Z,W,X)=m_{a,y}(Z,W,X)$ and $s_v=0$ off $\{A=a\}$.  Hence $\{1+t s_v\}$ is positive for sufficiently small $|t|$ and integrates to one conditionally on $V$.  Differentiating the conditional expectation gives
\[
\begin{aligned}
\left.\frac{d}{dt}\E_t(B_y\mid A=a,Z,X)\right|_{t=0}
&=\E\{B_y s_v(O)\mid A=a,Z,X\} \\
&=\E\left[\E\{B_y s_v(O)\mid A=a,Z,W,X\}\mid A=a,Z,X\right] \\
&=\E\left[\frac{m_{a,y}(1-m_{a,y})}{m_{a,y}(1-m_{a,y})}T_av(Z,X)\mid A=a,Z,X\right] \\
&=T_av.
\end{aligned}
\]
Assumption \ref{ass:local-model}(vi) gives $h_{a,y,P_t}=h_{a,y}+tv+o(t)$ in $L_2(P_{W,X})$.  Because the law of $(W,X)$ is fixed by the tilt, the derivative of $P_t h_{a,y,P_t}$ is $\ellfun(v)$.
\end{proof}

\begin{proof}[Proof of Lemma \ref{lem:riesz}]
Assume first that $|\ell(h)|\le C\norm{Th}_G$.  Then $h\in\Null(T)$ implies $\ell(h)=0$, so $L(Th)=\ell(h)$ is a well-defined linear functional on $\Range(T)$.  The bound makes $L$ continuous on $\Range(T)$, and it therefore extends continuously to $\cl\{\Range(T)\}$.  By the Riesz representation theorem, there is a $q_1\in\cl\{\Range(T)\}$ such that $L(g)=\inner{g}{q_1}_G$ for all $g\in\Range(T)$.  Hence
\[
        \ell(h)=\inner{Th}{q_1}_G=\inner{h}{T^*q_1}_H
\]
for all $h\in H$, so $T^*q_1=u$.  Conversely, if $T^*q=u$, then
\[
        |\ell(h)|=|\inner{h}{T^*q}_H|=|\inner{Th}{q}_G|
        \le \norm{q}_G\norm{Th}_G,
\]
and $\ell$ vanishes on $\Null(T)$.  If $q$ solves $T^*q=u$, subtracting its projection onto $\Null(T^*)$ gives another solution with no larger norm; the minimum-norm solution lies in $\Null(T^*)^\perp$.
\end{proof}

\begin{proof}[Proof of Theorem \ref{thm:regularity}]
We first prove sufficiency.  Suppose a $y$-regular dual bridge $q_a\in\cH_{a,Z}$ satisfies $T_a^*q_a=1$.  The residual-moment condition in Definition \ref{def:dual-bridge}, together with $h_{a,y}\in L_2(P_{W,X})$, implies $\varphi_{a,y}\in L_2(P_0)$.  Let $\{P_t:|t|<\delta\}\subset\cM_{a,y}$ be a regular parametric submodel through $P_0$ with score $s$.  Write $h_t=h_{a,y,P_t}$, $h=h_{a,y}$, and $F=F_a(y)$.  Differentiating $\Psi_{a,y}(P_t)=P_t h_t$ gives
\[
        \left.\frac{d}{dt}\Psi_{a,y}(P_t)\right|_{t=0}
        =\E\{(h(W,X)-F)s(O)\}+\E\{\dot h(W,X)\},
\]
where $\dot h=\partial_t h_t|_{t=0}$.  Because $T_a^*q_a=1$,
\[
        \E\{\dot h(W,X)\}=\inner{\dot h}{1}_W
        =\inner{\dot h}{T_a^*q_a}_W
        =\E\{\ind{A=a}q_a(Z,X)\dot h(W,X)\}.
\]
Differentiating the bridge moment
\[
        P_t\big[\ind{A=a}r(Z,X)\{B_y-h_t(W,X)\}\big]=0
\]
at $t=0$ and using the $L_2$ approximation clause in Assumption \ref{ass:local-model}(ii) with $r=q_a$ gives
\[
        \E\{\ind{A=a}q_a(Z,X)\dot h(W,X)\}
        =\E\{\ind{A=a}q_a(Z,X)(B_y-h(W,X))s(O)\}.
\]
Combining the last displays proves that the pathwise derivative equals $\E\{\varphi_{a,y}(O)s(O)\}$.  The function has mean zero by the primal and dual bridge equations.  Since the tangent closure is $L_2^0(P_0)$ by Assumption \ref{ass:local-model}(iii), this influence function is already its own tangent projection and is therefore canonical.  In a smaller submodel, the canonical gradient is the orthogonal projection of this influence function onto the smaller tangent closure.

For necessity, suppose $\Psi_{a,y}$ is pathwise differentiable with influence function $\phi\in L_2^0(P_0)$.  For every $v\in\mathcal D_a\cap\Null(T_a)^\perp$, Lemma \ref{lem:threshold-tilt} supplies a regular submodel whose target derivative is $\ellfun(v)$.  Hence
\[
        \ellfun(v)=\E\{\phi(O)s_v(O)\}.
\]
By Cauchy's inequality and Assumption \ref{ass:local-model}(iv),
\[
        |\ellfun(v)|\le \norm{\phi}_2\norm{s_v}_2
        \le C\norm{\phi}_2\norm{T_av}_{a,Z}.
\]
Now take any $u\in\Null(T_a)^\perp$.  By Assumption \ref{ass:local-model}(v), there are $v_m\in\mathcal D_a\cap\Null(T_a)^\perp$ such that $\norm{v_m-u}_W+\norm{T_a(v_m-u)}_{a,Z}\to0$.  Therefore $\ellfun(v_m)\to\ellfun(u)$ and $T_av_m\to T_au$, and the displayed inequality extends to $u$.  If $w\in\Null(T_a)$, then $\ellfun(w)=0$ by Assumption \ref{ass:local-model}(i).  Decomposing any $h\in\cH_W$ as $h=u+w$ with $u\in\Null(T_a)^\perp$ and $w\in\Null(T_a)$ gives
\[
        |\ellfun(h)|=|\ellfun(u)|\le C\norm{\phi}_2\norm{T_au}_{a,Z}=C\norm{\phi}_2\norm{T_ah}_{a,Z}.
\]
Lemma \ref{lem:riesz}, with $T=T_a$ and $u=1$, yields $q_a\in\cH_{a,Z}$ such that $T_a^*q_a=1$.  The residual-square-integrability clause in Assumption \ref{ass:local-model}(vii) makes this dual bridge $y$-regular.  If no $y$-regular dual bridge exists, pathwise differentiability is impossible; regular root-$n$ estimation with bounded asymptotic variance would imply pathwise differentiability, so it is also impossible.
\end{proof}

\begin{proof}[Proof of Corollary \ref{cor:dr}]
If $\bar h_y$ is an outcome bridge, then
\[
\begin{aligned}
\Gamma_{a,y}(\bar h_y,\bar q)
&=\E\{\bar h_y(W,X)\}
 +\E\left[\ind{A=a}\bar q(Z,X)\E\{B_y-\bar h_y(W,X)\mid A=a,Z,X\}\right]\\
&=\E\{\bar h_y(W,X)\}=F_a(y).
\end{aligned}
\]
If $\bar q$ is a treatment bridge, then
\[
\begin{aligned}
\Gamma_{a,y}(\bar h_y,\bar q)
&=\E\{\bar h_y(W,X)\}
 +\E\{\ind{A=a}\bar q(Z,X)B_y\}
 -\E\{\ind{A=a}\bar q(Z,X)\bar h_y(W,X)\}\\
&=\E\{\ind{A=a}\bar q(Z,X)B_y\}
 +\E\{\bar h_y(W,X)[1-T_a^*\bar q(W,X)]\}\\
&=\E\{\ind{A=a}\bar q(Z,X)B_y\}=F_a(y),
\end{aligned}
\]
where the final equality is the dual representation in Theorem \ref{thm:identification}.
\end{proof}

\begin{proof}[Proof of Proposition \ref{prop:special-cases}]
If $Z=W=X$ and there is no unmeasured confounding, then $T_ah=\E\{h(X)\mid A=a,X\}=h(X)$.  Hence the primal bridge is $h_{a,y}(X)=\E(B_y\mid A=a,X)$.  The adjoint equation is
\[
        T_a^*q_a=\E\{\ind{A=a}q_a(X)\mid X\}=P(A=a\mid X)q_a(X)=1,
\]
so $q_a(X)=1/P(A=a\mid X)$.  Substitution into $\varphi_{a,y}$ gives the standard augmented inverse-probability influence function for the counterfactual CDF under observed exchangeability, which is the fixed-threshold building block for QTE inference in \citet{kallus2024localized}.  Replacing $B_y$ by $Y$ gives
\[
        h_a(W,X)-\E\{Y(a)\}+\ind{A=a}q_a(Z,X)\{Y-h_a(W,X)\},
\]
and subtracting the $a=0$ expression from the $a=1$ expression yields the proximal ATE influence function in \citet{cui2024semiparametric}.
\end{proof}

\section{Proofs for spectral regularity and lower bounds}\label{app:spectral-proofs}

\begin{proof}[Proof of Theorem \ref{thm:spectral}]
Under Assumption \ref{ass:svd}, any $q\in\cH_{a,Z}$ has expansion $q=q_0+\sum_j q_jf_{a,j}$ with $q_0\in\Null(T_a^*)$.  The equation $T_a^*q=1$ is equivalent, on $\Null(T_a)^\perp$, to $s_{a,j}q_j=\ell_{a,j}$ for every $j$.  Therefore a square-integrable solution exists if and only if $\sum_j\ell_{a,j}^2/s_{a,j}^2<\infty$, and the minimum-norm solution is the displayed series.  Theorem \ref{thm:regularity} converts this adjoint-range condition into regularity exactly when the residual moment is finite.  If the series diverges, the finite truncations have norm increasing to infinity.  Under residual nondegeneracy,
\[
\begin{aligned}
\E\left[\ind{A=a}q_{a,m}^2(Z,X)\{B_y-h_{a,y}(W,X)\}^2\right]
&=\E\left[\ind{A=a}q_{a,m}^2(Z,X)
  \E\{(B_y-h_{a,y})^2\mid A=a,Z,X\}\right] \\
&\ge \sigma_y^2\norm{q_{a,m}}_{a,Z}^2,
\end{aligned}
\]
so the finite-dimensional efficiency bounds diverge at least at the stated rate.
\end{proof}

\begin{proof}[Proof of Theorem \ref{thm:minimax}]
The truncated estimator
\[
        \widehat L_m=\sum_{j=1}^m\frac{\ell_j}{s_j}X_j
\]
has variance
\[
        V_m=n^{-1}\sum_{j=1}^m\frac{\ell_j^2}{s_j^2}
\]
and worst-case squared bias
\[
        B_m^2=\sup_{\theta\in\Theta_\beta(R)}
        \left(\sum_{j>m}\ell_j\theta_j\right)^2
        =R^2\sum_{j>m}\ell_j^2j^{-2\beta},
\]
where the equality follows by Cauchy's inequality in the ellipsoid norm.  If $s_j\asymp j^{-\alpha}$ and $|\ell_j|\asymp j^{-\rho}$, then
\[
        V_m\asymp n^{-1}\sum_{j=1}^m j^{2\alpha-2\rho},
        \qquad
        B_m^2\asymp \sum_{j>m}j^{-2\beta-2\rho}.
\]
When $\rho>\alpha+1/2$, $\sum_j\ell_j^2/s_j^2<\infty$ and $V_m\le Cn^{-1}$ uniformly in $m$, while $B_m\to0$ as $m\to\infty$; choosing $m\to\infty$ slowly gives risk $O(n^{-1})$.  A one-dimensional parametric submodel gives the lower bound $\mathfrak R_n\ge cn^{-1}$.

When $\rho<\alpha+1/2$,
\[
        V_m\asymp n^{-1}m^{2\alpha-2\rho+1},
        \qquad
        B_m^2\asymp m^{-2\beta-2\rho+1}.
\]
Balancing these two terms gives $m\asymp n^{1/(2\alpha+2\beta)}$ and the displayed upper risk.  For the matching lower bound, choose $m\asymp n^{1/(2\alpha+2\beta)}$ and define two parameter points supported on $\{m+1,\ldots,2m\}$ by
\[
        \theta_j^{\pm}=\pm c_R m^{-\beta-1/2}\operatorname{sign}(\ell_j),
        \qquad m<j\le2m,
\]
with $\theta_j^{\pm}=0$ otherwise and $c_R>0$ small enough.  Their Kullback--Leibler divergence is bounded because
\[
        n\sum_{j=m+1}^{2m}s_j^2(\theta_j^+-\theta_j^-)^2
        \lesssim n m^{-2\alpha-2\beta}\asymp 1.
\]
Their target separation satisfies
\[
        |L(\theta^+)-L(\theta^-)|
        \gtrsim m^{1/2-\beta-\rho},
\]
so the squared separation has the displayed order.  Le Cam's two-point lemma gives the lower bound.  At the boundary $\rho=\alpha+1/2$, $\sum_{j=1}^m\ell_j^2/s_j^2\asymp\log m$, giving the partial efficiency and upper-risk statements.
\end{proof}

\section{Proofs for process and estimator results}\label{app:process-proofs}

\begin{proof}[Proof of Proposition \ref{prop:finite-vc}]
A fixed finite-dimensional linear class with bounded basis and coefficients ranging over a bounded-variation curve has polynomial covering entropy in $L_2(Q)$ uniformly over finitely discrete $Q$.  The indicator class $\{\ind{Y\le y}:y\in\cY\}$ is VC.  Products and sums of bounded VC-type classes are VC type.  Since $q_a$ is fixed and bounded, the influence-function class $\Phi_a$ is VC type and $L_2(P)$-continuous in $y$ whenever $\theta_a(y)$ is continuous.  Thus Assumption \ref{ass:process}(i)--(iii) hold.  This entropy argument does not imply the joint tangent-saturation condition in Assumption \ref{ass:process}(iv), which is why that condition remains an explicit model assumption.
\end{proof}

\begin{proof}[Proof of Theorem \ref{thm:cdf-process}]
Under Assumption \ref{ass:process}, $\{B_y:y\in\cY\}$ is VC and $\{h_{a,y}:y\in\cY\}$ is VC type with bounded envelope.  Products with the fixed bounded function $\ind{A=a}q_a(Z,X)$ preserve the VC-type entropy bound.  Hence $\Phi_a$ is $P_0$-Donsker.  The covariance kernel follows from the canonical gradients in Theorem \ref{thm:regularity}.

For the efficiency statement, fix $\mathcal J=\{(a_1,y_1),\ldots,(a_m,y_m)\}$.  Assumption \ref{ass:process}(iv) defines a common joint model $\cM_{\mathcal J}$ whose tangent closure is $L_2^0(P_0)$ and for which all componentwise differentiated bridge identities hold simultaneously.  Repeating the proof of Theorem \ref{thm:regularity} component by component shows that every linear combination $\sum_{j=1}^m c_jF_{a_j}(y_j)$ has influence function $\sum_{j=1}^m c_j\varphi_{a_j,y_j}$.  Since the joint tangent closure is all of $L_2^0(P_0)$, this influence function is canonical for every linear combination.  This is equivalent to saying that the vector canonical gradient is $(\varphi_{a_1,y_1},\ldots,\varphi_{a_m,y_m})$.  The finite-dimensional semiparametric convolution theorem then gives the Loewner lower bound $\E(\varphi_{\mathcal J}\varphi_{\mathcal J}^\top)$.
\end{proof}

\begin{proof}[Proof of Lemma \ref{lem:drift}]
Subtract $F_a(y)=\E\{h_{a,y}(W,X)\}$ and add and subtract the true bridges.  Using $\E\{B_y-h_{a,y}\mid A=a,Z,X\}=0$ and $T_a^*q_a=1$,
\[
\begin{aligned}
&\E[\bar h_y+\ind{A=a}\bar q(B_y-\bar h_y)]-F_a(y)\\
&=\E(\bar h_y-h_{a,y})+\E[\ind{A=a}(\bar q-q_a)(B_y-h_{a,y})]\\
&\quad+\E[\ind{A=a}q_a(B_y-h_{a,y})]
      -\E[\ind{A=a}\bar q(\bar h_y-h_{a,y})]\\
&=\E(\bar h_y-h_{a,y})-\E[\ind{A=a}\bar q(\bar h_y-h_{a,y})]\\
&=-\E[\ind{A=a}(\bar q-q_a)(\bar h_y-h_{a,y})],
\end{aligned}
\]
because $\E[\ind{A=a}q_a(\bar h_y-h_{a,y})]=\E(\bar h_y-h_{a,y})$.
\end{proof}

\begin{proof}[Proof of Theorem \ref{thm:uniform-dr}]
Cross-fitting makes the nuisance estimators conditionally fixed on each evaluation fold.  Applying Lemma \ref{lem:drift} fold by fold gives the deterministic drift
\[
        \sup_y |P\{\widehat\psi_{a,y}^{(-k)}-\psi_{a,y}\}|
        \le r_{q,a,n}r_{h,a,n},
\]
up to fixed norm constants, where $\widehat\psi_{a,y}^{(-k)}$ denotes the one-step score with fold-$k$ nuisance estimates and $\psi_{a,y}$ the score with true nuisances.  Summing over the fixed number of folds preserves the order $o_p(n^{-1/2})$.

For the empirical part, write the fold-specific score centered at the true target as $\widehat\phi_{a,y}^{(-k)}$ as in Assumption \ref{ass:nuisance}.  The use of $\widehat F_a(y)$ rather than $F_a(y)$ in $\widehat\varphi_{a,y}^{(-k)}$ only subtracts a constant in $O$; it has no effect on $(\mathbb P_{n,k}-P)$ empirical-process terms.  Hence
\[
        (\mathbb P_{n,k}-P)\{\widehat\psi_{a,y}^{(-k)}-\psi_{a,y}\}
        =(\mathbb P_{n,k}-P)\{\widehat\phi_{a,y}^{(-k)}-\varphi_{a,y}\}.
\]
Assumption \ref{ass:nuisance}(iii) makes the supremum of this term $o_p(n^{-1/2})$ uniformly in $y$.  The remaining term is $(P_n-P)\varphi_{a,y}$, indexed by the Donsker class $\Phi_a$.  Combining the drift and empirical-process pieces proves
\[
        \sup_{y\in\cY}\left|\widehat F_a(y)-F_a(y)-(P_n-P)\varphi_{a,y}\right|=o_p(n^{-1/2}).
\]
Weak convergence follows from the Donsker central limit theorem and Theorem \ref{thm:cdf-process}.
\end{proof}

\begin{proof}[Proof of Proposition \ref{prop:rectangular}]
If $v\in\Null(\Sigma)$, then $\Sigma(\theta+v)=\Sigma\theta$, so the data-generating moments $(\Sigma,\gamma)$ cannot distinguish $\theta$ from $\theta+v$.  Hence the null-space component is unidentified.  The Moore--Penrose solution is the unique solution of minimum Euclidean norm among all solutions when the equation is solvable, and it lies in $\Range(\Sigma^\top)$.  The Tikhonov solution $(\Sigma^\top\Sigma+\lambda I)^{-1}\Sigma^\top\gamma$ has singular-value expansion $\sum_j s_j(s_j^2+\lambda)^{-1}\inner{\gamma}{u_j}v_j$, which converges to the Moore--Penrose expansion as $\lambda\downarrow0$.  The dual statement is identical with $\Sigma^\top$ in place of $\Sigma$.
\end{proof}

\begin{proof}[Proof of Proposition \ref{prop:plugin-onestep}]
By the sample normal equations,
\[
        P_n\{\ind{A=a}b_Z(B_y-b_W^\top\widehat\theta_a(y))\}=0,
        \qquad
        P_n\{b_W-\ind{A=a}b_Wb_Z^\top\widehat\alpha_a\}=0.
\]
Therefore
\[
\begin{aligned}
&P_n\left[\widehat h_{a,y}+\ind{A=a}\widehat q_a(B_y-\widehat h_{a,y})\right]\\
&=\widehat\mu_W^\top\widehat\theta_a(y)
 +\widehat\alpha_a^\top\widehat\gamma_a(y)
 -\widehat\alpha_a^\top\widehat\Sigma_a\widehat\theta_a(y).
\end{aligned}
\]
Since $\widehat\Sigma_a\widehat\theta_a(y)=\widehat\gamma_a(y)$ and $\widehat\Sigma_a^\top\widehat\alpha_a=\widehat\mu_W$, the expression equals both $\widehat\mu_W^\top\widehat\theta_a(y)$ and $\widehat\alpha_a^\top\widehat\gamma_a(y)$.
\end{proof}

\begin{proof}[Proof of Theorem \ref{thm:sieve-rate}]
The proof is the same for the full sample and for any fixed training fold because Assumption \ref{ass:finite-rank}(iii) states the concentration bounds fold-wise and each training fold has size proportional to $n$.  On the event $\norm{\widehat\Sigma_a-\Sigma_a}_{\opnorm}\le\kappa_{a,n}/2$, which has probability tending to one, $\widehat\Sigma_a$ is invertible and $\norm{\widehat\Sigma_a^{-1}}_{\opnorm}\le2\kappa_{a,n}^{-1}$.  Then
\[
\begin{aligned}
\widehat\theta_a(y)-\theta_a^{(d)}(y)
&=\widehat\Sigma_a^{-1}\{\widehat\gamma_a(y)-\gamma_a(y)\}
 +\widehat\Sigma_a^{-1}\{\Sigma_a-\widehat\Sigma_a\}\theta_a^{(d)}(y),
\end{aligned}
\]
so uniformly in $y$,
\[
\sup_y\norm{\widehat\theta_a(y)-\theta_a^{(d)}(y)}_2
=O_p\left(\kappa_{a,n}^{-1}\sqrt{d\log n/n}
+\kappa_{a,n}^{-1}M_{h,a,n}\sqrt{d/n}\right).
\]
The Gram condition transfers coefficient norm to $L_2(P_{W,X})$ norm, and adding $b_{h,a,n}$ gives the first display.  The proof for $\alpha_a$ is the same, using
\[
\widehat\alpha_a-\alpha_a^{(d)}
=\widehat\Sigma_a^{-\top}\{\widehat\mu_W-\mu_W\}
 +\widehat\Sigma_a^{-\top}\{\Sigma_a^\top-\widehat\Sigma_a^\top\}\alpha_a^{(d)}.
\]
For ridge estimators with identity weighting, the resolvent identity gives
\[
(\widehat\Sigma_a^\top\widehat\Sigma_a+\lambda I)^{-1}\widehat\Sigma_a^\top
-\widehat\Sigma_a^{-1}
=-(\widehat\Sigma_a^\top\widehat\Sigma_a+\lambda I)^{-1}\lambda\widehat\Sigma_a^{-1},
\]
so, on the same event, the additional deterministic bias is bounded by a constant multiple of $(\lambda/\kappa_{a,n}^2)$ times the corresponding coefficient norm.  The final root-$n$ statement follows by substituting these fold-wise rates into Theorem \ref{thm:uniform-dr}.
\end{proof}

\begin{proof}[Proof of Corollary \ref{cor:kappa}]
Under the stated simplifications, the first rate is $d^\alpha\sqrt{d\log n/n}$ and the second is $d^\alpha\sqrt{d/n}$.  Their product is
\[
        d^{2\alpha}\frac{d\sqrt{\log n}}{n}
        =\frac{d^{1+2\alpha}\sqrt{\log n}}{n}.
\]
The product-rate condition in Theorem \ref{thm:uniform-dr} requires this to be $o(n^{-1/2})$, which is equivalent to the displayed sufficient condition.
\end{proof}

\section{Proofs for monotonicity, bands, and CVaR}\label{app:bands-proofs}

\begin{proof}[Proof of Proposition \ref{prop:isotonic}]
The monotone cone $\{v:0\le v_1\le\cdots\le v_K\le1\}$ is a closed convex subset of the weighted Euclidean space with norm $\norm{v}_\omega^2=\sum_k\omega_kv_k^2$.  The metric projection onto a closed convex set in a Hilbert space is nonexpansive.  Since $F_a^K$ belongs to the cone,
\[
        \norm{\widetilde F_a-F_a^K}_\omega
        \le \norm{\widehat F_a-F_a^K}_\omega.
\]
The first-order equivalence under projection inactivity is immediate because the projection map equals the identity with probability tending to one on the region considered.
\end{proof}

\begin{proof}[Proof of Theorem \ref{thm:cdf-bands}]
By Theorem \ref{thm:uniform-dr},
\[
        \sqrt n(\widehat F_a-F_a)=(1/\sqrt n)\sum_{i=1}^n\varphi_{a,\cdot}(O_i)+o_p(1)
\]
in $\ell^\infty(\cY)$ jointly over arms.  Assumptions \ref{ass:nuisance} and \ref{ass:bootstrap} give conditional multiplier consistency for replacing $\varphi$ by $\widehat\varphi$.  The conditional quantile $\widehat c_{1-\alpha}$ therefore consistently estimates the corresponding quantile of $\max_a\sup_y|\mathbb G_a(y)|$; the continuity condition in Assumption \ref{ass:bootstrap} avoids ambiguity at the quantile.  The simultaneous CDF coverage follows.  If isotonic projection is first-order inactive, replacing $\widehat F_a$ by $\widetilde F_a$ does not change the limit.
\end{proof}

\begin{proof}[Proof of Lemma \ref{lem:cdf-inversion}]
Fix $a$ and $\tau$.  Since $F_a\le U_a$ pointwise and $F_a\{Q_a(\tau)\}\ge\tau$ at a uniquely defined quantile, we have $U_a\{Q_a(\tau)\}\ge\tau$.  Hence the set $\{y:U_a(y)\ge\tau\}$ contains $Q_a(\tau)$, and therefore $Q_a^L(\tau)\le Q_a(\tau)$.  Next, if $y<Q_a(\tau)$, uniqueness of the quantile gives $F_a(y)<\tau$.  Since $L_a(y)\le F_a(y)$, $L_a(y)<\tau$ for all $y<Q_a(\tau)$.  Thus no $y<Q_a(\tau)$ belongs to $\{y:L_a(y)\ge\tau\}$, and $Q_a^U(\tau)\ge Q_a(\tau)$.
\end{proof}

\begin{proof}[Proof of Theorem \ref{thm:density-free-bands}]
On the event that the simultaneous CDF bands cover both $F_0$ and $F_1$, Lemma \ref{lem:cdf-inversion} gives $Q_a^L(\tau)\le Q_a(\tau)\le Q_a^U(\tau)$ for each arm and every $\tau\in\cT$.  Subtracting the lower endpoint for arm 0 from the upper endpoint for arm 1 and vice versa yields the QTE band.  The probability of this event tends to $1-\alpha$ by Theorem \ref{thm:cdf-bands}.  Because the quantile coverage event contains the CDF coverage event, its limiting lower probability is at least $1-\alpha$.
\end{proof}

\begin{proof}[Proof of Lemma \ref{lem:shortfall-expansion}]
The proof is the same orthogonal-expansion argument as in Theorem \ref{thm:uniform-dr}, with $B_y$, $h_{a,y}$, and $F_a(y)$ replaced by $(t-Y)_+$, $r_{a,t}$, and $\mathcal S_a(t)$.  For clarity, we spell out the two terms.  The shortfall bridge and the dual bridge imply the drift identity
\[
\begin{aligned}
&\E\big[\bar r_t(W,X)+\ind{A=a}\bar q(Z,X)\{(t-Y)_+-\bar r_t(W,X)\}\big]-\mathcal S_a(t)\\
&\quad=-\E\big[\ind{A=a}\{\bar q(Z,X)-q_a(Z,X)\}\{\bar r_t(W,X)-r_{a,t}(W,X)\}\big].
\end{aligned}
\]
Assumption \ref{ass:cvar}(v) gives a uniform product-rate bound of order $o_p(n^{-1/2})$ for this drift.  The empirical-process part is
\[
        (P_n-P)\eta_{a,t}
        +(P_n-P)\{\widehat\eta_{a,t}^{(-k)}-\eta_{a,t}\}
\]
fold by fold, where $\widehat\eta_{a,t}^{(-k)}$ is the estimated shortfall influence-function analogue.  The localized equicontinuity clause in Assumption \ref{ass:cvar}(v) makes the second term $o_p(n^{-1/2})$ uniformly in $t\in\cQ_a^\eta$.  Combining the uniform drift and empirical-process bounds yields the displayed expansion.  The weak convergence statement follows from the fixed-law Donsker condition in Assumption \ref{ass:cvar}(i).
\end{proof}

\begin{proof}[Proof of Theorem \ref{thm:cvar}]
The shortfall identification and the influence function $\eta_{a,t}$ follow from Theorems \ref{thm:identification} and \ref{thm:regularity} with the outcome $(t-Y)_+$ in place of $B_y$, using the shortfall bridge and the residual-moment condition in Assumption \ref{ass:cvar}(ii).  For a fixed law $P$, define
\[
        M_{a,\tau,P}(t)=t-\tau^{-1}\mathcal S_{a,P}(t).
\]
For a continuous distribution with unique $\tau$-quantile $q=Q_a(\tau)$, $M_{a,\tau,P_0}(t)$ is maximized at $q$ because its left and right directional derivatives are $1-F_a(t-)/\tau$ and $1-F_a(t)/\tau$.  Since Assumption \ref{ass:cvar} places $q$ in the compact neighborhood $\cQ_a^\eta$, the same value is obtained by maximizing over $\cQ_a^\eta$:
\[
        C_a(\tau)=\sup_{t\in\cQ_a^\eta}M_{a,\tau,P_0}(t)=q-\tau^{-1}\mathcal S_a(q).
\]

Consider any regular submodel $P_\epsilon$ with score $\dot\ell$ at $\epsilon=0$.  Assumption \ref{ass:cvar}(iii) gives uniform separation of the maximizer, and Assumption \ref{ass:cvar}(iv) gives uniform differentiability of the shortfall criterion on a compact neighborhood containing all maximizers.  The uniform envelope theorem for value functions therefore yields, uniformly in $\tau\in\cT$,
\[
        \left.\frac{d}{d\epsilon}C_{a,P_\epsilon}(\tau)\right|_{\epsilon=0}
        =-\tau^{-1}\left.\frac{d}{d\epsilon}\mathcal S_{a,P_\epsilon}\{Q_a(\tau)\}\right|_{\epsilon=0},
\]
where the optimizing quantile is held fixed.  This avoids assuming that $P\mapsto Q_{a,P}(\tau)$ is pathwise differentiable.  Since $\eta_{a,Q_a(\tau)}$ is the influence function for the fixed-threshold shortfall $\mathcal S_a\{Q_a(\tau)\}$, the influence function for $C_a(\tau)$ is $-\tau^{-1}\eta_{a,Q_a(\tau)}$.

It remains to prove the estimator expansion rather than assume it.  By Lemma \ref{lem:shortfall-expansion}, uniformly over $t\in\cQ_a^\eta$,
\[
        \widehat{\mathcal S}_a(t)=\mathcal S_a(t)+(P_n-P)\eta_{a,t}+o_p(n^{-1/2}).
\]
The fixed-law shortfall influence-function class is Donsker and $t\mapsto\eta_{a,t}$ is $L_2(P)$-continuous by Assumption \ref{ass:cvar}(i), so $(P_n-P)\eta_{a,t}$ is stochastically equicontinuous on $\cQ_a^\eta$.  Combining this equicontinuity with the uniform separation in Assumption \ref{ass:cvar}(iii) gives the uniform argmax localization needed for the sample value problem.  The same envelope argument applied to the sample criterion
\[
        \widehat M_{a,\tau}(t)=t-\tau^{-1}\widehat{\mathcal S}_a(t)
\]
then yields
\[
        \widehat C_a(\tau)-C_a(\tau)
        =-\tau^{-1}\{\widehat{\mathcal S}_a(Q_a(\tau))-\mathcal S_a(Q_a(\tau))\}+o_p(n^{-1/2})
\]
uniformly in $\tau\in\cT$.  Substituting the shortfall expansion at $t=Q_a(\tau)$ gives the asserted influence-function representation.  The contrast statement follows by linearity.
\end{proof}

\section{More details for simulation studies}
\subsection{Component I: calibrated proximal CDF, QTE, and CVaR inference}
Figure~\ref{fig:component1-detailed-diagnostics} summarizes the main finite-sample lessons of Component~I. First, the top-left and top-middle panels show that proximal estimators substantially outperform Naive-AIPW in RMSE for both the target CDF value and the median QTE, with Series-PDR providing the best overall performance. Second, the top-right and bottom-right panels show that Series-PDR delivers inference close to the nominal $95\%$ level, whereas Series-POR undercovers and Series-PIPW is conservative. Third, the bottom-left panel shows that the QTE bias of Series-PDR shrinks rapidly with $n$ at all reported quantile levels, while the bias of Naive-AIPW is essentially constant in $n$, indicating persistent confounding bias under the naive procedure. Fourth, the bottom-middle panel confirms that the same proximal strategy yields stable CVaR treatment-effect estimation. Overall, the figure supports the main theoretical message of the paper: the proposed proximal doubly robust procedure is not uniformly best in every possible finite-sample metric, but it is the most balanced procedure, combining bias removal, competitive RMSE, and reliable inference under latent confounding.

\begin{figure}[!htbp]
\centering
\includegraphics[width=\textwidth]{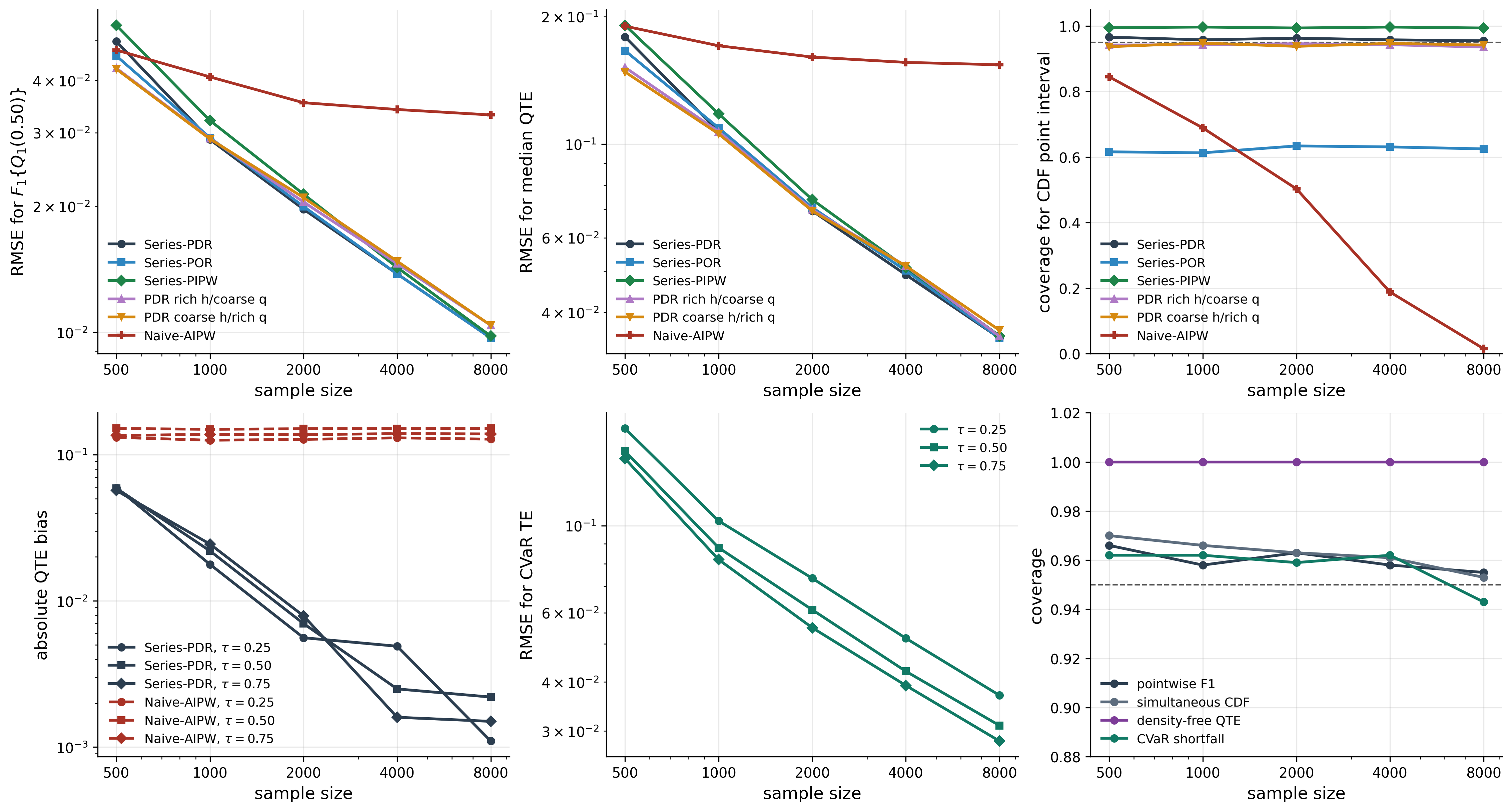}
\caption{
Detailed finite-sample diagnostics for Component~I under the exact finite-rank proximal data-generating process. Results use $\rho=0.75$, $n\in\{500,1000,2000,4000,8000\}$, $R=1000$ Monte Carlo replications, $N_{\mathrm{truth}}=5\times10^6$ truth draws, five-fold cross-fitting, $K_Y=151$, and $M=1000$ multiplier draws.
The six panels summarize complementary aspects of estimation and inference as the sample size increases.
Top-left: RMSE for the estimated counterfactual CDF value $F_1\{Q_1(0.50)\}$.
Top-middle: RMSE for the median quantile treatment effect $\Delta_q(0.50)$.
Top-right: empirical coverage of the nominal $95\%$ pointwise interval for $F_1\{Q_1(0.50)\}$.
Bottom-left: absolute bias of the QTE estimator across $\tau \in \{0.25,0.50,0.75\}$ for Series-PDR and Naive-AIPW.
Bottom-middle: RMSE of the CVaR treatment-effect estimator at $\tau \in \{0.25,0.50,0.75\}$.
Bottom-right: empirical coverage of four nominal $95\%$ inferential targets: pointwise $F_1$, simultaneous CDF band, density-free QTE band, and CVaR shortfall interval; the dashed horizontal line marks the nominal level $0.95$.
Across panels, Series-PDR denotes the proposed proximal doubly robust estimator; Series-POR and Series-PIPW denote proximal outcome-regression and proximal inverse-proxy-weighted estimators; “PDR rich-$h$/coarse-$q$” and “PDR coarse-$h$/rich-$q$” denote one-bridge-coarsened versions used to probe product-rate robustness; and Naive-AIPW denotes an augmented inverse-probability weighted estimator that ignores unmeasured confounding and treats the observed proxies as if standard exchangeability held.
Overall, the figure shows that Series-PDR achieves the strongest balance between bias reduction, RMSE decay, and near-nominal coverage. In contrast, Naive-AIPW exhibits persistent bias and severe undercoverage, while Series-POR tends to undercover for pointwise CDF inference and Series-PIPW is noticeably conservative. The density-free QTE band is valid but conservative, with coverage essentially equal to one throughout. These patterns are consistent with the proximal semiparametric theory developed in the main text.
}
\label{fig:component1-detailed-diagnostics}
\end{figure}

\subsection{Component II: weak-proxy phase transition and singular-system diagnostics}

Panel A of Table~\ref{tab:component2-appendix} gives the full finite-rank weak-proxy experiment. The monotone decline of
$\widehat\kappa_{\min}$ as $\rho$ decreases confirms that the conditional moment system becomes
increasingly ill-conditioned. The associated increase in $\max_a\|\widehat q_a\|_2$ and,
more sharply, in the estimated influence-function standard deviation, is the empirical
counterpart of the adjoint-range regularity boundary: weak proxies require a more variable
dual bridge, which inflates the efficient variance. The CDF and QTE RMSEs remain small
for strong and moderate proxies, but increase sharply when $\rho\le 0.30$, where operator
estimation and regularization bias are no longer negligible. Coverage should therefore be
read together with interval length. At $\rho=0.30$, pointwise CDF coverage drops below the
nominal level, whereas at $\rho=0.20$ coverage partially recovers only because the intervals
and bands are extremely wide. Thus weak proxies do not merely reduce efficiency; they can
move the problem toward a practically nonregular region in which valid inference, if
available, is much less informative.

Panel B reports the complete Gaussian inverse benchmark. The Picard status classifies the
same linear functional into regular, boundary, and nonregular regimes according to the
behavior of $\sum_j\ell_j^2/s_j^2$. In the regular regime the empirical MSE exponent is close
to one. At the boundary the polynomial exponent is one but the logarithmic divergence
slows the finite-sample slope. In the nonregular regime the empirical exponent is 0.64,
matching the theoretical benchmark exponent 0.65. These results provide a direct numerical
check of the spectral phase transition used in Theorem~\ref{thm:minimax}.

\begin{figure}[p]
    \centering
    \includegraphics[width=0.98\textwidth]{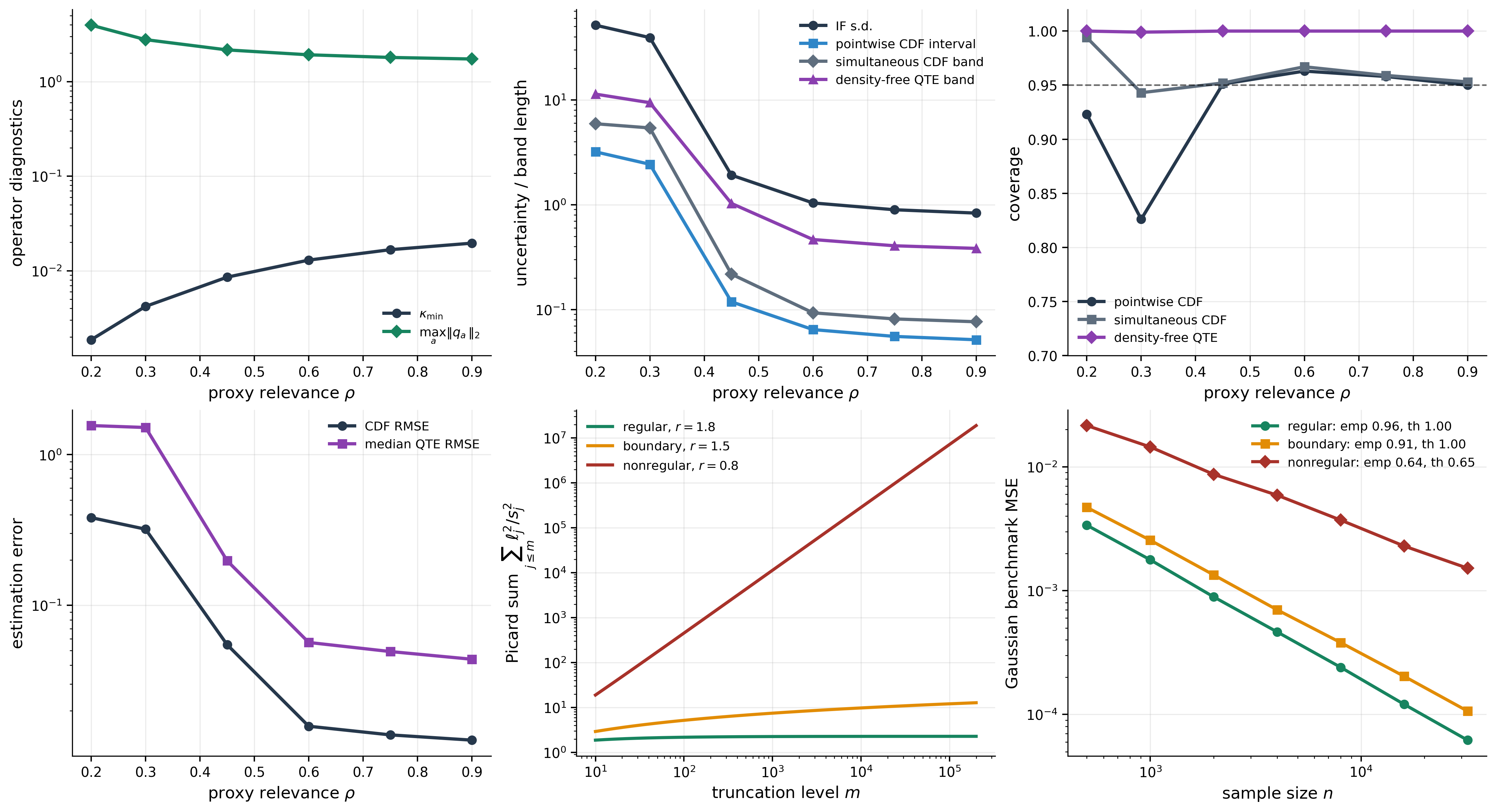}
    \caption{\textbf{Weak-proxy phase transition and Picard-boundary diagnostics.} Panel A uses the exact finite-rank DGP with $n=4000$, $R=1000$, $K_Y=161$, five-fold cross-fitting, and $M=499$ multiplier draws; Panel B uses $R=3000$ Gaussian replications over $n\in\{500,1000,2000,4000,8000,16000,32000\}$.
    The upper-left panel reports the minimum empirical singular value
    $\widehat\kappa_{\min}$ and the maximum empirical dual-bridge norm
    $\max_a\|\widehat q_a\|_2$ as proxy relevance $\rho$ varies. The upper-middle
    panel reports the estimated influence-function standard deviation and the average
    lengths of the pointwise CDF interval, simultaneous CDF band, and density-free QTE
    band. The upper-right panel reports coverage. The lower-left panel reports CDF and
    median-QTE RMSEs. The lower-middle panel reports the Picard partial sums
    $\sum_{j\le m}\ell_j^2/s_j^2$ in the Gaussian inverse benchmark. The lower-right
    panel reports Gaussian benchmark MSEs across sample sizes. The results illustrate
    both parts of the theory: weak proxy relevance inflates the dual bridge and the
    efficient variance in the finite-rank proximal DGP, while the Gaussian benchmark
    verifies the regular, boundary, and nonregular Picard regimes.}
    \label{fig:component2-weak-proxy}
\end{figure}

\begin{landscape}
\begin{table}[p]
\centering
\small
\setlength{\tabcolsep}{2.5pt}
\renewcommand{\arraystretch}{0.95}
\caption{\textbf{Detailed Monte Carlo results for Component II.}
Panel A reports all proxy-relevance values in the exact finite-rank proximal DGP with $n=4000$, $R=1000$, five-fold cross-fitting, $K_Y=161$, and $M=499$. Panel B reports all sample sizes in the Gaussian inverse benchmark with $R=3000$ Gaussian replications.}
\label{tab:component2-appendix}

\begin{tabular}{@{}rrrrrrrrrrrr@{}}
\toprule
\multicolumn{12}{l}{\textit{Panel A: finite-rank proximal DGP}}\\
$\rho$ & $\kappa_{\min}$ & $\max_a\|q_a\|_2$ & IF s.d. & CDF bias & CDF RMSE
& Pt. cov. & Pt. len. & Sim. cov. & Sim. len. & QTE RMSE & QTE len.\\
\midrule
0.90 & 0.020 & 1.73 & 0.831 & $-3.63{\times}10^{-4}$ & 0.013 & 0.950 & 0.052 & 0.953 & 0.076 & 0.044 & 0.383\\
0.75 & 0.017 & 1.80 & 0.894 & $-0.001$ & 0.014 & 0.958 & 0.055 & 0.959 & 0.081 & 0.049 & 0.406\\
0.60 & 0.013 & 1.92 & 1.039 & $-0.003$ & 0.016 & 0.963 & 0.064 & 0.967 & 0.093 & 0.057 & 0.465\\
0.45 & 0.009 & 2.16 & 1.913 & $-0.017$ & 0.055 & 0.951 & 0.119 & 0.952 & 0.218 & 0.197 & 1.028\\
0.30 & 0.004 & 2.78 & 39.175 & $-0.207$ & 0.320 & 0.826 & 2.428 & 0.943 & 5.393 & 1.514 & 9.382\\
0.20 & 0.002 & 3.95 & 51.472 & $-0.213$ & 0.382 & 0.923 & 3.190 & 0.994 & 5.911 & 1.559 & 11.329\\
\midrule
\multicolumn{12}{l}{\textit{Panel B: Gaussian inverse benchmark}}\\
Regime & $\rho_\ell$ & $n$ & $m^\star$ & Bias & Var. & MSE
& Poly. exp. & Emp. exp. & \multicolumn{3}{c}{Picard status}\\
\midrule
regular & 1.80 & 500 & 4 & 0.013 & 0.003 & 0.003 & 1.00 & 0.96 & \multicolumn{3}{c}{finite}\\
boundary & 1.50 & 500 & 4 & 0.024 & 0.004 & 0.005 & 1.00 & 0.91 & \multicolumn{3}{c}{log-divergent}\\
nonregular & 0.80 & 500 & 5 & 0.084 & 0.015 & 0.022 & 0.65 & 0.64 & \multicolumn{3}{c}{poly-divergent}\\
regular & 1.80 & 1000 & 5 & 0.009 & 0.002 & 0.002 & 1.00 & 0.96 & \multicolumn{3}{c}{finite}\\
boundary & 1.50 & 1000 & 5 & 0.016 & 0.002 & 0.003 & 1.00 & 0.91 & \multicolumn{3}{c}{log-divergent}\\
nonregular & 0.80 & 1000 & 6 & 0.067 & 0.010 & 0.014 & 0.65 & 0.64 & \multicolumn{3}{c}{poly-divergent}\\
regular & 1.80 & 2000 & 6 & 0.006 & $8.72{\times}10^{-4}$ & $8.90{\times}10^{-4}$ & 1.00 & 0.96 & \multicolumn{3}{c}{finite}\\
boundary & 1.50 & 2000 & 6 & 0.012 & 0.001 & 0.001 & 1.00 & 0.91 & \multicolumn{3}{c}{log-divergent}\\
nonregular & 0.80 & 2000 & 7 & 0.056 & 0.006 & 0.009 & 0.65 & 0.64 & \multicolumn{3}{c}{poly-divergent}\\
regular & 1.80 & 4000 & 7 & 0.004 & $4.47{\times}10^{-4}$ & $4.64{\times}10^{-4}$ & 1.00 & 0.96 & \multicolumn{3}{c}{finite}\\
boundary & 1.50 & 4000 & 7 & 0.009 & $6.48{\times}10^{-4}$ & $7.00{\times}10^{-4}$ & 1.00 & 0.91 & \multicolumn{3}{c}{log-divergent}\\
nonregular & 0.80 & 4000 & 8 & 0.048 & 0.004 & 0.006 & 0.65 & 0.64 & \multicolumn{3}{c}{poly-divergent}\\
regular & 1.80 & 8000 & 9 & 0.002 & $2.32{\times}10^{-4}$ & $2.40{\times}10^{-4}$ & 1.00 & 0.96 & \multicolumn{3}{c}{finite}\\
boundary & 1.50 & 8000 & 9 & 0.006 & $3.54{\times}10^{-4}$ & $3.81{\times}10^{-4}$ & 1.00 & 0.91 & \multicolumn{3}{c}{log-divergent}\\
nonregular & 0.80 & 8000 & 10 & 0.036 & 0.002 & 0.004 & 0.65 & 0.64 & \multicolumn{3}{c}{poly-divergent}\\
regular & 1.80 & 16000 & 10 & 0.002 & $1.17{\times}10^{-4}$ & $1.21{\times}10^{-4}$ & 1.00 & 0.96 & \multicolumn{3}{c}{finite}\\
boundary & 1.50 & 16000 & 11 & 0.004 & $1.89{\times}10^{-4}$ & $2.03{\times}10^{-4}$ & 1.00 & 0.91 & \multicolumn{3}{c}{log-divergent}\\
nonregular & 0.80 & 16000 & 12 & 0.029 & 0.002 & 0.002 & 0.65 & 0.64 & \multicolumn{3}{c}{poly-divergent}\\
regular & 1.80 & 32000 & 12 & 0.001 & $6.00{\times}10^{-5}$ & $6.20{\times}10^{-5}$ & 1.00 & 0.96 & \multicolumn{3}{c}{finite}\\
boundary & 1.50 & 32000 & 13 & 0.003 & $9.94{\times}10^{-5}$ & $1.06{\times}10^{-4}$ & 1.00 & 0.91 & \multicolumn{3}{c}{log-divergent}\\
nonregular & 0.80 & 32000 & 14 & 0.024 & $9.35{\times}10^{-4}$ & 0.002 & 0.65 & 0.64 & \multicolumn{3}{c}{poly-divergent}\\
\bottomrule
\end{tabular}
\end{table}
\end{landscape}

\subsection{Component III: simultaneous CDF bands, density-free quantile bands, shape projection, and CVaR}

Figure~\ref{fig:component3-density-stress} replaces the summary table and makes the main
message of Component III visually transparent. The simultaneous CDF process is itself well
behaved: CDF-band coverage stays close to the nominal level and the average CDF-band
length decreases sharply with sample size. Thus the proximal CDF multiplier band behaves
as predicted by Theorem~\ref{thm:cdf-bands}.

The main comparison concerns simultaneous QTE inference. The proposed density-free QTE
bands, obtained by inversion of the simultaneous CDF bands, have coverage equal to one
throughout. This is conservative, as expected from
Theorem~\ref{thm:density-free-bands}, but the conservatism is informative rather than
vacuous because the band length decreases steadily with $n$. By contrast, the
estimated-density Bonferroni delta bands are shorter, but their simultaneous coverage is
substantially below the nominal level and deteriorates as $n$ increases. The lower-left panel
shows that the density-ratio error $|\widehat f/f-1|$ remains non-negligible across sample
sizes, which explains why the density-based delta procedure undercovers in this
density-stress design. Hence the figure highlights the intended advantage of the proposed
method: it avoids counterfactual density estimation and yields stable simultaneous QTE
inference.

The remaining panels support the other parts of the theory. The shortfall-CVaR intervals are
close to nominal coverage and therefore support Theorem~\ref{thm:cvar}. The isotonic
projection ratio never exceeds one and the violation rate is zero in every replication,
verifying the nonexpansiveness claim in Proposition~\ref{prop:isotonic}.

\begin{figure}[!htbp]
    \centering
    \includegraphics[width=0.98\textwidth]{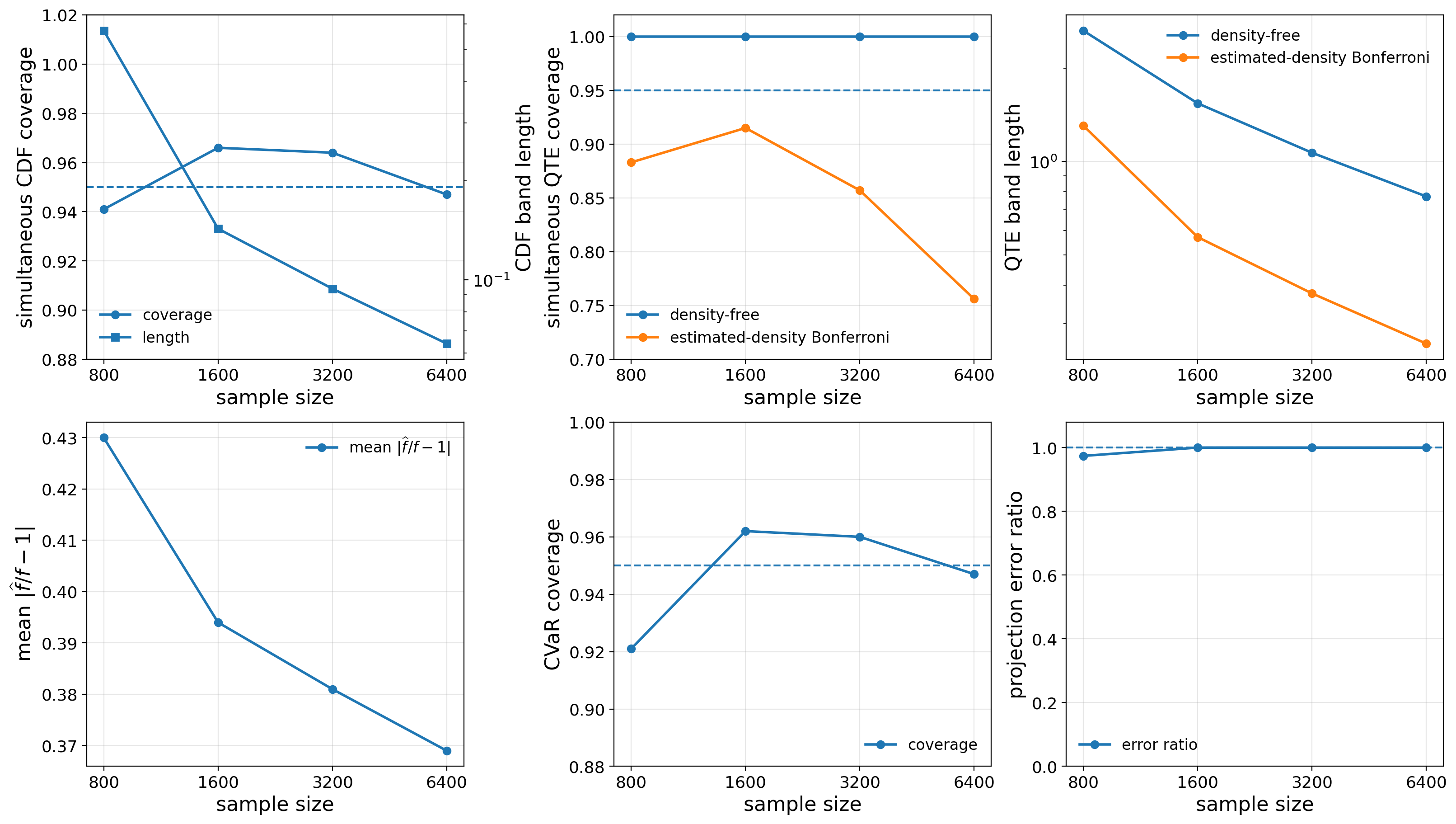}
    \caption{\textbf{Process inference under density-estimation stress.}
    The figure summarizes Component III with $n\in\{800,1600,3200,6400\}$, $R=1000$ Monte Carlo replications, five-fold cross-fitting, $K_Y=181$, and $M=499$ multiplier draws. It summarizes the finite-sample behavior of the proposed process-inference
    procedures under the density-stress design of Component III. The upper-left panel reports
    simultaneous CDF-band coverage and average band length; the dashed horizontal line marks
    the nominal $95\%$ level. The upper-middle and upper-right panels compare the proposed
    density-free simultaneous QTE bands with estimated-density Bonferroni delta bands in
    terms of simultaneous coverage and average band length across
    $\mathcal T=\{0.02,0.05,0.10,0.25,0.50,0.75,0.90,0.95,0.98\}$. The lower-left panel reports
    the mean absolute density-ratio error
    $|\widehat f/f-1|$, averaged across arms and quantile levels, for the estimated-density
    benchmark. The lower-middle panel reports coverage of the shortfall-CVaR intervals,
    averaged over $\tau\in\{0.25,0.50,0.75\}$. The lower-right panel reports the
    projected-to-unprojected weighted squared-error ratio for the isotonic projection, with the
    dashed horizontal line at one corresponding to the nonexpansiveness boundary in
    Proposition~\ref{prop:isotonic}. The results show that the proximal CDF band is well
    calibrated, that the proposed density-free QTE bands remain conservatively valid, and that
    estimated-density delta bands can undercover despite being shorter, because they depend on
    an additional counterfactual density nuisance.}
    \label{fig:component3-density-stress}
\end{figure}

\subsection{Component IV: machine-learning feature maps and nonlinear proxy measurements}

Table~\ref{tab:component4-appendix} reports the sample-size diagnostics for Component IV.
The Naive ML-IPW baseline exhibits persistent bias as \(n\) grows, and its CDF coverage
collapses from 0.900 at \(n=800\) to 0.006 at \(n=3200\). This indicates convergence to the
wrong target under latent confounding. In contrast, all proximal PDR implementations have
substantially smaller CDF and QTE errors and maintain valid density-free QTE coverage. The
linear proximal sieve is already strong because the latent proxy structure is finite-dimensional,
but learned feature maps improve QTE performance in the nonlinear observed-proxy design.
The ML-ensemble PDR has the smallest QTE-sim RMSE and the shortest QTE-sim band at
\(n=3200\).

Table~\ref{tab:component4-appendix-qte} decomposes the \(n=3200\) QTE-sim result by quantile
level. The largest difficulty is the median QTE, where the Naive ML-IPW estimator has RMSE
1.348 and coverage 0.254, while all proximal PDR estimators maintain coverage equal to one and
substantially lower RMSE. The ensemble combines the stable learned feature maps and gives the
best overall QTE tradeoff: its RMSEs are 0.088, 0.271, and 0.102 at
\(\tau=0.25,0.50,0.75\), respectively, with shorter bands than the single learned-feature
implementations. These results support the implementation message of the paper: proximal
distributional inference can be combined with modern learned feature maps while preserving the
same cross-fitted convex ridge bridge solve.

\begin{landscape}
\begin{table}[p]
\centering
\small
\setlength{\tabcolsep}{3.0pt}
\renewcommand{\arraystretch}{1.05}
\caption{\textbf{Detailed Component IV learned-feature results: sample-size diagnostics.}
The table reports the main CDF and simultaneous QTE diagnostics for
\(n\in\{800,1600,3200\}\), using \(R=500\) Monte Carlo replications,
\(N_{\mathrm{truth}}=10^6\), three-fold cross-fitting, \(K_Y=121\), and
\(M=299\) multiplier draws. CDF refers to \(F_1\{Q_1(0.50)\}\). QTE-sim refers to
simultaneous density-free QTE inference over \(\tau\in\{0.25,0.50,0.75\}\).}
\label{tab:component4-appendix}
\begin{tabular}{@{}rlrrrrrrrr@{}}
\toprule
\(n\) & Method
& CDF Bias & CDF RMSE & CDF Cov. & CDF Len.
& QTE-sim Bias & QTE-sim RMSE & QTE-sim Cov. & QTE-sim Len.\\
\midrule
800  & Naive ML-IPW    & -0.056 & 0.062 & 0.900 & 0.193 & 0.616 & 0.635 & 1.000 & 3.593\\
800  & Linear PDR      & -0.003 & 0.032 & 0.938 & 0.120 & 0.017 & 0.253 & 1.000 & 2.687\\
800  & Spline PDR      &  0.002 & 0.038 & 0.930 & 0.136 & -0.023 & 0.301 & 0.998 & 3.145\\
800  & RF-leaf PDR     & -0.038 & 0.054 & 0.770 & 0.138 & 0.264 & 0.392 & 1.000 & 3.471\\
800  & GB-leaf PDR     & -0.020 & 0.041 & 0.870 & 0.130 & 0.139 & 0.312 & 1.000 & 3.270\\
800  & ReLU PDR        &  0.001 & 0.036 & 0.924 & 0.130 & -0.004 & 0.295 & 1.000 & 3.062\\
800  & ML-ensemble PDR & -0.014 & 0.031 & 0.912 & 0.106 & 0.107 & 0.237 & 1.000 & 2.446\\
\addlinespace[2pt]
1600 & Naive ML-IPW    & -0.070 & 0.071 & 0.374 & 0.128 & 0.663 & 0.669 & 0.984 & 2.358\\
1600 & Linear PDR      & -0.001 & 0.021 & 0.950 & 0.084 & 0.026 & 0.173 & 1.000 & 1.954\\
1600 & Spline PDR      & \(5.78{\times}10^{-4}\) & 0.023 & 0.950 & 0.093 & 0.005 & 0.194 & 1.000 & 2.204\\
1600 & RF-leaf PDR     & -0.012 & 0.029 & 0.904 & 0.099 & 0.099 & 0.254 & 1.000 & 2.419\\
1600 & GB-leaf PDR     & -0.005 & 0.023 & 0.948 & 0.090 & 0.056 & 0.198 & 1.000 & 2.158\\
1600 & ReLU PDR        & \(-3.93{\times}10^{-4}\) & 0.023 & 0.930 & 0.091 & 0.026 & 0.192 & 1.000 & 2.190\\
1600 & ML-ensemble PDR & -0.004 & 0.020 & 0.952 & 0.078 & 0.048 & 0.166 & 1.000 & 1.858\\
\addlinespace[2pt]
3200 & Naive ML-IPW    & -0.074 & 0.075 & 0.006 & 0.088 & 0.655 & 0.657 & 0.254 & 1.567\\
3200 & Linear PDR      & -0.002 & 0.015 & 0.938 & 0.060 & 0.032 & 0.121 & 1.000 & 1.419\\
3200 & Spline PDR      & \(-4.59{\times}10^{-4}\) & 0.018 & 0.930 & 0.064 & 0.019 & 0.137 & 1.000 & 1.560\\
3200 & RF-leaf PDR     & -0.003 & 0.018 & 0.934 & 0.067 & 0.051 & 0.158 & 1.000 & 1.625\\
3200 & GB-leaf PDR     & -0.005 & 0.017 & 0.942 & 0.062 & 0.066 & 0.150 & 1.000 & 1.472\\
3200 & ReLU PDR        & -0.002 & 0.016 & 0.932 & 0.061 & 0.042 & 0.136 & 1.000 & 1.520\\
3200 & ML-ensemble PDR & -0.003 & 0.015 & 0.940 & 0.056 & 0.045 & 0.120 & 1.000 & 1.360\\
\bottomrule
\end{tabular}

\vspace{0.4em}
\begin{minipage}{0.96\linewidth}
\footnotesize
\emph{Notes.} CDF cov. and QTE-sim cov. denote empirical coverage. QTE-sim length is the
average density-free simultaneous QTE band length. The ML-ensemble row averages the
cross-fitted PDR score processes from spline, RF-leaf, GB-leaf, and ReLU feature maps before
constructing the CDF and QTE bands. Quantile-level QTE details at \(n=3200\) are reported in
Table~\ref{tab:component4-appendix-qte}.
\end{minipage}
\end{table}
\end{landscape}

\begin{landscape}
\begin{table}[p]
\centering
\caption{\textbf{Detailed Component IV learned-feature results: quantile-level diagnostics at \(n=3200\).}
The table reports QTE bias, RMSE, density-free band coverage, and average band length at
\(\tau\in\{0.25,0.50,0.75\}\). These rows decompose the QTE-sim summary in
Table~\ref{tab:component4-appendix}.}
\label{tab:component4-appendix-qte}
\begin{tabular}{@{}lrrrrrrrrrrrr@{}}
\toprule
& \multicolumn{4}{c}{\(\tau=0.25\)}
& \multicolumn{4}{c}{\(\tau=0.50\)}
& \multicolumn{4}{c}{\(\tau=0.75\)}\\
\cmidrule(lr){2-5}\cmidrule(lr){6-9}\cmidrule(l){10-13}
Method
& Bias & RMSE & Cov. & Len.
& Bias & RMSE & Cov. & Len.
& Bias & RMSE & Cov. & Len.\\
\midrule
Naive ML-IPW
& 0.303 & 0.311 & 1.000 & 1.221
& 1.345 & 1.348 & 0.254 & 2.044
& 0.317 & 0.329 & 1.000 & 1.435\\
Linear PDR
& 0.003 & 0.090 & 1.000 & 0.889
& 0.045 & 0.268 & 1.000 & 2.311
& 0.049 & 0.118 & 1.000 & 1.055\\
Spline PDR
& 0.006 & 0.107 & 1.000 & 0.998
& 0.024 & 0.311 & 1.000 & 2.509
& 0.027 & 0.123 & 1.000 & 1.174\\
RF-leaf PDR
& 0.028 & 0.111 & 1.000 & 0.999
& 0.107 & 0.359 & 1.000 & 2.680
& 0.017 & 0.127 & 1.000 & 1.196\\
GB-leaf PDR
& 0.038 & 0.109 & 1.000 & 0.899
& 0.140 & 0.343 & 1.000 & 2.443
& 0.021 & 0.108 & 1.000 & 1.073\\
ReLU PDR
& 0.022 & 0.098 & 1.000 & 0.931
& 0.067 & 0.304 & 1.000 & 2.521
& 0.037 & 0.122 & 1.000 & 1.109\\
ML-ensemble PDR
& 0.023 & 0.088 & 1.000 & 0.827
& 0.086 & 0.271 & 1.000 & 2.271
& 0.024 & 0.102 & 1.000 & 0.983\\
\bottomrule
\end{tabular}

\vspace{0.4em}
\begin{minipage}{0.96\linewidth}
\footnotesize
\emph{Notes.} All entries are computed at \(n=3200\). Coverage and length refer to the
density-free QTE bands obtained by CDF-band inversion. Naive ML-IPW is the ordinary
observed-confounding ML baseline and does not use proximal bridge correction.
\end{minipage}
\end{table}
\end{landscape}

\subsection{Additional details for the RHC real-data illustration}
\label{app:rhc-realdata}

The RHC data were downloaded from the Vanderbilt Biostatistics public data repository and
correspond to the SUPPORT right-heart-catheterization observational study
\citep{connors1996effectiveness,hirano2001estimation,harrell2024hbiostat}. The treatment
indicator is \(\texttt{swang1}\), coded as \(A=1\) for patients receiving RHC in the first 24
hours and \(A=0\) otherwise. The real-valued outcome is hospital length of stay on the log scale.
Specifically, hospital days are computed as \(\texttt{dschdte}-\texttt{sadmdte}\), with
\(\texttt{dthdte}-\texttt{sadmdte}\) used as a fallback when the discharge date is missing. We
retain positive hospital-day values, Winsorize administrative extremes at the empirical
99.5th percentile, and set \(Y=\log(1+\mathrm{hospital\ days})\).

The treatment-inducing proxy vector is
\[
        Z=(\texttt{pafi1},\texttt{paco21}),
\]
where \(\texttt{pafi1}\) is the PaO2/FIO2 ratio and \(\texttt{paco21}\) is PaCO2. The
outcome-inducing proxy vector is
\[
        W=(\texttt{ph1},\texttt{hema1}),
\]
where \(\texttt{ph1}\) is arterial pH and \(\texttt{hema1}\) is hematocrit. These variables are
standardized before entering the bridge basis. The remaining baseline variables are used as
ordinary observed covariates \(X\). This set includes demographics, diagnosis categories,
baseline severity scores, vital signs, laboratory measurements, and comorbidity indicators. Numeric
covariates are median-imputed and standardized; categorical covariates are mode-imputed and
one-hot encoded. To avoid an unstable high-dimensional bridge system in this illustration, we
screen the preprocessed \(X\)-features by the sum of their marginal associations with treatment
and outcome and retain the top five features. The bridge bases \(B_W(W,X)\) and \(B_Z(Z,X)\)
then include an intercept, the selected \(X\)-features, proxy main effects, proxy quadratic terms,
the proxy cross-product, and interactions between each selected \(X\)-feature and the two proxy
coordinates. The same basis construction is used in every fold.

All proximal estimators are five-fold cross-fitted. In each training fold, the primal and dual
bridge coefficients are computed by the ridge moment equations in Section~4 with ridge
\(\lambda=0.1n^{-1/2}\). The Naive ML-IPW baseline is also cross-fitted; it estimates the treatment
propensity by gradient boosting using the full preprocessed \((X,Z,W)\) vector and clips propensities
to \([0.03,0.97]\). The CDF grid uses 61 empirical quantiles of \(Y\) between the 2nd and 98th
percentiles. QTEs are reported on the grid
\[
        \tau\in\{0.10,0.15,\ldots,0.90\}.
\]
The PDR simultaneous QTE bands are obtained by inverting the two-arm multiplier CDF band with
\(M=1000\) Rademacher multiplier draws.

Table~\ref{tab:rhc-realdata-appendix} reports the full QTE grid. The pattern is consistent across
the distribution. Naive ML-IPW is positive at every reported quantile and increases toward the
upper tail, suggesting a large rightward shift in hospital length of stay under RHC under ordinary
observed-confounding adjustment. The proximal estimates are substantially attenuated. In
particular, Proximal PDR is close to zero over most of the distribution, and its density-free
simultaneous bands contain zero at every reported quantile. This is not a proof that the proximal
proxy assumptions are correct; those assumptions are scientific restrictions about how the measured
physiologic variables relate to latent severity. The table instead shows that the proposed proximal
distributional procedure is implementable on a standard open observational benchmark and can
yield materially different, more cautious distributional conclusions than ordinary ML adjustment.

\begin{table}[!htbp]
\centering
\setlength{\tabcolsep}{3pt}
\renewcommand{\arraystretch}{0.95}
\caption{\textbf{Detailed RHC real-data QTE grid.}
The table reports QTE estimates on the \(\log(1+\mathrm{hospital\ days})\) scale for
\(\tau\in\{0.10,0.15,\ldots,0.90\}\). PDR bands are density-free simultaneous bands obtained by
inverting the two-arm multiplier CDF band with \(M=1000\) Rademacher draws.}
\label{tab:rhc-realdata-appendix}
\begin{tabular}{@{}rrrrrrr@{}}
\toprule
\(\tau\) & Naive ML-IPW & Proximal POR & Proximal PIPW & Proximal PDR & PDR lower & PDR upper\\
\midrule
0.10 & 0.113 & \(-7.30{\times}10^{-4}\) & -0.039 & -0.031 & -0.428 & 0.363\\
0.15 & 0.152 & 0.077 & 0.026 & -0.016 & -0.346 & 0.331\\
0.20 & 0.203 & 0.091 & 0.051 & 0.030 & -0.225 & 0.296\\
0.25 & 0.218 & 0.117 & 0.074 & 0.065 & -0.201 & 0.286\\
0.30 & 0.238 & 0.122 & 0.082 & 0.051 & -0.141 & 0.258\\
0.35 & 0.263 & 0.135 & 0.100 & 0.063 & -0.118 & 0.245\\
0.40 & 0.255 & 0.148 & 0.113 & 0.071 & -0.104 & 0.236\\
0.45 & 0.273 & 0.140 & 0.107 & 0.068 & -0.114 & 0.239\\
0.50 & 0.278 & 0.156 & 0.114 & 0.058 & -0.095 & 0.248\\
0.55 & 0.315 & 0.157 & 0.123 & 0.088 & -0.102 & 0.234\\
0.60 & 0.333 & 0.161 & 0.125 & 0.071 & -0.104 & 0.254\\
0.65 & 0.392 & 0.195 & 0.154 & 0.076 & -0.128 & 0.294\\
0.70 & 0.441 & 0.200 & 0.151 & 0.091 & -0.150 & 0.320\\
0.75 & 0.434 & 0.244 & 0.191 & 0.087 & -0.152 & 0.382\\
0.80 & 0.471 & 0.249 & 0.191 & 0.149 & -0.205 & 0.393\\
0.85 & 0.598 & 0.199 & 0.128 & 0.064 & -0.218 & 0.382\\
0.90 & 0.618 & 0.280 & 0.143 & -0.015 & -0.374 & 0.469\\
\bottomrule
\end{tabular}
\end{table}

\end{document}